\theoremstyle{plain}
\newtheorem{thm}{Theorem}[section]
\newtheorem{lem}[thm]{Lemma}
\newtheorem{cor}[thm]{Corollary}
\newtheorem{cl}[thm]{Claim}
\newtheorem{conj}[thm]{Conjecture}
\theoremstyle{definition}
\newtheorem{rem}[thm]{Remark}
\def\final{0}  
\def\iflong{\iffalse}
\newcommand{\knote}[1]{{\color{red}[{\tiny \textbf{Kristóf:} \bf #1}]\marginpar{\color{red}*}}}
\newcommand{\ynote}[1]{{\color{blue}[{\tiny \textbf{Yutaro:} \bf #1}]\marginpar{\color{blue}*}}}
\newcommand{\tnote}[1]{{\color{OliveGreen}[{\tiny \textbf{Tamás:} \bf #1}]\marginpar{\color{OliveGreen}*}}}
\newcommand{\knote}[1]{}
\newcommand{\ynote}[1]{}
\newcommand{\tnote}[1]{}
\DeclareMathOperator\dist{dist}
\newcommand{\cB}{\mathcal{B}}
\newcommand{\cI}{\mathcal{I}}
\newcommand{\cJ}{\mathcal{J}}
\newcommand{\cC}{\mathcal{C}}
\DeclareRobustCommand{\rchi}{{\mathpalette\irchi\relax}}
\newcommand{\irchi}[2]{\raisebox{\depth}{$#1\chi$}}
\newcommand{\overbar}[1]{\mkern 3mu\overline{\mkern-3mu#1\mkern-3mu}\mkern 3mu}
\newcommand{\dbloverline}[1]{\overbar{\dbl@overline{#1}}}
\newcommand{\dbl@overline}[1]{\mathpalette\dbl@@overline{#1}}
\newcommand{\dbl@@overline}[2]{%
  \begingroup
  \sbox\z@{$\m@th#1\overbar{#2}$}%
  \ht\z@=\dimexpr\ht\z@-1.5\dbl@adjust{#1}\relax
  \box\z@
  \ifx#1\scriptstyle\kern-\scriptspace\else
  \ifx#1\scriptscriptstyle\kern-\scriptspace\fi\fi
  \endgroup
}
\newcommand{\dbl@adjust}[1]{%
  \fontdimen8
  \ifx#1\displaystyle\textfont\else
  \ifx#1\textstyle\textfont\else
  \ifx#1\scriptstyle\scriptfont\else
  \scriptscriptfont\fi\fi\fi 3
}
\newcommand\wtm{R}
\newcommand\wtmm{\overbar{R}}
\title{List coloring of two matroids\\ through reduction to partition matroids}
\author{
Kristóf Bérczi\thanks{MTA-ELTE Egerváry Research Group, Department of Operations Research, Eötvös Loránd University, Budapest, Hungary. Email: \texttt{berkri@cs.elte.hu}.}
\and
Tamás Schwarcz\thanks{Department of Operations Research, Eötvös Loránd University, Budapest, Hungary. Email: \texttt{schtomi97@gmail.com}.}
\and
Yutaro Yamaguchi\thanks{Osaka University, Osaka, Japan. Email: \texttt{yutaro\_yamaguchi@ist.osaka-u.ac.jp}.}
}
\begin{document}
\maketitle

\begin{abstract}
In the list coloring problem for two matroids, we are given matroids $M_1=(S,\cI_1)$ and $M_2=(S,\cI_2)$ on the same ground set $S$, and the goal is to determine the smallest number $k$ such that given arbitrary lists $L_s$ of $k$ colors for $s\in S$, it is possible to choose a color from each list so that every monochromatic set is independent in both $M_1$ and $M_2$. When both $M_1$ and $M_2$ are partition matroids, Galvin's celebrated list coloring theorem for bipartite graphs gives the answer. However, not much is known about the general case. One of the main open questions is to decide if there exists a constant $c$ such that if the coloring number is $k$ (i.e., the ground set can be partitioned into $k$ common independent sets), then the list coloring number is at most $c\cdot k$. In the present paper, we consider matroid classes that appear naturally in combinatorial and graph optimization problems, namely graphic matroids, paving matroids and gammoids. We show that if both matroids are from these fundamental classes, then the list coloring number is at most twice the coloring number. 

The proof is based on a new approach that reduces a matroid to a partition matroid without increasing its coloring number too much, and might be of independent combinatorial interest. In particular, we show that if $M=(S,\cI)$ is a matroid in which $S$ can be partitioned into $k$ independent sets, then there exists a partition matroid $N=(S,\cJ)$ with $\cJ\subseteq\cI$ in which $S$ can be partitioned into (A) $k$ independent sets if $M$ is a transversal matroid, (B) $2k-1$ independent sets if $M$ is a graphic matroid, (C) $\lceil kr/(r-1)\rceil$ independent sets if $M$ is a paving matroid of rank $r$, and (D) $2k-2$ independent sets if $M$ is a gammoid. It should be emphasized that in cases (A), (B) and (D) the rank of $N$ is the same as that of $M$. We extend our results to a broader family of matroids by showing that the existence of a matroid $N$ with $\rchi(N)\leq 2\rchi(M)$ implies the existence of a matroid $N'$ with $\rchi(N')\leq 2\rchi(M')$ for every truncation $M'$ of $M$. We also show how the reduction technique can be extended to strongly base orderable matroids that might serve as a useful tool in problems related to packing common bases of two matroids.

\medskip

\noindent \textbf{Keywords:} List coloring, Matroids, Packing common bases, Strongly base orderable matroids 
\end{abstract}

\section{Introduction} \label{sec:intro}

Given a graph $G=(V,E)$, a \textbf{proper edge coloring} of $G$ is an assignment of colors to the edges so that no two adjacent edges have the same color. The \textbf{edge coloring number} is the smallest integer $k$ for which $G$ has a proper edge coloring by $k$ colors. The classical result of K\H{o}nig \cite{konig1916graphen} states that the edge coloring number of bipartite graphs is equal to its maximum degree.

Assume now that a list $L_e$ of colors is given for each edge $e\in E$. A \textbf{proper list edge coloring} of $G$ is a proper edge coloring such that every edge $e$ receives a color from its list $L_e$. The \textbf{list edge coloring number} is the smallest integer $k$ for which $G$ has a proper list edge coloring whenever $|L_e|\geq k$ for every $e\in E$. The List Coloring Conjecture \cite{jensen2011graph,vizing1965chromatic} states that for any graph, the list edge coloring number equals the edge coloring number. The conjecture is widely open, and only partial results are known. The probably most famous one is the celebrated result of Galvin \cite{galvin1995list} who showed that the conjecture holds for bipartite multigraphs. 

\begin{thm}[Galvin]\label{thm:galvin}
The list edge coloring number of a bipartite graph is equal to its edge coloring number, that is, to its maximum degree. 
\end{thm}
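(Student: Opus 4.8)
The plan is to derive Galvin's theorem from König's theorem via the \emph{kernel method}. The easy inequality is immediate: setting every list equal to a fixed palette $\{1,\dots,k\}$, a proper list edge coloring is an ordinary proper edge coloring with $k$ colors, so the list edge coloring number is at least the edge coloring number, which by König's theorem equals the maximum degree $\Delta$. It therefore suffices to show that whenever $|L_e|\ge\Delta$ for every edge $e$, the graph $G$ admits a proper list edge coloring; equivalently, that the line graph of $G$ has a proper vertex coloring in which each vertex $e$ receives a color from $L_e$.

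For this I would invoke the Bondy--Boppana--Siegel lemma underlying the kernel method: if a graph $H$ has an orientation $D$ that is \emph{kernel-perfect} --- meaning every induced subdigraph has a kernel, i.e.\ an independent set dominating all remaining vertices through in-arcs --- and $|L_v|\ge d^+_D(v)+1$ for every vertex $v$, then $H$ is colorable from the lists $L_v$. Its proof is a short induction that I would include for completeness: pick a color $c$, let $A$ be the set of vertices whose list contains $c$, take a kernel $K$ of $D[A]$, assign $c$ to every vertex of $K$, delete $K$ and remove $c$ from the remaining lists; each surviving vertex of $A$ had an out-neighbor in $K$, so its out-degree drops by at least one and the hypothesis is maintained.

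It remains to build a suitable orientation of the line graph of $G$. Using König's theorem, fix a proper edge coloring $\varphi\colon E\to\{1,\dots,\Delta\}$ and a bipartition $(X,Y)$ of $G$. For adjacent edges $e,f$ of $G$, orient the corresponding edge of the line graph from the one of smaller $\varphi$-value to the one of larger $\varphi$-value if $e$ and $f$ share a vertex in $X$, and the other way around if they share a vertex in $Y$. For an edge $e=xy$ with $\varphi(e)=i$, its out-neighbors are the edges at $x$ of color greater than $i$ (at most $\Delta-i$ of them) together with the edges at $y$ of color smaller than $i$ (at most $i-1$), so $d^+(e)\le(\Delta-i)+(i-1)=\Delta-1$, exactly matching the hypothesis $|L_e|\ge\Delta=d^+(e)+1$.

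The crux of the argument --- the step I expect to be the main obstacle --- is verifying that this orientation is kernel-perfect. Here a kernel is a matching $K$ of $G$ such that every edge $f=xy\notin K$ either shares its endpoint in $X$ with a $K$-edge of larger color or shares its endpoint in $Y$ with a $K$-edge of smaller color; that is, $K$ is precisely a \emph{stable matching} for the preference system in which each vertex of $X$ ranks its incident edges in decreasing order of color and each vertex of $Y$ in increasing order of color. Such a matching exists by the Gale--Shapley theorem. Moreover, every induced subdigraph of this orientation is itself the oriented line graph produced by the same rule from $G$ restricted to a subset of its edges, which is again bipartite and inherits both the coloring and the preferences; hence it too has a kernel. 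This establishes kernel-perfectness and completes the proof.
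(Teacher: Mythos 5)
The paper states Galvin's theorem as a cited result (Theorem~\ref{thm:galvin}) and gives no proof of it, so there is nothing in the source to compare against line by line. Your argument is a correct and complete rendition of the standard kernel-method proof, and it is in fact the route Galvin himself took: the lower bound via K\H{o}nig, the Bondy--Boppana--Siegel kernel lemma (with a correct one-line induction), an orientation of the line graph from a fixed proper edge coloring $\varphi$ with $d^+(e)\le(\Delta-\varphi(e))+(\varphi(e)-1)=\Delta-1$, and Maffray's identification of kernels of such orientations with stable matchings, which exist by Gale--Shapley (for incomplete preference lists). Your observation that the equivalence between ``$f$ is undominated'' and ``$f$ is a blocking pair'' is exact because $\varphi$ is proper (so no ties occur between adjacent edges) is the right way to close that step, and heredity to induced subdigraphs is correctly dispatched by noting they are the oriented line graphs of bipartite subgraphs.

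One caveat is worth flagging, because it matters for how the theorem is used later in the paper. The surrounding text attributes the result to Galvin for bipartite \emph{multigraphs}, and the application to two partition matroids genuinely produces a multigraph: two ground-set elements lying in the same class of $N_1$ and also in the same class of $N_2$ become parallel edges. Your orientation rule is not well defined on a parallel pair $e,f$: they share both an $X$-endpoint and a $Y$-endpoint, and the two clauses orient the arc between them in opposite directions. This is repairable (one consistent tie-breaking rule on parallel classes preserves both the out-degree bound and the domination argument, which is essentially what Galvin does), but as written your proof covers only simple bipartite graphs, whereas the paper needs the multigraph statement.
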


Matchings in bipartite graphs are forming the common independent sets of two matroids, hence one might consider matroidal generalizations of list coloring. For a loopless matroid\footnote{The ground set of a matroid containing a loop cannot be decomposed into independent sets. Therefore every matroid considered in the paper is assumed to be loopless without explicitly mentioning this. Nevertheless, parallel elements might exist.} $M=(S,r)$, let $\rchi(M)$ denote the \textbf{coloring number} of $M$, that is, the minimum number of independent sets into which the ground set can be decomposed in $M$. We call a matroid \textbf{$k$-colorable} if $\rchi(M)\leq k$. If a list $L_s$ of colors is given for each element $s\in S$, then a \textbf{proper list edge coloring} of $M$ is a coloring such that every element $s$ receives a color from its list $L_s$. The \textbf{list coloring number} is the smallest integer $k$ for which $M$ has a proper list coloring whenever $|L_s|\geq k$ for every $s\in S$. 

Analogously, the \textbf{coloring number} $\rchi(M_1,M_2)$ of the intersection of two matroids $M_1$ and $M_2$ on the same ground set $S$ is the minimum number of common independent sets needed to cover $S$. If a list $L_s$ of colors is given for each element $s\in S$, then a \textbf{proper list edge coloring} of the intersection of $M_1$ and $M_2$ is a coloring such that every element $s$ receives a color from its list $L_s$. The \textbf{list coloring number} $\rchi_\ell(M_1,M_2)$ is the smallest integer $k$ for which the intersection of $M_1$ and $M_2$ has a proper list coloring whenever $|L_s|\geq k$ for every $s\in S$. Hence Theorem~\ref{thm:galvin} states that if both $M_1$ and $M_2$ are partition matroids then $\rchi_\ell(M_1,M_2)=\max\{\rchi(M_1),\rchi(M_2)\}$.

\paragraph{Previous work}
Seymour observed \cite{seymour1998note} that the list coloring theorem holds for a single matroid.

\begin{thm}[Seymour] \label{thm:seymour}
The list coloring number of a matroid is equal to its coloring number.
\end{thm}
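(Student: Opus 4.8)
\noindent\emph{Proof proposal.}
The easy inequality $\rchi(M)\le(\text{list coloring number})$ is immediate: if every list equals $\{1,\dots,\ell\}$, then a proper list coloring is exactly a partition of $S$ into at most $\ell$ independent sets. So the plan is to prove the reverse inequality by reducing to the matroid union theorem. Let $k=\rchi(M)$ and suppose lists $L_s$ with $|L_s|\ge k$ are given; since only finitely many colors occur, we may assume $C:=\bigcup_{s\in S}L_s$ is finite. For each $c\in C$ set $S_c:=\{s\in S:c\in L_s\}$ and let $N_c$ be the matroid obtained from $M$ by turning every element of $S\setminus S_c$ into a loop, so that $\cI(N_c)=\{I\in\cI(M):I\subseteq S_c\}$ and $r_{N_c}(X)=r(X\cap S_c)$ for every $X\subseteq S$. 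A proper list coloring of $M$ is then precisely a family of sets $I_c\in\cI(N_c)$ $(c\in C)$ whose union is $S$: overlaps may be discarded because subsets of independent sets are independent, and each resulting color class is independent in $M$ since it avoids the loops of $N_c$. Hence such a coloring exists if and only if $S$ is independent in the union matroid $N:=\bigvee_{c\in C}N_c$.

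By the matroid union theorem, $r_N(S)=\min_{T\subseteq S}\bigl(|S\setminus T|+\sum_{c\in C}r_{N_c}(T)\bigr)$, so it suffices to verify that $\sum_{c\in C}r(T\cap S_c)\ge|T|$ for every $T\subseteq S$. This is the step that consumes the hypothesis $\rchi(M)=k$. Fix $T$ and, by restricting a partition of $S$ into $k$ independent sets, partition $T$ into independent sets $A_1,\dots,A_k$. For each color $c$ the sets $A_1\cap S_c,\dots,A_k\cap S_c$ partition $T\cap S_c$ and are independent in $M$, so $r(T\cap S_c)\ge\max_i|A_i\cap S_c|\ge\tfrac1k|T\cap S_c|$. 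Summing over $c$ and using $\sum_{c\in C}|T\cap S_c|=\sum_{s\in T}|L_s|\ge k|T|$ gives $\sum_{c\in C}r(T\cap S_c)\ge|T|$, as required.

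The proof is short, so I do not expect a real obstacle; the single idea that has to be found is pairing the matroid union reduction with the averaging estimate $r(T\cap S_c)\ge|T\cap S_c|/k$, which is exactly where $k$-colorability of $M$ enters. The only points needing care in the write-up are the bookkeeping with loops in the matroids $N_c$ and the passage from a cover of $S$ by independent sets to an actual coloring (i.e.\ a genuine partition), both of which are routine.
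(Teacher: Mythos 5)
Your proof is correct, and it is worth noting that the paper does not actually give a proof of Theorem~\ref{thm:seymour}: the result is stated with a citation to Seymour's note, so there is no in-paper argument to compare against. Your reduction to the matroid union (Edmonds--Fulkerson) rank formula, with the averaging estimate $r(T\cap S_c)\ge |T\cap S_c|/k$ obtained by restricting a $k$-coloring of $S$ to $T$, is exactly Seymour's standard argument, and every step checks out: the equivalence between proper list colorings and covers of $S$ by sets $I_c\in\cI(N_c)$ (overlaps discarded using the downward-closure of independence), the reformulation of $r_N(S)=|S|$ as $\sum_c r(T\cap S_c)\ge|T|$ for all $T$, and the final double-count $\sum_c|T\cap S_c|=\sum_{s\in T}|L_s|\ge k|T|$ are all sound.
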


Laso{\'n} \cite{lason2015list} gave a generalization of the theorem when the sizes of the lists are not necessarily equal. As a common generalization of Theorems~\ref{thm:galvin} and \ref{thm:seymour}, it is tempting to conjecture that $\rchi(M_1,M_2)=\rchi_\ell(M_1,M_2)$ holds for every pair of matroids \cite{kiraly2013open}. 
No pair $M_1,M_2$ is known for which the conjecture fails. Nevertheless, there are only a few matroid classes for which the problem was settled. Kir\'aly and Pap \cite{kiraly2010list} verified the conjecture for transversal matroids, for matroids of rank two, and if the common bases are the arborescences of a digraph which is the disjoint union of two spanning arborescences rooted at the same vertex. It is worth mentioning that a similar statement does not hold for the case of three matroids as shown by the following example due to Kir\'aly \cite{egresopen}. Let $S=\{a,b,c,d,e,f\}$ be a ground set of size six, and let $M_1$, $M_2$ and $M_3$ be partition matroids with circuit sets $\mathcal{C}_1=\{\{a,d\},\{b,e\},\{c,f\}\}$, $\mathcal{C}_2=\{\{a,e\},\{b,f\},\{c,d\}\}$ and $\mathcal{C}_3=\{\{a,f\},\{b,d\},\{c,e\}\}$, respectively. Then $\{a,b,c\}$ and $\{d,e,f\}$ is a partition into two common bases. However, if $L_a=L_d=\{1,2\}$, $L_b=L_e=\{1,3\}$ and $L_c=L_f=\{2,3\}$, then there is no proper list coloring. 
In \cite{egresopen}, Kir\'aly proposed a weakening of the problem where the aim is to find a constant $c$ such that if the coloring number is $k$, then the list coloring number is at most $c\cdot k$. For spanning arborescences, it was observed by Kobayashi \cite{egresopen} that the constructive characterization of $k$-arborescences implies that lists of size $\frac{3}{2}k+1$ are sufficient.

Another motivation comes from the problem of approximating the minimum number of common independent sets of two matroids needed to cover the ground set. Aharoni and Berger \cite{aharoni2006intersection} proved the following interesting result. 

\begin{thm}[Aharoni and Berger] \label{thm:ab}
Let $M_1=(S,\cI_1)$ and $M_2=(S,\cI_2)$ be matroids. If $S$ can be decomposed into $k_1$ independent sets in $M_1$ and into $k_2$ independent sets in $M_2$, then it can be decomposed into $2\max\{k_1,k_2\}$ common independent sets.
\end{thm}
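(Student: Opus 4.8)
The plan is to deduce the theorem from the topological machinery of Aharoni and Berger for the intersection of a matroid with a simplicial complex, which is precisely the setting of \cite{aharoni2006intersection}. First I would reduce to the balanced case: set $k:=\max\{k_1,k_2\}$ and pad the shorter of the two decompositions with copies of the empty set, so that $S$ decomposes into exactly $k$ independent sets in each of $M_1$ and $M_2$. The useful reformulation is that a set $C\subseteq S$ lies in $\cI_1\cap\cI_2$ exactly when $C\in\cI_1$ and $C$ is a face of the simplicial complex $\cI_2$ (the independence complex of $M_2$). Hence it suffices to cover $S$ by $2k$ sets, each independent in $M_1$ and a face of $\cI_2$.

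Two ingredients are then needed. The first is a connectivity estimate: the independence complex of a loopless matroid of rank $r$ is pure of dimension $r-1$ and shellable, hence homotopy equivalent to a wedge of $(r-1)$-spheres (or contractible), so it is $(r-2)$-connected and its topological connectivity satisfies $\eta(\cI_2)\geq r(M_2)$. Since $S$ is the union of $k$ members of $\cI_2$, each of size at most $r(M_2)$, we get $|S|\leq k\,r(M_2)$ and therefore $|S|/\eta(\cI_2)\leq |S|/r(M_2)\leq k$. The second ingredient is a covering theorem of Aharoni and Berger for a matroid and a complex (a consequence of the main results of \cite{aharoni2006intersection}): for any matroid $N$ and any simplicial complex $\cC$ on a common ground set $V$, one can cover $V$ by at most $\rchi(N)+\lceil |V|/\eta(\cC)\rceil$ sets, each independent in $N$ and a face of $\cC$. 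Applying this with $N=M_1$ and $\cC=\cI_2$, and using $\rchi(M_1)\leq k$ together with $\lceil |S|/\eta(\cI_2)\rceil\leq k$, we obtain a cover of $S$ by at most $2k=2\max\{k_1,k_2\}$ sets from $\cI_1\cap\cI_2$. Deleting elements (a subset of a common independent set is again common independent) turns this cover into a decomposition, as required.

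The hard part is the second ingredient: the covering theorem for a matroid and a complex is genuinely non-elementary, and its proof runs through nerve/Sperner-type topological arguments together with a careful analysis of how $\eta$ behaves under links, deletions and contractions in the complex; I do not know a combinatorial route to the factor $2$ (a polyhedral attempt stalls, since although the fractional covering number by common independent sets is at most $k$, there is no generic reason a set-cover-type LP rounds within a constant factor, so the special structure would have to be exploited). A lesser point still requiring care is the first ingredient: one must check that the independence complex of an arbitrary loopless matroid $M$ is $(r(M)-2)$-connected -- shellability of the matroid complex, equivalently of the broken-circuit complex, does this -- and fix the normalisation of $\eta$ so that the ceiling above evaluates to $k$ rather than $k+1$. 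Finally, note that the argument in fact yields the marginally stronger bound $\rchi(M_1)+\lceil |S|/\eta(\cI_2)\rceil\leq k_1+k_2$.
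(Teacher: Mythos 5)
The paper does not prove Theorem~\ref{thm:ab}; it is quoted directly from Aharoni and Berger \cite{aharoni2006intersection}, with only the remark that ``the proof of the theorem is highly non-trivial and uses topological arguments which do not directly give an algorithm.'' There is therefore no in-paper argument to compare your sketch against. What you have written is a reasonable account of the Aharoni--Berger topological route, but by your own admission it is not a proof: the entire weight of the argument rests on the covering theorem for a matroid intersected with a simplicial complex, which you state as the ``second ingredient'' but neither prove nor outline. You also flag, correctly, that no elementary or polyhedral shortcut to the factor $2$ is known. So the proposal is a competent survey of the method underlying the cited result, not an independent proof; this matches the paper's own treatment, which is to cite and instead propose Conjecture~\ref{con:red} (reduction to partition matroids plus K\H{o}nig's theorem) as a hoped-for combinatorial and algorithmic replacement.

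Two corrections to the sketch itself. First, ``shellability of the matroid complex, equivalently of the broken-circuit complex'' is wrong: the broken-circuit complex is a different (generally strictly smaller) complex depending on a linear order of the ground set, and its shellability is not a reformulation of shellability of the independence complex. What you need is the classical fact (Provan; Bj\"orner) that the independence complex of a loopless rank-$r$ matroid is pure shellable, hence a wedge of $(r-1)$-spheres or contractible, giving $\eta\geq r$ under the Aharoni--Berger normalisation. Second, the covering theorem in \cite{aharoni2006intersection} requires the connectivity bound to hold hereditarily for every induced subcomplex $\cC[W]$, not just for $\cC$; here this is automatic because $\cI_2[W]$ is the independence complex of $M_2|_W$ and $\eta(\cI_2[W])\geq \rho_{M_2}(W)\geq |W|/k_2$ by Corollary~\ref{cor:ef}, but it should be stated. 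You should also re-check the exact form of the bound you quote ($\rchi(N)+\lceil |V|/\eta(\cC)\rceil$) against the source before asserting the marginally sharper conclusion $k_1+k_2$; as I recall, the published statement is phrased with a $\max$ and case distinctions rather than a clean sum.
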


Note that in terms of approximating the coloring number, Theorem~\ref{thm:ab} states that $\rchi(M_1,M_2)\leq 2\max\{\rchi(M_1),\rchi(M_2)\}$. The proof of the theorem is highly non-trivial and uses topological arguments which do not directly give an algorithm for finding the decomposition in question. We propose a conjecture on reduction of matroids to partition matroids from which the theorem would easily follow.

The idea of reducing a matroid to a simpler one goes back to the late 60's. In \cite{crapo1968combinatorial}, Crapo and Rota introduced the notion of weak maps. Given two matroids $M$ and $N$ on the same ground set, $N$ is a \textbf{weak map} of $M$ if every independent set of $N$ is also independent in $M$. Using our terminology, $N$ is a weak map of $M$ if and only if $N$ is a reduction of $M$. Weak maps were further investigated by Lucas \cite{lucas1974properties,lucas1975weak} who characterized rank preserving weak maps for linear matroids. However, these results did not consider the possible increase in the coloring number of the matroid that plays a crucial role in our investigations. We find the name `map' slightly misleading as it suggests that there is a function in the background, although the `mapping' in question is simply the identity map between the ground sets of the matroids. Hence we stick to the term `reduction' throughout the paper. 

It is worth mentioning that every matroid $M=(S,\cI)$ has a reduction to a partition matroid $N=(S,\cJ)$ of the same rank. The sketch of the proof is as follows: Fix an arbitrary basis $B=\{s_1,\dots,s_r\}$ of $M$, and add $s_i$ to the $i$th partition class. Then for an arbitrary element $s\in S-B$, consider the fundamental circuit $C(s,B)$ of $s$ with respect to $B$, and add $s$ to the partition class containing the element of $C(s,B)\cap B$ with the smallest index. If we pick exactly one element from every class of the partition thus obtained, we get a basis of the matroid. This can be verified using the circuit axioms, not discussed in this paper. Nevertheless, this algorithm has no control over the sizes of the partition classes. It can happen that some of the classes have a large size compared to the coloring number of the original matroid, and such a reduction is not suitable for our purposes.

\paragraph{Our results} 
In the present paper, we consider matroid classes that appear naturally in combinatorial and graph optimization problems, and show that if both matroids are from these fundamental classes then $c$ can be chosen to be roughly $2$. Our proof builds on the reduction of a matroid to a partition matroid. Given matroids $M=(S,\cI)$ and $N=(S,\cJ)$, we say that $N$ is a \textbf{reduction} of $M$ if $\cJ\subseteq\cI$, that is, every independent set of $N$ is independent in $M$ as well. In notation, we will denote $N$ being a reduction of $M$ by $N\preceq M$. The reduction is \textbf{rank preserving} if $r_M(S)=r_N(S)$ holds, and is denoted by $N\preceq_r M$.

A \textbf{partition matroid}\footnote{In the literature, partition matroids are defined more generally in the sense that the upper bounds on the intersection might be different for the different partition classes. As all the partition matroids used in the paper have all-ones upper bounds, we make this restriction without explicitly mentioning it.} is a matroid $N=(S,\cJ)$ such that $\cJ=\{X\subseteq S:\ |X\cap S_i|\leq 1\ \text{for $i=1,\dots,q$}\}$ for some partition $S=S_1\cup\dots\cup S_q$. Clearly, the coloring number of $N$ is $\rchi(N)=\max\{|S_i|:\ i=1,\dots,q\}$. Note that the followings are equivalent for a matroid $M=(S,\cI)$: (i) $N\preceq M$, (ii) every circuit of $M$ intersects at least one of the $S_i$'s in more than one element, and (iii) $\{x_1,\dots,x_q\}\in\cI$ whenever $x_i\in S_i$ for $i=1,\dots,q$.

To illustrate the applicability of reduction, assume that $M_1$ and $M_2$ are matroids on the same ground set that are reducible to $k$-colorable partition matroids $N_1$ and $N_2$, respectively. Then, by Theorem~\ref{thm:galvin}, $\rchi_\ell(N_1,N_2)\leq k$. As $N_1\preceq M_1$ and $N_2\preceq M_2$, this in turn implies that $\rchi_\ell(M_1,M_2)\leq k$. 

As a first step towards understanding reducibility to partition matroids, we concentrate on matroid classes that appear naturally in combinatorial and graph optimization problems, namely graphic matroids, paving matroids, and gammoids; see the corresponding sections for the precise definitions. These classes also include uniform matroids, laminar matroids and transversal matroids, hence the presented results also apply to those. We show that matroids from these fundamental classes admit a reduction to a partition matroid with coloring number at most twice the coloring number of the original matroid.

The first two result are for transversal and graphic matroids and are based on easy observations. Although transversal matroids are special cases of gammoids, we discuss them separately as transversal matroids admit an optimal reduction in terms of coloring number. 

\begin{restatable}{thm}{thmtransversal}
\label{thm:transversal}
Let $M=(S,\cI)$ be a $k$-colorable transversal matroid. Then there exists a $k$-colorable partition matroid $N$ with $N\preceq_r M$.
\end{restatable}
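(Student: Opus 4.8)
Let me think about this. We have a transversal matroid $M = (S, \mathcal{I})$ that is $k$-colorable, meaning $S$ can be partitioned into $k$ independent sets $I_1, \ldots, I_k$. I want to build a partition matroid $N$ with $N \preceq_r M$ (rank-preserving reduction) that is also $k$-colorable.

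A transversal matroid has a presentation by a bipartite graph $G = (S, T; E)$, where a set $X \subseteq S$ is independent iff it can be matched into $T$. By the standard theory (Mendelsohn–Dulmage / taking a minimal presentation), we can assume $|T| = r = r_M(S)$, i.e., $T$ has exactly $r$ elements and every element of $T$ is used in some maximum matching. Actually, even better: there's a maximum matching $B$ saturating all of $T$ (since $|T| = r$ and $r_M(S) = r$).

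Now here's the idea. Given the presentation $G = (S, T; E)$ with $|T| = r$, I want to define a partition of $S$ indexed by $T$: assign each $s \in S$ to some class $S_t$ where $t \in T$ and $st \in E$. If I do this so that each $S_t$ is nonempty-ish and controlled in size, the resulting partition matroid $N$ (with one allowed element per class $S_t$) satisfies $N \preceq M$: picking one element from each class gives a set that is matched into $T$ by construction (element in $S_t$ goes to $t$), hence independent in $M$. And $r_N(S) = r = r_M(S)$ since there are $r$ classes and each is nonempty (we need each nonempty — handle empty classes by merging, which only helps).

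The key question is: can we choose the assignment $s \mapsto t(s)$ so that $\max_t |S_t| \le k$? Here's where the $k$-colorability of $M$ enters. Consider the $k$ color classes $I_1, \ldots, I_k$. Each $I_j$ is independent, so it can be matched into $T$; fix such a matching $\mu_j : I_j \to T$ (injective). Now for $s \in S$, say $s \in I_{j}$, define $t(s) := \mu_j(s)$. This is a valid assignment ($s\,\mu_j(s) \in E$). How large can $S_t$ get? An element $s \in S_t$ with $s \in I_j$ satisfies $\mu_j(s) = t$; since each $\mu_j$ is injective, at most one element of each $I_j$ maps to $t$, so $|S_t| \le k$.

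So the plan is: (1) take a minimal presentation $G=(S,T;E)$ with $|T| = r_M(S)$ and no isolated vertices in $T$ (standard reduction of transversal matroid presentations); (2) for each color class $I_j$ in a $k$-coloring of $M$, fix an injection $\mu_j : I_j \to T$ with $s\mu_j(s) \in E$; (3) partition $S$ by $S_t = \{s : \mu_{j(s)}(s) = t\}$ where $j(s)$ is the (a) color of $s$, merging any empty classes into nonempty ones; (4) let $N$ be the partition matroid on this partition with all-ones caps. Then $N \preceq M$ by construction (iii), $r_N = r_M$ since we have $r$ nonempty classes, and $\rchi(N) = \max_t |S_t| \le k$. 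The main thing to be careful about is step (1) — ensuring we may take $|T| = r$ with every $t$ covered — which is the classical fact that every transversal matroid has a presentation with exactly $r_M(S)$ elements on the other side, each of which lies in some basis; combined with $|T| = r_M(S) = r_M(S)$ this forces all classes to be makeable nonempty after the merge, giving the rank-preservation. Everything else is a direct check against the circuit/independence characterization of reductions stated in the excerpt.
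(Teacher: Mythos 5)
Your construction is essentially the paper's proof: take a presentation $G=(S,T;E)$ with $|T|=r$, fix for each color class $I_j$ a matching $\mu_j$ saturating it, put each $s\in I_j$ into the class $S_t$ with $t=\mu_j(s)$, and observe that injectivity of each $\mu_j$ gives $|S_t|\le k$, while picking one element per class is matchable via the witnessing edges, so $N\preceq M$. The one spot that needs more care---and the paper's write-up is equally loose here---is rank preservation: a bad choice of the $\mu_j$'s can leave some $S_t$ empty, and your remedy of merging empty classes into nonempty ones destroys rank preservation rather than establishing it. Concretely, let $S=\{a,b,c,d\}$, $T=\{t_1,t_2,t_3\}$ with edges $at_1,at_3,bt_2,ct_1,dt_2$; this transversal matroid has rank $3$ and $\rchi=2$ (color classes $\{a,b\}$, $\{c,d\}$), yet $\mu_1=\{at_1,bt_2\}$, $\mu_2=\{ct_1,dt_2\}$ yields $S_3=\emptyset$, and merging gives a rank-$2$ partition matroid. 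The fix is not to merge but to \emph{choose} the $\mu_j$'s so that $\bigcup_j\mu_j$ covers $T$: fix a basis $B$ with a perfect matching $\tau$ onto $T$, and replace each $\mu_j$ by a matching that saturates $I_j$ on the $S$ side and $\tau(B\cap I_j)$ on the $T$ side (obtainable from $\mu_j$ and $\tau|_{B\cap I_j}$ by a Mendelsohn--Dulmage style exchange along alternating paths); since $\bigcup_j\tau(B\cap I_j)=T$, every class $S_t$ is then nonempty and $N$ has rank $r$. With that adjustment your argument matches the paper's.
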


\begin{restatable}{thm}{thmgraphic}
\label{thm:graphic}
Let $M=(S,\cI)$ be a $k$-colorable graphic matroid. Then there exists a $(2k-1)$-colorable partition matroid $N$ with $N\preceq_r M$, and the bound for the coloring number of $N$ is tight.
\end{restatable}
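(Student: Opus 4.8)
The plan is to express $\rchi$ of a graphic matroid through the arboricity of the underlying graph, to build the reduction greedily by repeatedly deleting a low-degree non-separating vertex, and to get tightness from Gallai's structure theorem for rainbow-triangle-free edge colourings. Recall from the introduction that a partition matroid $N$ with classes $\{S_i\}$ satisfies $N\preceq M$ exactly when every circuit of $M$ meets some $S_i$ in at least two elements. One may assume $G=(V,E)$ is connected on $n$ vertices, since matroids, partition matroids, rank, and $\preceq_r$ all decompose over components; then $\rchi(M)\le k$ means (Nash--Williams) $|E(H)|\le k(|V(H)|-1)$ for every $H\subseteq G$ with $|V(H)|\ge 2$, so in particular every subgraph of $G$ has a vertex of degree at most $2k-1$. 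The key step is the following lemma: \emph{a connected graph $G$ with arboricity at most $k$ and $|V(G)|\ge 2$ has a non-separating vertex of degree at most $2k-1$}. If $G$ has no cut vertex, a minimum-degree vertex works. Otherwise pick a leaf block $B$ of $G$ and let $c$ be its cut vertex; every vertex of $V(B)\setminus\{c\}$ is non-separating in $G$ and has all of its edges in $B$. Were all such vertices of degree at least $2k$ in $B$, then $2|E(B)|\ge 2k(|V(B)|-1)+\deg_B(c)>2k(|V(B)|-1)\ge 2|E(B)|$, a contradiction; hence some vertex of $V(B)\setminus\{c\}$ has degree at most $2k-1$ in $B$, and so in $G$.

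Applying the lemma repeatedly — each time to the subgraph induced by the vertices not yet deleted, which stays connected and has arboricity at most $k$ — and listing the deleted vertices in reverse as $v_1,\dots,v_n$, we obtain an order in which every $v_j$ with $j\ge 2$ has between $1$ and $2k-1$ neighbours among $v_1,\dots,v_{j-1}$. Let $N$ be the partition matroid of $\{S_{v_j}\}_{j=2}^{n}$ with $S_{v_j}:=\{v_iv_j\in E:\ i<j\}$. This partitions $E$ into $n-1$ nonempty classes, each of size at most $2k-1$, so $\rchi(N)\le 2k-1$, and $r_N(S)=n-1=r_M(S)$. Moreover $N\preceq_r M$: if a cycle of $G$ had all its edges in distinct classes, then its two edges incident to the vertex of largest index would both belong to that vertex's class, contradicting the criterion above; equivalently, choosing one edge per class makes every $v_j$ point to an earlier vertex, so the chosen set is acyclic, hence an $(n-1)$-edge forest and thus a spanning tree.

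For tightness, take $M$ to be the graphic matroid of $K_{2k}$, which is $k$-colourable. For any partition matroid $N$ with $N\preceq_r M$, its classes colour $E(K_{2k})$ so that every triangle has two edges of one colour; hence the colouring is rainbow-triangle-free, and by Gallai's theorem there is a vertex partition $V(K_{2k})=V_1\cup\dots\cup V_m$ with $m\ge 2$ such that all edges between any fixed pair of parts have the same colour and at most two colours occur on edges joining distinct parts. If $m=2$, one class contains all $|V_1||V_2|\ge 2k-1$ of those edges; if $m\ge 3$, there are at least $\binom{2k}{2}-\binom{2k-2}{2}=4k-3$ of them, covered by at most two classes, so one class has at least $\lceil(4k-3)/2\rceil=2k-1$ elements. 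Either way $\rchi(N)\ge 2k-1$, so the bound is tight.

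I expect the lemma on non-separating low-degree vertices to be the main obstacle: its naive degeneracy analogue (``every connected $d$-degenerate graph has a non-separating vertex of degree $\le d$'') is \emph{false}, so one must really use the Nash--Williams inequality on the leaf block rather than just degeneracy; a secondary point is that the tightness argument genuinely needs Gallai's theorem, since a direct count of triangles against monochromatic pairs is too weak once $k\ge 3$.
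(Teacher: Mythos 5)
Your proof is correct, but both halves take a genuinely different route from the paper's. For the construction, the paper works with edge cuts: a $k$-colourable graphic matroid cannot have a $2k$-edge-connected component (that would give $|E[K]|\ge k|K|$, violating Nash--Williams), so one repeatedly peels off a minimum edge cut of size at most $2k-1$ in some remaining component and lets these cuts be the partition classes; a cycle meets the first class it touches at least twice because that class is a cut of what remains. You instead set up a vertex elimination order, proving via a leaf-block argument that every connected graph of arboricity at most $k$ has a non-separating vertex of degree at most $2k-1$, and then use the back-edge classes $S_{v_j}$. Both arguments are elementary, correct, and rank preserving, and your block-counting lemma is sound (it is exactly the point where the Nash--Williams inequality, applied to the block, does more than mere degeneracy would). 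One thing the paper's cut formulation buys is the clean generalisation to arbitrary matroids stated as Theorem~\ref{thm:gencut}, where edge cuts are replaced by cocircuits; your vertex formulation is intrinsically graph-specific. For tightness on $K_{2k}$, the paper cites the Bialostocki--Dierker--Voxman theorem that every Gallai colouring of a complete graph contains a monochromatic spanning tree, immediately giving a class of size at least $2k-1$; you instead apply Gallai's structure theorem directly and count ($m=2$ gives a monochromatic complete bipartite part with at least $2k-1$ edges, $m\ge3$ gives at least $4k-3$ cross edges in at most two colours). Your count is a nice shortcut that avoids the spanning-tree result, though both tightness arguments ultimately rest on the same Gallai machinery. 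A small remark: your tightness argument only uses that every triangle has a repeated colour, which holds for any $N\preceq M$, so like the paper's it in fact shows that no partition-matroid reduction of the graphic matroid of $K_{2k}$, rank preserving or not, can have colouring number below $2k-1$.
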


The next result is for paving matroids.

\begin{restatable}{thm}{thmpaving}
\label{thm:paving}
Let $M=(S,\cI)$ be a $k$-colorable paving matroid of rank $r \ge 2$. Then there exists a $\lceil\frac{rk}{r-1}\rceil$-colorable partition matroid $N$ with $N\preceq M$.
\end{restatable}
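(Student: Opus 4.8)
The plan is to build $N$ \emph{without} trying to preserve the rank of $M$, aiming instead for a partition matroid of rank $r-1$. The key observation is that for a paving matroid $M$ of rank $r$, \emph{every} partition matroid whose underlying partition has at most $r-1$ classes is automatically a reduction of $M$. Indeed, by the equivalence (iii) recorded in the introduction, $N\preceq M$ holds as soon as every set obtained by choosing at most one element from each class lies in $\cI$; but such a set has at most $r-1$ elements, and in a paving matroid of rank $r$ every set of size at most $r-1$ is independent. Hence the reduction condition comes for free, and the only quantity left to control is the size of the largest class.

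For the size bound I would argue as follows. Since $M$ is $k$-colorable, $S$ is the union of $k$ independent sets of $M$, each of size at most $r_M(S)=r$, so $|S|\le rk$. Now partition $S$ into exactly $r-1$ classes $S_1,\dots,S_{r-1}$ as evenly as possible, so that $\max_{1\le i\le r-1}|S_i|=\lceil |S|/(r-1)\rceil\le\lceil rk/(r-1)\rceil$, and let $N$ be the partition matroid associated with this partition. Then $\rchi(N)=\max_i|S_i|\le\lceil rk/(r-1)\rceil$, while $N\preceq M$ by the observation of the previous paragraph, which is exactly the assertion of the theorem. (The hypothesis $r\ge2$ is used here only so that the number $r-1$ of classes is positive.)

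There is essentially no computation involved, and the point worth flagging is conceptual rather than technical: the only real idea is to abandon rank preservation, which is precisely what makes the transversal condition vacuous and collapses the problem to a counting argument. By contrast, for transversal and graphic matroids (and, later, gammoids) the reductions are rank preserving, so one is forced to break circuits genuinely and the constructions are more delicate. It is still prudent to check the degenerate cases so that the stated bound is never exceeded: for $r=2$ the construction yields the (weak but valid) bound $2k$ with $N$ of rank $1$, and when $|S|=r$, so $k=1$, it yields the bound $2=\lceil r/(r-1)\rceil$, as required.
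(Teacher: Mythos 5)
Your argument is correct and is essentially identical to the paper's: both partition $S$ into $r-1$ nearly equal classes, invoke $|S|\le rk$ from $k$-colorability, and observe that any transversal of the $r-1$ classes has size at most $r-1$ and is therefore independent because $M$ is paving. The only difference is expository, namely that you explicitly flag the abandonment of rank preservation as the conceptual move, which the paper leaves implicit.
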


For $r=2$, the bound on the coloring number of $N$ can be improved and the reduction can be chosen to be rank preserving. 

\begin{restatable}{thm}{thmpavingtwo}
\label{thm:paving2}
Let $M=(S,\cI)$ be a $k$-colorable paving matroid of rank $2$. Then there exists a $\lfloor\frac{4k}{3}\rfloor$-colorable partition matroid $N$ with $N\preceq_r M$, and the bound for the coloring number of $N$ is tight.
\end{restatable}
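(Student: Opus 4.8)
The plan is to first unwind what the objects in the statement look like. A loopless rank-$2$ matroid all of whose circuits have size at least $2$ is precisely a uniform matroid $U_{2,m}$ with each point blown up into a parallel class: the ground set decomposes as $S=P_1\cup\dots\cup P_m$ with $m\ge 2$ classes, and a set is independent in $M$ exactly when it has at most one element or it is a \emph{transversal pair} (one element from each of two distinct classes). Correspondingly, a rank-preserving reduction of $M$ to a partition matroid $N=(S,\cJ)$ amounts to a partition of $S$ into exactly two nonempty classes $S_1,S_2$; by criterion (iii) from the introduction such an $N$ satisfies $N\preceq_r M$ if and only if each $P_i$ is contained in $S_1$ or in $S_2$, and then $\rchi(N)=\max\{|S_1|,|S_2|\}$. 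Finally, $k$-colorability of $M$ is easy to exploit: every color class is independent, hence has at most two elements and meets each $P_i$ at most once, so $|S|\le 2k$ and $|P_i|\le k$ for every $i$.

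With this reformulation, building $N$ is exactly a two-machine makespan scheduling problem: distribute the ``jobs'' $|P_1|\ge|P_2|\ge\dots\ge|P_m|$ onto two machines $S_1,S_2$ so as to minimize the larger load. I would run the classical longest-processing-time list-scheduling rule: process the classes in non-increasing order of size, always appending the current class to the side of smaller current load (ties broken arbitrarily). Since $m\ge 2$, each side receives at least one class, so $N$ has rank $2$ and $N\preceq_r M$. It remains to bound the resulting makespan $C=\rchi(N)$ by $\lfloor 4k/3\rfloor$.

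For the bound, let $P_j$ be the last class placed on the side that ends up being the heavier one. If $j\le 2$, then that side consists of a single class (namely $P_1$ if $j=1$; and $P_2$, which is forced onto the then-empty side, if $j=2$), so $C\le|P_1|\le k\le\lfloor 4k/3\rfloor$. If $j\ge 3$, then just before $P_j$ was added the heavier side was the lighter one, so its load was at most half of $|P_1|+\dots+|P_{j-1}|\le|S|-|P_j|$, giving $C\le\tfrac12(|S|+|P_j|)$; and since $|P_1|,\dots,|P_j|$ are each at least $|P_j|$ and sum to at most $|S|$, we get $|P_j|\le|S|/j\le|S|/3$, hence $C\le\tfrac12\bigl(|S|+|S|/3\bigr)=\tfrac23|S|\le\tfrac43 k$, and integrality of $C$ yields $C\le\lfloor 4k/3\rfloor$. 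The case $j\ge 3$, where the factor $4/3$ is produced precisely by $|P_j|\le|S|/3$, is the crux; the small cases $j\le 2$ look like the dangerous ones but turn out to be harmless because then the heavier side is a single class.

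For tightness, take $M$ to consist of three parallel classes each of size $s$. A routine matching argument (a balanced complete tripartite graph on $3s$ vertices has a matching of size $\lfloor 3s/2\rfloor$) shows $\rchi(M)=\lceil 3s/2\rceil=:k$. Any partition matroid $N$ with $N\preceq_r M$ splits the three classes into two nonempty groups, forcing total sizes $s$ and $2s$, so $\rchi(N)=2s$; and a short case check on the parity of $s$ gives $2s=\lfloor 4k/3\rfloor$ for every $s$. Hence $\lfloor 4k/3\rfloor$ cannot be improved, which completes the plan.
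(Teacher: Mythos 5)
Your proof is essentially correct and follows the same broad strategy as the paper's: reformulate the problem on parallel classes, produce a two-part partition whose larger part is bounded by $\lfloor 4k/3\rfloor$, and exhibit a three-parallel-class example for tightness. The two arguments differ in execution in a couple of places worth noting.

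For the upper bound, you invoke the longest-processing-time (LPT) greedy scheduling rule and its standard analysis (last job on the heavier machine has index $\ge 3$, hence size $\le |S|/3$). The paper instead sorts the classes by size, finds the first prefix exceeding $|S|/3$, and splits there; a short case check ($i=1$ versus $i\ge 2$) then gives the same $\tfrac{2}{3}|S|\le \tfrac{4}{3}k$ bound. Both are correct; the LPT route is slightly more machinery for the same conclusion, but it is clean and self-contained. Your observation that the heavier side being a singleton class handles the boundary cases $j\le 2$ is exactly right, and the use of $|S|\le 2k$ and $|P_i|\le k$ from $k$-colorability matches the paper.

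For tightness, there is a small gap you should be aware of. You take three parallel classes of equal size $s$, compute $k=\rchi(M)=\lceil 3s/2\rceil$, and check $2s=\lfloor 4k/3\rfloor$. This is correct, but as $s$ ranges over the positive integers, $\lceil 3s/2\rceil$ takes values $2,3,5,6,8,9,\dots$, so you only exhibit tightness for $k\not\equiv 1\pmod 3$. The paper instead takes $|S|=2k$ with three parallel classes of sizes as equal as possible (so not necessarily all equal), i.e., $|S_1|=\lceil 2k/3\rceil\ge |S_2|\ge |S_3|=\lfloor 2k/3\rfloor$; then $\rchi(M)=k$ by Lemma~\ref{lem:pavrank}, and any two-part split of the three classes has a part of size at least $|S_2|+|S_3|=2k-\lceil 2k/3\rceil=\lfloor 4k/3\rfloor$. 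This works for every $k$. Your construction still establishes that the bound $\lfloor 4k/3\rfloor$ cannot be improved as a function of $k$, which is the substance of the tightness claim, but to match the paper's statement you should either switch to the unequal-sizes construction or note explicitly that your examples cover infinitely many $k$. Also note the paper additionally observes that a rank-$1$ reduction forces $\rchi(N)=|S|=2k$, so restricting attention to rank-preserving $N$ (as you do) loses nothing.
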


It is not difficult to see that every loopless matroid of rank $2$ is paving, hence Theorem~\ref{thm:paving2} gives a tight bound on the coloring number of the reduction $N$ of such matroids.

For $r=3$, we can provide rank preserving reductions at the price of increasing the coloring number of $N$.

\begin{restatable}{thm}{thmpavingthree}
\label{thm:paving3}
Let $M=(S,\cI)$ be a $k$-colorable paving matroid of rank $3$. Then there exists a $(2k-1)$-colorable partition matroid $N$ with $N\preceq_r M$.
\end{restatable}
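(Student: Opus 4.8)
The plan is to exploit the rigid structure of rank‑$3$ paving matroids. Such a matroid $M$ is automatically simple (every circuit has size at least $r=3$, so there are no loops and no parallel pairs), and is completely determined by its family $\cL$ of \emph{long lines}, the rank‑$2$ flats with at least three elements; any two long lines share at most one element. Its circuits are precisely the three‑element subsets of long lines and the four‑element subsets containing no three collinear elements. From $k$‑colorability I only need two numerical consequences: $|S|\le 3k$ (each of the $k$ independent color classes has at most $r=3$ elements) and $|L|\le 2k$ for every $L\in\cL$ (the restriction of a color class to the rank‑$2$ set $L$ has at most two elements). By the equivalence of conditions (i)--(iii) in the introduction, constructing a partition matroid $N\preceq_r M$ with $\rchi(N)\le 2k-1$ is the same as partitioning $S$ into three nonempty classes, each of size at most $2k-1$, so that no long line meets all three; four‑element circuits then take care of themselves, since four elements distributed among three classes always leave two in a common class.

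\textbf{Case 1: some $L\in\cL$ has $|L|\ge |S|-2k+1$.} Put $S\setminus L$ into the first class --- this has at most $2k-1$ elements and is nonempty since $L\ne S$ (as $r_M(S)=3$) --- and split $L$ as evenly as possible between the other two classes; each of them receives between $\lfloor|L|/2\rfloor\ge 1$ and $\lceil|L|/2\rceil\le k\le 2k-1$ elements. Now $L$ avoids the first class, and every other long line $L'$ avoids one of the remaining two: $|L'\cap L|\le 1$, so all but at most one element of $L'$ lies in $S\setminus L$, hence in the first class. Thus no long line meets all three classes.

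\textbf{Case 2: every $L\in\cL$ has $|L|\le |S|-2k$} (this also covers $\cL=\varnothing$, i.e.\ $M\cong U_{3,|S|}$). Fix any $s_0\in S$ and let $\{s_0\}$ be the third class. Any two long lines through $s_0$ meet only in $s_0$, so their tails $L\setminus\{s_0\}$ are pairwise disjoint subsets of $S\setminus\{s_0\}$, each of size at most $|S|-2k-1\le k-1$. Regard these tails, together with a singleton for each remaining element of $S\setminus\{s_0\}$, as items; their total size is $|S|-1\le 3k-1$, and each has size at most $k-1$ (or at most $1$ if no long line passes through $s_0$). A routine greedy/bin‑packing argument then splits the items into two nonempty groups, each of total size at most $2k-1$; color one group with the first class and the other with the second. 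A long line through $s_0$ is not rainbow because its tail lies in a single class, so together with $s_0$ the line meets only two classes; a long line avoiding $s_0$ is not rainbow because it avoids the third class. Either way, the partition matroid $N$ on the three classes satisfies $N\preceq_r M$ and $\rchi(N)\le 2k-1$.

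The conceptual step is the dichotomy on whether one long line is ``large''; once it is made, the coloring is essentially forced and self‑verifying. The only point requiring real care is the bin‑packing step of Case 2 --- checking that two bins of capacity $2k-1$ genuinely suffice --- which is exactly where both $|S|\le 3k$ and $|L|\le|S|-2k$ are used: a balanced greedy keeps the two bins within $k-1$ of each other, so each ends with size at most about $\tfrac{(3k-1)+(k-1)}{2}=2k-1$, and the tiny cases with $k=1$ (forcing $|S|=3$ and $M\cong U_{3,3}$) are settled by inspection. I do not anticipate any obstacle beyond making this arithmetic precise.
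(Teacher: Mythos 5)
Your proof is correct, and its skeleton is the same as the paper's: in both arguments one either (Case 1) splits a suitably large line $L$ into two near-halves and takes $S\setminus L$ as the remaining class, or (Case 2) isolates a point $s_0$ as a singleton class and distributes the disjoint tails of the long lines through $s_0$, together with leftover singletons, into two groups by a greedy balancing argument. What differs is the threshold that triggers Case~1. You enter Case~1 as soon as $|L|\ge |S|-2k+1$ (so that $S\setminus L$ alone is small enough) and your analysis caps the reduction at $2k-1$ colors. The paper instead enters Case~1 only when $|H_1|\ge 2|S|/3$, precisely the regime in which Lemma~\ref{lem:pavrank} gives $\rchi(M)=\lceil |H_1|/2\rceil$, and this sharper hypothesis lets it conclude $\rchi(N)=\rchi(M)$ there, not merely $\rchi(N)\le 2k-1$; the trade-off is that its Case~2 then works with the looser hypothesis $|H_1|<2|S|/3$, whereas yours enjoys the cleaner $|L|\le |S|-2k\le k$, which makes the bin-packing arithmetic immediate. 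Your route avoids invoking the explicit coloring-number formula of Lemma~\ref{lem:pavrank}, so it is slightly more self-contained, at the cost of the optimal constant in the large-line case. One point worth making explicit in a final write-up: in Case~2 the bin-packing needs at least two items to produce two nonempty parts, which is guaranteed because a single tail filling all of $S\setminus\{s_0\}$ would force $L=S$, impossible for a rank-$2$ flat.
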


The main contribution of the paper is a proof that any $k$-colorable gammoid can be reduced to a $(2k-2)$-colorable partition matroid for $k\geq 2$. The assumption $k\geq 2$ is not restrictive, as if $k=1$ then $M$ is the free matroid on $S$ which is already a partition matroid.

\begin{restatable}{thm}{thmgammoid}
\label{thm:3}\label{thm:gammoid}
Let $M=(S,\cI)$ be a $k$-colorable gammoid ($k\geq 2$). Then there exists a $(2k-2)$-colorable partition matroid $N$ with $N\preceq_r M$, and the bound for the coloring number of $N$ is tight.
\end{restatable}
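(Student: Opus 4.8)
The plan is to work from a flow representation of the gammoid and to build the partition matroid by routing the whole ground set into a fixed skeleton of vertex-disjoint paths. Represent $M$ by a digraph $D=(V,A)$, a terminal set $T\subseteq V$ and the ground set $S\subseteq V$, so that $X\subseteq S$ is independent in $M$ if and only if $X$ can be linked to $T$ by $|X|$ vertex-disjoint paths; after discarding vertices lying on no $T$--$S$ path we may assume $|T|=r:=r_M(S)$, and we fix a maximum linkage $\mathcal{P}=(P_1,\dots,P_r)$ in which $P_i$ runs from $t_i\in T$ to $b_i$, where $B=\{b_1,\dots,b_r\}$ is a basis of $M$. Using the standard augmenting-walk analysis of linkages, I would first record the exchange structure relative to $\mathcal{P}$: for $s\in S\setminus B$ the basis elements $b_i$ with $B-b_i+s\in\cI$ are exactly those reachable from $s$ by an alternating walk that traverses the arcs of $\bigcup_j P_j$ backwards and all other arcs forwards. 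For every such $s$ I would fix one alternating walk $W_s$, chosen extremally with respect to the natural orderings along the paths of $\mathcal{P}$, let $\phi(s)$ be the index of the skeleton path on which $W_s$ ends, and define the candidate partition by $S_i:=\{b_i\}\cup\phi^{-1}(i)$ for $i=1,\dots,r$, with $N$ the partition matroid having these $r$ classes.

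The heart of the proof, and the step I expect to be the main obstacle, is to show that $N\preceq_r M$, that is, that every transversal $\{x_1,\dots,x_r\}$ of $(S_i)_{i=1}^{r}$ is independent in $M$; rank preservation is then immediate since each class meets $B$. Here $x_i=b_i$ unless $x_i\in\phi^{-1}(i)$, in which case $W_{x_i}$ witnesses the single exchange $B-b_i+x_i\in\cI$. Since such single exchanges need not combine into a simultaneous one, I would argue directly on the linkage, replacing the tail of each $P_i$ by an appropriate initial segment of $W_{x_i}$ and showing that, because the walks were chosen extremally and the swapped elements sit in pairwise distinct classes, these reroutings can be carried out successively without ever violating vertex-disjointness; the outcome is a set of $r$ vertex-disjoint $T$--$\{x_1,\dots,x_r\}$ paths. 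This is exactly where the structure of gammoids (rather than arbitrary matroids) is used, and it is also the content that one can abstract into a simultaneous-exchange lemma for strongly base orderable matroids, which prepares the extension announced later in the paper.

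Finally I would bound $\rchi(N)=\max_i|S_i|$, and this is where $k$-colorability enters, in the form $|U|\le k\,r_M(U)$ for every $U\subseteq S$. The idea is that the rerouting argument only permits $s$ to be assigned to a path $P_i$ whose relevant segment $W_s$ can legally attach to, while two incompatible attachments to the same segment of $P_i$ exhibit a dependency; grouping the elements of a class $S_i$ according to where their walks meet $P_i$, and applying $|U|\le k\,r_M(U)$ to each such group together with the skeleton elements that it spans, bounds each group by roughly $k$. The two sides from which a walk can reach $P_i$ (the $t_i$-side and the $b_i$-side) then account for the factor two over the optimal transversal-matroid bound, giving $\rchi(N)\le 2k-2$; making the groups and the accompanying rank estimates precise is the only substantial bookkeeping in this part. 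A cleaner but less self-contained route would be to package the size control as a deficiency (Hall-type) statement for an auxiliary bipartite assignment instance whose feasibility condition is furnished directly by $|U|\le k\,r_M(U)$.

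For tightness I would exhibit, for every $k$ and a suitable rank, an explicit $k$-colorable gammoid — built by hanging $k-1$ pendant elements on each side of the interior vertices of a directed-path skeleton — in which the elements sharing an entry vertex are forced into a common class of any rank-preserving partition reduction, so that a class of size $2k-2$ cannot be avoided.
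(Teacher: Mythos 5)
Your proposal takes a genuinely different route from the paper --- you work directly on the digraph linkage representation, while the paper first applies the Ingleton--Piff duality to realize $M$ as the restriction of the dual of a transversal matroid $R$, represents $R$ by a bipartite graph $G=(A,B;E)$ with $S\subseteq A$, and then works with $B_2$-forests in $G$ --- but as written there are two genuine gaps that the paper's machinery is specifically designed to close, and neither is resolved in your sketch.

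The first and more serious gap is the step you yourself flag as the ``main obstacle'': showing that an arbitrary transversal $\{x_1,\dots,x_r\}$ of your classes $S_i=\{b_i\}\cup\phi^{-1}(i)$ is independent. You note correctly that single exchanges $B-b_i+x_i\in\cI$ do not in general combine, and you propose to fix this by choosing the alternating walks $W_s$ ``extremally'' and ``rerouting successively without ever violating vertex-disjointness.'' This is not an argument; it is a restatement of what needs to be proved, and it is not at all clear that any single-element assignment $\phi$ (defined purely by where an individual walk $W_s$ ends) can have this simultaneous-exchange property. The paper sidesteps this entirely: a component $C$ of a $B_2$-forest is a tree in which every $B$-vertex has degree exactly $2$, so $B(C)$ can be matched into $A(C)-a$ for \emph{every} $a\in A(C)$; deleting one $A$-vertex per component therefore always leaves a transversal of $R$, and hence the deleted set is always a basis of the strict gammoid. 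That structural fact gives the reduction property ``for free'' once the forest exists, with no need to combine exchanges --- which is exactly why the $B_2$-forest is the right object. If you want to stay on the linkage side, you would need a concrete simultaneous-exchange lemma (and a choice of $\phi$ compatible with it), and that is essentially as hard as the whole theorem.

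The second gap is the bound $\rchi(N)\le 2k-2$. Your intuition --- ``two sides from which a walk can reach $P_i$'' giving a factor of two over the transversal-matroid bound $k$ --- would at best give $2k$, not $2k-2$, and you do not explain where the $-2$ comes from. In the paper this is the content of Claim~\ref{cl:potential}: one starts from an arbitrary $B_2$-forest and repeatedly decreases a carefully ordered potential function, pushing $S$-vertices from ``large'' components (those with $\ge 2k-1$ elements of $S$) towards ``small'' ones along an alternating structure on components, using the inequality of Claim~\ref{cl:cond} (which is where $k$-colorability enters, via $k$ disjoint bases in an auxiliary matroid $\wtmm$ and Corollary~\ref{cor:ef2}) to guarantee a small component is always reachable. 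The $-2$ comes precisely from the case analysis: a normal component has between $k$ and $2k-2$ elements of $S$, and the rerouting step can only ever create components with at most $2\cdot(k-1)=2k-2$ elements. Your ``grouping elements by where their walks meet $P_i$'' would need to be made quantitative in a comparable way, and the sketch does not show how to get below $2k$.

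The tightness example you describe is in the right spirit --- the paper's extremal instance is a laminar matroid of rank $k-1$ on $k(k-1)$ elements, defined by the family $\{S,S_1,\dots,S_k\}$ with each $|S_i|=k-1$, which forces two of the $S_i$'s into a common class of any partition reduction --- but your ``pendant elements on a path skeleton'' is too vague to verify, and you would also need to check that the instance is a gammoid and that a class of size $2k-2$ is unavoidable in \emph{every} reduction, not just rank-preserving ones.
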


The proof of Theorem~\ref{thm:gammoid} is based on building up an alternating structure on degree-bounded trees in a bipartite graph and is interesting on its own. We believe that this approach works in general, and a similar proof can be given to the following conjecture.

\begin{conj}\label{con:red}
Every $k$-colorable matroid can be reduced to a $2k$-colorable partition matroid.
\end{conj}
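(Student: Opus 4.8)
The plan is to recast the statement as a combinatorial problem on the lattice of flats of $M$. A partition matroid $N$ with classes $S_1,\dots,S_q$ satisfies $N\preceq M$ exactly when $r_N(X)\le r_M(X)$ for all $X\subseteq S$; since $r_N(X)$ is the number of classes meeting $X$ and both sides depend only on $\spa(X)$, this is equivalent to requiring that every flat $F$ of $M$ be met by at most $r_M(F)$ of the classes. A transversal of the classes is independent in $M$, so necessarily $q\le r(M)$, and taking $q=r(M)$ gives the most classes to spread the ground set over; with that choice the condition says that every rank-$d$ flat is disjoint from at least $r(M)-d$ classes. Consequently, Conjecture~\ref{con:red} follows from the assertion that every $k$-colorable matroid admits a partition of its ground set into classes of size at most $2k$ in which every flat $F$ meets at most $r_M(F)$ of the classes.

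The plan is then to produce such a partition by rebalancing a given one. The ordered-basis construction recalled in the introduction already yields a rank-preserving partition matroid $N_0\preceq_r M$, whose classes are the rank levels $\spa(\{s_i,\dots,s_r\})\setminus\spa(\{s_{i+1},\dots,s_r\})$ of an ordered basis $s_1,\dots,s_r$; these, however, can be far larger than $2k$ (already for uniform matroids). So while some class $S_a$ has more than $2k$ elements, I would look for an alternating sequence of moves: take an element $e_1$ out of $S_a$ and put it into a class $S_{b_1}$, which may force taking some $e_2$ out of $S_{b_1}$ into $S_{b_2}$, and so on, until a class of size less than $2k$ is reached. A single move of $e$ from $S$ into $S'$ should be called \emph{legal} if it preserves the flat condition, which --- unwinding which flats gain or lose a class when $e$ is moved --- means that for every flat $F$ with $e\in F$, $|S\cap F|\ge 2$ and $S'\cap F=\emptyset$, the flat $F$ currently meets strictly fewer than $r_M(F)$ classes. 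A legal alternating sequence shrinks $S_a$ by one, creates no new oversized class and keeps $N\preceq M$, so repeating it proves the conjecture --- provided such a sequence always exists. To locate one I would form an auxiliary bipartite-like graph on the classes with an edge for each legal move and run an augmenting search from the oversized classes towards the classes of size below $2k$, which is the analogue of the degree-bounded trees used in the proof of Theorem~\ref{thm:gammoid}.

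The crux --- and the step I expect to be the real difficulty --- is showing that this search never gets stuck: if no legal alternating sequence reaches a small class, then $M$ cannot be $k$-colorable. Writing $R$ for the set of elements reachable from the oversized class by legal moves, one has to argue that $R$ is ``closed'' in the appropriate sense and then count, deriving $|R|>k\cdot r_M(R)$, which contradicts $k$-colorability via the Nash--Williams/Edmonds covering condition. In the gammoid case this counting takes place on the linkage bipartite graph, where the degree bounds feed straight into the inequality; for a general matroid there is no such ambient graph, so the deficiency argument must be run intrinsically on the lattice of flats (or through a principal-partition-type decomposition of $M$), and it is precisely this ``stuck $\Rightarrow$ deficiency'' implication, in full generality, that remains to be supplied. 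As an alternative one might induct on $k$: delete a color class $I_k$, apply the hypothesis to the $(k-1)$-colorable matroid $M\setminus I_k$, and reinsert the elements of $I_k$ one by one, enlarging a suitable class by at most two each time; but finding, for each inserted element, a class that can absorb it --- possibly only after some rebalancing, and in particular when $r(M\setminus I_k)<|I_k|$ so that the old classes no longer match the new ground set in number --- leads back to the same obstacle.
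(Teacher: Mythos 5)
The statement you are attempting is Conjecture~\ref{con:red}, which the paper leaves open: no proof is offered, only the remark following Theorem~\ref{thm:gammoid} that the authors ``believe'' the alternating-structure technique from the gammoid proof can be made to work in general. So there is no paper proof to compare your attempt against, and the task is to assess whether your sketch closes the gap on its own.

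It does not, and you say so yourself. The setup is sound: the reformulation of $N\preceq M$ as ``every flat $F$ of $M$ is met by at most $r_M(F)$ classes'' is correct, since $r_N(X)\le r_N(\spa(X))$ means it suffices to verify $r_N\le r_M$ on flats; your accounting of which flats can newly become violated when an element is moved from one class to another is also correct; and an augmenting search from oversized towards undersized classes is exactly the right analogue of the alternating structure on $B_2$-forest components used in Section~\ref{sec:gammoid}. But the step you flag as ``the crux'' --- that a stuck search forces a set $R$ with $|R|>k\cdot r_M(R)$ --- is the entire content of the conjecture, and nothing in your sketch supplies it. In the gammoid case this deficiency count is extracted from the representing bipartite graph via Claim~\ref{cl:cond}, and the potential function is what guarantees the augmentation process makes net progress rather than cycling; without an ambient graph it is unclear how to define ``reachable'' so that the reachable set is closed under $M$-spans in a way that yields the Nash-Williams/Edmonds violation, and your inductive alternative runs into the same obstacle, as you acknowledge. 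What you have is a plausible plan consistent with the authors' stated intuition, not a proof, and the missing implication is precisely where the difficulty of the open problem lives.
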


Recall that the problem proposed by Kir\'aly asks for the existence of a constant $c$ such that if the coloring number $\rchi(M_1,M_2)$ is $k$, then the list coloring number $\rchi_\ell(M_1,M_2)$ is at most $c\cdot k$. Conjecture~\ref{con:red} and Theorem~\ref{thm:galvin} would imply $\rchi_\ell(M_1,M_2)\leq\rchi_\ell(N_1,N_2)=\max\{\rchi(N_1),\rchi(N_2)\}\leq2\max\{\rchi(M_1),\rchi(M_2)\}$, where $N_1$ and $N_2$ are the reductions of $M_1$ and $M_2$, respectively, provided by the conjecture. Therefore we would get a value of $2$ for the constant $c$. By the same idea, Conjecture~\ref{con:red} and K\H{o}nig's classical theorem together would immediately provide a new, hopefully algorithmic proof of Theorem~\ref{thm:ab}. 

We extend our results to broader classes of matroids by showing the following connection between truncation and reducibility.

\begin{restatable}{thm}{thmtruncation}
\label{thm:truncation}
The family of matroids $M$ that can be reduced to a $2\rchi(M)$-colorable partition matroid is closed for truncation.
\end{restatable}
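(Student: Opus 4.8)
The plan is to reduce to the case of a single truncation: since truncations compose, it suffices to show that if $M=(S,\cI)$ lies in the family then so does its truncation $M'=(S,\cI')$ to an arbitrary rank $t$, where $\cI'=\{X\in\cI:|X|\le t\}$; we may assume $1\le t<r:=r_M(S)$, as otherwise $M'=M$ and there is nothing to prove. The decisive elementary observation is that, for a partition matroid $N'$ on $S$, one has $N'\preceq M'$ as soon as $N'\preceq M$ and $r_{N'}(S)\le t$: every set independent in $N'$ is then independent in $M$ and has at most $r_{N'}(S)\le t$ elements, hence is independent in $M'$. So the task becomes: given the partition matroid $N$ guaranteed by the hypothesis, with $N\preceq M$ and $\rchi(N)\le 2\rchi(M)$, produce a partition matroid $N'$ with $N'\preceq M$, at most $t$ classes, and $\rchi(N')\le 2\rchi(M')$. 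Throughout I would use the covering formula $\rchi(L)=\max_{\emptyset\ne F\subseteq S}\lceil|F|/r_L(F)\rceil$ of Edmonds and Nash--Williams.

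Let the classes of $N$ be the nonempty sets $S_1,\dots,S_q$ with $a_i:=|S_i|$ and $a_1\ge\dots\ge a_q$, so $\rchi(N)=a_1\le 2\rchi(M)$; since picking one element from each $S_i$ yields a set independent in $N$, hence in $M$, we also have $q=r_N(S)\le r$. If $q\le t$, take $N'=N$: it has at most $t$ classes, $N'\preceq M$, and $\rchi(N')=\rchi(N)\le 2\rchi(M)\le 2\rchi(M')$, the last inequality holding because $r_{M'}(F)=\min(r_M(F),t)\le r_M(F)$ for every $F$, so $\rchi(M')\ge\rchi(M)$ by the covering formula. If $q>t$, I would merge the classes into exactly $t$ groups $G_1,\dots,G_t$ by the greedy (longest-processing-time) rule: set $G_i:=S_i$ for $i=1,\dots,t$, and then, for $i=t+1,\dots,q$ in turn, add $S_i$ to a currently smallest group. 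Let $N'$ be the partition matroid of $G_1,\dots,G_t$; it has $t$ classes, and $N'\preceq N\preceq M$ because $\{G_1,\dots,G_t\}$ coarsens $\{S_1,\dots,S_q\}$ and coarsening a partition only shrinks its family of independent sets.

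The only estimate needed is $\rchi(N')=\max_j|G_j|\le 2\rchi(M')$, and it rests on two lower bounds for $\rchi(M')$ from the covering formula. First, $\rchi(M')\ge\lceil|S|/t\rceil$, obtained from $F=S$ (which has rank $r>t$ in $M$, so rank $t$ in $M'$). Second, since $q\ge t+1$, taking $F=S_1\cup\dots\cup S_{t+1}$ — a set containing a $(t+1)$-element set independent in $M$, hence of rank at least $t+1$ in $M$ — gives $\rchi(M')\ge\lceil(a_1+\dots+a_{t+1})/t\rceil\ge a_{t+1}$. Now let $G$ be a largest group of $N'$ and let $S_i$ be the last class placed into it. If $i\le t$, then no class with index $>t$ entered $G$, so $G=S_i$ and $|G|=a_i\le a_1\le 2\rchi(M)\le 2\rchi(M')$. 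If $i>t$, then just before $S_i$ was inserted $G$ was a smallest group, hence had at most $|S|/t$ elements, so $|G|\le|S|/t+a_i\le|S|/t+a_{t+1}$; since $|G|$ is an integer this is at most $\lfloor|S|/t\rfloor+a_{t+1}\le 2\rchi(M')$. In either case $N'$ is a $2\rchi(M')$-colorable partition matroid with $N'\preceq M'$, so $M'$ belongs to the family.

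I do not expect a genuine obstacle here; the one point to get right is choosing a lower bound on $\rchi(M')$ strong enough to absorb the size of a merged group. The bound $\rchi(M')\ge a_{t+1}$ coming from the $t+1$ largest classes of $N$ is exactly what is needed, since the cruder $\rchi(M')\ge\lceil|S|/t\rceil$ alone is too weak once the classes of $N$ are large compared with $|S|/t$.
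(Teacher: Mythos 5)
Your proof is correct, and it takes a genuinely different route from the paper's. The paper reduces to the case of a single $(r{-}1)$-truncation (since truncations compose) and, when the partition $N$ has $r$ classes, simply merges the two smallest classes $S_{r-1}$ and $S_r$; a short calculation using $|S_i|\geq(|S_{r-1}|+|S_r|)/2$ for $i\le r-2$ then gives $\rchi(N')<2|S|/(r-1)\leq 2\rchi(M')$. You instead handle the truncation to arbitrary rank $t$ in one shot, merging the classes of $N$ into exactly $t$ groups by the longest-processing-time greedy rule, and you compensate for the extra generality by deriving two lower bounds on $\rchi(M')$ from the Edmonds--Nash-Williams covering formula: $\rchi(M')\geq\lceil|S|/t\rceil$ from $F=S$, and $\rchi(M')\geq a_{t+1}$ from $F=S_1\cup\dots\cup S_{t+1}$. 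For $t=r-1$, LPT merges exactly the two smallest classes, so your construction specializes to the paper's; the paper's argument is shorter per step but implicitly iterates, while yours avoids the iteration at the cost of the slightly more involved bin-packing estimate $|G|\leq\lfloor|S|/t\rfloor+a_{t+1}\leq 2\rchi(M')$. Both arguments are sound, and both establish the same slightly stronger fact noted in the paper's remark (the bound is strict, giving $2\rchi(M')-1$ in the relevant case).
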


\paragraph{Organization}
The rest of the paper is organized as follows. Basic definitions and notation are introduced in Section~\ref{sec:preliminaries}. Results for the transversal, graphic and paving cases are presented in Section~\ref{sec:easy}. The main result of the paper, Theorem~\ref{thm:gammoid} is proved in Section~\ref{sec:gammoid}. The connection between truncation and reducibility is discussed in Section~\ref{sec:trun}. Finally, a more general framework together with some open problems are proposed in Section~\ref{sec:sbo}.  
 
\section{Preliminaries} \label{sec:preliminaries}

For a graph $G=(V,E)$ and a subset $X\subseteq V$ of vertices, the \textbf{set of edges spanned by $X$} is denoted by $E[X]$, while the \textbf{graph spanned by $X$} is denoted by $G[X]$. Given a connected component $K$ of $G$, a \textbf{cut} of $K$ is a subset of edges in $E[K]$ whose deletion disconnects $K$. The component is \textbf{$k$-edge-connected} if the minimum size of a cut in $K$ is at least $k$. The graphs obtained by deleting a subset $X\subseteq V$ of vertices or a subset $F\subseteq E$ of edges are denoted by $G-X$ and $G-F$, respectively. The \textbf{degree} of a vertex $v$ with respect to $F\subseteq E$ is denoted by $d_F(v)$. The \textbf{symmetric difference} of two sets $P,Q$ is denoted by $P\triangle Q = (P - Q) \cup (Q - P)$.

Let $G=(A,B;E)$ be a bipartite graph and $F\subseteq E$ be a subset of edges. For a set $X\subseteq A$, the \textbf{set of neighbors of $X$ with respect to $F$} is denoted by $N_F(X)$, that is, $N_F(X)=\{b\in B:\ \text{there exists an edge $ab\in F$ with $a\in X$}\}$. We will drop the subscript $F$ when $F$ is the whole edge set. We denote the \textbf{set of vertices in $X$ incident to edges in $F$} by $X(F)$. A forest (or tree) $F\subseteq E$ is a \textbf{$B_2$-forest} (or \textbf{$B_2$-tree}, respectively) if $d_F(b)=2$ for every $b\in B$. The existence of a $B_2$-forest was characterized by Lov\'asz~\cite{lovasz1970generalization}.

\begin{thm}[Lov\'asz]\label{thm:lovasz}
Let $G=(A,B;E)$ be a bipartite graph. Then there exists a $B_2$-forest in $G$ if and only if the strong Hall condition holds for every non-empty subset of $B$, that is,
\begin{equation*}
    |N(X)|\geq |X|+1\quad\text{for all $\emptyset\neq X\subseteq B$}.
\end{equation*}
\end{thm}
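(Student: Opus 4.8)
The plan is to prove the two directions separately, establishing the nontrivial (sufficiency) direction by reducing the existence of a $B_2$-forest to a matroid intersection problem. \textbf{Necessity} is the easy direction: if $F$ is a $B_2$-forest and $\emptyset\neq X\subseteq B$, then the edges of $F$ incident to $X$ number exactly $2|X|$, they form a subforest of $F$, and they span precisely the vertex set $X\cup N_F(X)$; since a forest on $v$ vertices has at most $v-1$ edges, $2|X|\le|X|+|N_F(X)|-1$, whence $|N(X)|\ge|N_F(X)|\ge|X|+1$.

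For \textbf{sufficiency}, I would first reformulate what a $B_2$-forest is. Since $G$ is bipartite, in any subgraph $F$ with $d_F(b)=2$ for every $b\in B$ each $b$ is an internal vertex of degree two; suppressing all vertices of $B$ turns $F$ into the graph $H^{\ast}$ on $A$ containing, for each $b\in B$, the single edge joining the two $F$-neighbours of $b$, and one checks easily that $F$ is a forest if and only if $H^{\ast}$ is one (each cycle of $F$ projects to a cycle of $H^{\ast}$, and conversely). Hence constructing a $B_2$-forest amounts to choosing, for every $b\in B$, an unordered pair $P_b$ of two distinct neighbours of $b$ such that the edges $\{P_b:b\in B\}$ form a forest on $A$; such pairs exist since the strong Hall condition applied to singletons gives $|N(b)|\ge2$.

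I would then phrase this as matroid intersection on the ground set $T=\{(b,\{a,a'\}):b\in B,\ a,a'\in N(b),\ a\neq a'\}$, using the partition matroid $M_1$ on $T$ whose classes are indexed by $b\in B$ (the class of $b$ being all pairs attached to $b$), and the matroid $M_2$ which is the graphic matroid of the graph on vertex set $A$ whose edges are the elements of $T$, the element $(b,\{a,a'\})$ being an edge between $a$ and $a'$ (parallel edges allowed). A common independent set of size $|B|$ is exactly a family $\{(b,P_b):b\in B\}$ as above and therefore yields a $B_2$-forest, and conversely; so, since $r_1(T)=|B|$, it suffices by the matroid intersection theorem to show $r_1(U)+r_2(T\setminus U)\ge|B|$ for every $U\subseteq T$ (this then forces the maximum common independent set to have size $|B|$).

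To verify this, fix $U$, let $B_1$ be the set of vertices whose class meets $U$ and $B_2=B\setminus B_1$, so $r_1(U)=|B_1|$ and $T\setminus U$ contains every element attached to a vertex of $B_2$; hence $r_2(T\setminus U)\ge r_2(\{\text{pairs attached to }B_2\})$, which equals $|N(B_2)|-c$, where $H$ is the graph on vertex set $N(B_2)$ in which each $N(b)$, $b\in B_2$, is a clique, and $c$ is the number of connected components of $H$. The crux is the inequality $|N(B_2)|\ge|B_2|+c$: grouping the vertices of $B_2$ by the component of $H$ containing $N(b)$ (well defined, since $N(b)$ is connected in $H$ and, by strong Hall, nonempty) writes $B_2$ as a disjoint union of $c$ nonempty sets $B_2^{(1)},\dots,B_2^{(c)}$ with $N(B_2^{(i)})$ equal to the $i$-th component of $H$, and applying the strong Hall condition to each $B_2^{(i)}$ gives $|N(B_2^{(i)})|\ge|B_2^{(i)}|+1$, which sums to $|N(B_2)|\ge|B_2|+c$. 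Thus $r_2(T\setminus U)\ge|B_2|=|B|-r_1(U)$, as needed. I expect this last step to be the substantive part — recognising that the useful lower bound on $r_2$ is the rank of the clique graph $H$, and that its component structure is exactly what the strong Hall condition pins down; the rest is routine bookkeeping. (A direct induction on $|B|$, contracting a minimal tight set $Y$ with $|N(Y)|=|Y|+1$ to a single vertex, also works, but the case in which no proper tight set exists is fiddlier, so I would prefer the matroid intersection route.)
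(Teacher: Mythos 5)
The paper does not prove Theorem~\ref{thm:lovasz}; it is stated as a known result with a citation to Lov\'asz~\cite{lovasz1970generalization}, so there is no in-paper proof to compare against. That said, your argument is correct as written. The necessity direction is the standard forest edge-count on the subforest induced by $X\cup N_F(X)$. For sufficiency, the reduction to matroid intersection --- a partition matroid on the ground set $T$ of candidate pairs, indexed by $B$, against the graphic matroid of the ``suppressed'' multigraph on $A$ --- is sound: a common independent set of size $|B|$ must select exactly one pair per $b$ (each class being nonempty because the singleton Hall condition forces $|N(b)|\ge 2$), and its independence in the graphic matroid is precisely the statement that the resulting $H^{*}$, and hence the resulting $F$, is a forest. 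The one non-routine step, verifying $r_1(U)+r_2(T\setminus U)\ge |B|$ via the inequality $|N(B_2)|\ge |B_2|+c$, is handled correctly: since each $N(b)$ is a clique and thus lies in a single component of the union-of-cliques graph $H$, partitioning $B_2$ by components and applying the strong Hall condition to each nonempty part gives exactly the required bound, with the edge case $B_2=\emptyset$ degenerating harmlessly. This is a clean and valid route to the theorem.
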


Matroids were introduced by Whitney \cite{whitney1992abstract} and independently by Nakasawa \cite{nishimura2009lost} as abstract generalizations of linear independence in vector spaces. A matroid $M$ is a pair $(S,\cI)$ where $S$ is the \textbf{ground set} of the matroid and $\cI\subseteq 2^S$ is the family of \textbf{independent sets} that satisfy the following, so-called \textbf{independence axioms}: (I1) $\emptyset\in\cI$, (I2) $X\subseteq Y\in \cI\Rightarrow X\in\cI$, and (I3) $X,Y\in\cI,\ |X|<|Y|\Rightarrow\exists e\in Y-X\ \text{s.t.}\ X+e\in\cI$.
The \textbf{rank} of a set $X\subseteq S$ is the maximum size of an independent subset of $X$ and is denoted by $r_M(X)$. The maximal independent sets of $M$ are called \textbf{bases}. A \textbf{cut} is an inclusionwise minimal set $X\subseteq S$ that intersects every basis. A \textbf{loop} is an element that is non-independent on its own. Two non-loop elements $e,f\in S$ are \textbf{parallel} if $\{e,f\}$ is non-independent. Given a matroid $M=(S,\cI)$, its \textbf{restriction} to a subset $S'\subseteq S$ is the matroid $M|_{S'}=(S',\cI')$ where $\cI'=\{I\in\cI:\ I\subseteq S'\}$. \textbf{Adding a parallel copy} of an element $s\in S$ results in a matroid $M'=(S',\cI')$ on ground set $S'=S+s'$ where $\cI'=\{X\subseteq S':\ \text{either}\ X\in\cI,\ \text{or}\ s\notin X,\ s'\in X\ \text{and}\ X-s'+s\in\cI\}$. The \textbf{dual} of $M$ is the matroid $M^*=(S,\cI^*)$ where $\cI^*=\{X\subseteq S:\ S-X\ \text{contains a basis of $M$}\}$. The \textbf{$k$-truncation} of a matroid $M=(S,\cI)$ is a matroid $(S,\cI_k)$ with $\cI_k=\{X\in\cI:|X|\leq k\}$. We denote the $k$-truncation of $M$ by $(M)_{k}$. The \textbf{direct sum} $M_1\oplus M_2$ of matroids $M_1=(S_1,\cI_1)$ and $M_2=(S_2,\cI_2)$ on disjoint ground sets is a matroid $M=(S_1\cup S_2,\mathcal{I})$ whose independent sets are the disjoint unions of an independent set of $M_1$ and an independent set of $M_2$. 
The \textbf{sum} $M_1+M_2$ of $M_1=(S,\cI_1)$ and $M_2=(S,\cI_2)$ on the same ground set is a matroid $M=(S,\cI)$ whose independent sets are the disjoint unions of an independent set of $M_1$ and an independent set of $M_2$. 

The rank function of the sum of $k$ matroids was characterized by Edmonds and Fulkerson \cite{edmonds1965transversals}. 

\begin{thm}[Edmonds and Fulkerson]
Let $M_1=(S,\cI_1),\dots,M_k=(S,\cI_k)$ be matroids on the same ground set $S$ with rank functions $r_1,\dots,r_k$, respectively. Then the maximum size of an independent set of $M_1+\dots+M_k$ is $\min\{\sum_{i=1}^k r_i(X)+|S-X|:\ X\subseteq S\}$.
\end{thm}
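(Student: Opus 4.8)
The plan is to prove the two inequalities separately: the bound $\le$ by a direct counting argument, and the bound $\ge$ by transforming the matroid union instance into a matroid intersection instance whose optimum I can read off from the matroid intersection theorem.

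\emph{Upper bound.} Write $M = M_1 + \dots + M_k$ and let $I$ be any independent set of $M$, so $I = I_1 \cup \dots \cup I_k$ with the $I_j$ pairwise disjoint and $I_j \in \cI_j$. Fix $X \subseteq S$. Each $I_j \cap X$ is an independent subset of $X$ in $M_j$, hence $|I_j \cap X| \le r_j(X)$; summing over $j$ and using disjointness, $|I \cap X| \le \sum_{j=1}^k r_j(X)$, while trivially $|I \setminus X| \le |S - X|$. Adding these gives $|I| \le \sum_{j=1}^k r_j(X) + |S - X|$, and taking the minimum over $X$ bounds $r_M(S)$ from above by the claimed quantity.

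\emph{Lower bound via matroid intersection.} Let $\tilde S$ be the disjoint union of $k$ copies $S \times \{1\}, \dots, S \times \{k\}$ of $S$, let $N = M_1 \oplus \dots \oplus M_k$ be the direct sum living on $\tilde S$ (with $M_j$ on the $j$-th copy), and let $P$ be the partition matroid on $\tilde S$ whose classes are the fibers $F_s = \{(s,1), \dots, (s,k)\}$, $s \in S$, each with upper bound $1$. I claim $r_M(S)$ equals the maximum size of a common independent set of $N$ and $P$. Indeed, the projection $\pi\colon (s,j) \mapsto s$ carries any common independent set $\tilde I$ to a set $\pi(\tilde I) \in \cI(M)$ of the same size: $P$-independence makes $\pi$ injective on $\tilde I$, and $\tilde I \cap (S \times \{j\})$ projects to some $I_j \in \cI_j$, the $I_j$ being pairwise disjoint again by $P$-independence, so $\pi(\tilde I) = I_1 \cup \dots \cup I_k \in \cI(M)$. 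Conversely, a decomposition $I = I_1 \cup \dots \cup I_k$ lifts to the common independent set $\tilde I = \bigcup_j (I_j \times \{j\})$ of the same size. By the matroid intersection theorem, $r_M(S) = \min_{\tilde X \subseteq \tilde S}\bigl(r_N(\tilde X) + r_P(\tilde S - \tilde X)\bigr)$. Setting $X_j = \{s \in S : (s,j) \in \tilde X\}$, the direct-sum rank is $r_N(\tilde X) = \sum_{j=1}^k r_j(X_j)$, while $r_P(\tilde S - \tilde X)$ counts the fibers met by $\tilde S - \tilde X$, i.e.\ the elements $s$ for which some copy $(s,j)$ is absent from $\tilde X$, which number $|S| - \bigl|\bigcap_{j=1}^k X_j\bigr|$. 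Hence $r_M(S) = \min_{X_1, \dots, X_k \subseteq S}\bigl(\sum_{j=1}^k r_j(X_j) + |S| - |\bigcap_j X_j|\bigr)$. Finally, for any choice of the $X_j$, put $X = \bigcap_j X_j \subseteq X_j$: monotonicity of the $r_j$ gives $\sum_j r_j(X) \le \sum_j r_j(X_j)$ while $|S| - |X| = |S| - |\bigcap_j X_j|$, so the minimum is attained when all $X_j$ coincide and equals $\min_{X \subseteq S}\bigl(\sum_{j=1}^k r_j(X) + |S - X|\bigr)$, as required.

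\emph{Main obstacle.} The only substantial ingredient is the matroid intersection theorem; everything else (the direct-sum/fiber-partition reduction, the two rank computations, and the monotonicity step collapsing $X_1, \dots, X_k$ into a single $X$) is routine. If one wishes to avoid citing matroid intersection, the alternative is a direct augmenting-path argument: start from a maximum independent set $I = I_1 \cup \dots \cup I_k$ of $M$, form the exchange digraph on $S$ with an arc $s \to t$ whenever some $M_j$ admits the swap $(I_j - t) + s \in \cI_j$ with $t \in I_j \not\ni s$, and show that either a vertex of $S \setminus I$ reaches along a directed path a vertex insertable into some $I_j$ — contradicting maximality after shifting along the path — or else the set of vertices not reachable from $S \setminus I$ is a tight choice of $X$; this reroutes the work of the matroid intersection proof into the union setting and is the technically delicate part.
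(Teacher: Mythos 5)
The paper cites this theorem from Edmonds and Fulkerson without giving a proof, so there is no in-paper argument to compare against. On its own merits, your proof is correct. The upper bound is the standard counting argument. For the lower bound, the reduction to matroid intersection is clean: the fibers $F_s$ give the partition matroid $P$, the direct sum $N$ on the $k$-fold disjoint copy of $S$ captures the individual matroids, and the size-preserving correspondence between common independent sets of $(N,P)$ and independent sets of $M_1+\dots+M_k$ is exactly right. The rank computations $r_N(\tilde X)=\sum_j r_j(X_j)$ and $r_P(\tilde S-\tilde X)=|S|-|\bigcap_j X_j|$ are both correct, and the final monotonicity step showing that the minimizer can be taken with $X_1=\dots=X_k=\bigcap_j X_j$ is valid and gives the stated formula. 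This matroid-intersection route is one of the two classical proofs of the matroid union theorem; the other (which you sketch under ``Main obstacle'') is the direct augmenting-path argument in the exchange digraph, closer to Edmonds and Fulkerson's original treatment. Your version has the advantage of outsourcing all the combinatorial work to the matroid intersection min-max theorem at the cost of assuming that result; either is a legitimate proof.
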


As a corollary, we get a characterization for the partitionability of the ground set into $k$ independent sets in a matroid.

\begin{cor}\label{cor:ef}
Let $M=(S,\cI)$ be a matroid with rank function $r$. Then $S$ can be partitioned into $k$ independent sets if and only if $|X|\leq k\cdot r(X)$ holds for every $X\subseteq S$. 
\end{cor}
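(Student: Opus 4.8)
The plan is to deduce the corollary directly from the Edmonds--Fulkerson theorem stated above, by specializing all $k$ matroids to $M$. First I would observe that $S$ can be partitioned into $k$ independent sets of $M$ precisely when $S$ is itself an independent set of the $k$-fold sum $M+\dots+M$ ($k$ copies). Indeed, a trivial induction on $k$ shows that the independent sets of $M+\dots+M$ are exactly the sets that decompose as a disjoint union of $k$ members of $\cI$ (empty parts allowed, which is harmless: an empty part can always be appended, and conversely empty parts in a decomposition can be absorbed). Hence the desired partition of $S$ exists if and only if the maximum size of an independent set of $M+\dots+M$ equals $|S|$.

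Next I would apply the Edmonds--Fulkerson theorem with $r_1=\dots=r_k=r$: the maximum size of an independent set of $M+\dots+M$ equals $\min\{k\cdot r(X)+|S-X|:\ X\subseteq S\}$. Taking $X=\emptyset$ gives the value $|S|$, so this minimum is always at most $|S|$; therefore it equals $|S|$ if and only if $k\cdot r(X)+|S-X|\geq |S|$ for every $X\subseteq S$. Since $|S|=|X|+|S-X|$, this rearranges to $|X|\leq k\cdot r(X)$ for all $X\subseteq S$. Combining this with the first paragraph yields the claimed equivalence.

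There is essentially no obstacle here; the only point deserving a word of care is the bookkeeping about empty parts and the (routine) induction identifying the independent sets of an iterated sum $M+\dots+M$ with disjoint unions of $k$ independent sets of $M$. In fact only the ``if'' direction genuinely needs the Edmonds--Fulkerson theorem: for the ``only if'' direction one can argue directly that if $S=I_1\cup\dots\cup I_k$ with each $I_j\in\cI$, then for any $X\subseteq S$ we have $|X|=\sum_{j=1}^k |X\cap I_j|\leq\sum_{j=1}^k r(X)=k\cdot r(X)$, because $X\cap I_j$ is an independent subset of $X$.
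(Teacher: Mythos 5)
Your proof is correct and is exactly the standard derivation the paper intends (the paper states this as an unproved corollary of the Edmonds--Fulkerson theorem): specialize all $k$ matroids to $M$, identify ``$S$ partitions into $k$ independent sets'' with ``$S$ is independent in $M+\dots+M$,'' and rearrange the resulting min-max. The side remark giving a direct counting proof of the ``only if'' direction is also correct.
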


Another corollary is a characterization for the existence of $k$ disjoint bases of a matroid. 

\begin{cor}\label{cor:ef2}
Let $M=(S,\cI)$ be a matroid with rank function $r$. Then $M$ has $k$ pairwise disjoint bases if and only if $|S-X|\geq k\cdot (r(S)-r(X))$ holds for every $X\subseteq S$. 
\end{cor}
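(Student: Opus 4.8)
The plan is to derive this directly from the Edmonds--Fulkerson formula quoted just above, in the same way Corollary~\ref{cor:ef} is obtained. The bridge is the observation that $M$ has $k$ pairwise disjoint bases if and only if the sum $M+\dots+M$ of $k$ copies of $M$ has an independent set of size $k\cdot r(S)$.

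First I would establish this equivalence. For the forward direction, if $B_1,\dots,B_k$ are pairwise disjoint bases of $M$, then $B_1\cup\dots\cup B_k$ is a disjoint union of $k$ independent sets of $M$ of total size $k\cdot r(S)$, hence an independent set of $M+\dots+M$ of that size. Conversely, by definition of the sum, any independent set $I$ of $M+\dots+M$ decomposes as a disjoint union $I=I_1\cup\dots\cup I_k$ with each $I_j\in\cI$, so $|I_j|\leq r(S)$; if $|I|=k\cdot r(S)$, then equality must hold termwise, so each $I_j$ is a basis of $M$ and these $k$ bases are pairwise disjoint.

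Next I would apply the Edmonds--Fulkerson theorem with $M_1=\dots=M_k=M$, so that $r_1=\dots=r_k=r$: the maximum size of an independent set of $M+\dots+M$ equals $\min\{k\cdot r(X)+|S-X|:\ X\subseteq S\}$. Since each part of such a decomposition has size at most $r(S)$, this maximum never exceeds $k\cdot r(S)$; combined with the equivalence above, $M$ has $k$ pairwise disjoint bases if and only if $\min\{k\cdot r(X)+|S-X|:\ X\subseteq S\}\geq k\cdot r(S)$. The latter holds precisely when $k\cdot r(X)+|S-X|\geq k\cdot r(S)$ for every $X\subseteq S$, which rearranges to $|S-X|\geq k\cdot(r(S)-r(X))$ for every $X\subseteq S$, as claimed.

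There is essentially no obstacle here; the only point requiring a moment's care is the termwise-equality step in the converse direction, namely that $\sum_{j}|I_j|=k\cdot r(S)$ together with $|I_j|\leq r(S)$ forces $|I_j|=r(S)$ for all $j$, which is immediate. As a sanity check, taking $X=\emptyset$ recovers the obvious necessary condition $|S|\geq k\cdot r(S)$, and taking $X=S$ gives the vacuous inequality $0\geq 0$.
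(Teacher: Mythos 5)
Your proof is correct, and it is precisely the derivation the paper intends: the corollary is stated immediately after the Edmonds--Fulkerson theorem without a written proof, and your argument — reducing ``$k$ disjoint bases'' to ``the $k$-fold sum $M+\dots+M$ has an independent set of size $k\cdot r(S)$,'' then plugging into the min formula and rearranging — is the standard route. The termwise-equality step you flag is indeed the only point needing care and you handle it properly.
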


\section{Transversal, graphic and paving matroids}\label{sec:easy}

As a warm-up, we first consider three basic cases: transversal, graphic, and paving matroids. Although the proofs are simple, they might help the reader to get familiar with the notion of reduction. Also, we show the connection to some earlier results such as Gallai colorings of complete graphs. 

\subsection{Transversal matroids} \label{sec:transversal}

Given a bipartite graph $G=(S,T;E)$, a set $X\subseteq S$ is \textbf{matchable} if there is a matching of $G$ covering $X$. The matchable sets satisfy the independence axioms; the matroid obtained this way is a called a \textbf{transversal matroid}. It is an easy exercise to show that the size of $T$ can be chosen to be $r$ where $r$ denotes the rank of the matroid (see e.g. \cite{frank2011connections}).  In Section~\ref{sec:gammoid}, we will use that the rank of a subset $X\subseteq S$ in the transversal matroid is $r(X)=\min\{|X|-|Y|+|N(Y)|:\ Y\subseteq X\}$ by the Frobenius-K\H{o}nig-Hall theorem \cite{konig1916graphen,hall1935representatives,frobenius1917zerlegbare}.

\thmtransversal*
\begin{proof}
Let $G=(S,T;E)$ a bipartite graph where $T=\{t_1,\dots,t_r\}$, $r$ being the rank of the transversal matroid on $S$. By assumption, the transversal matroid is $k$-colorable, so there exist $k$ matchings $F_1,\dots,F_k$ covering every vertex in $S$ exactly once. We may assume that none of these matchings is empty. Let $S_i=\bigcup_{j=1}^k N_{F_j}(t_i)$ for $i=1,\dots,r$ (see Figure~\ref{fig:transversal}). Then $S_1\cup\dots \cup S_r$ is a partition of $S$ with classes of size at most $k$. Pick an arbitrary element $s_j\in S_j$ for $j=1,\dots,r$. The edge set $\{t_js_j:\ j=1,\dots,r\}$ shows that the picked elements form a matchable set, hence the partition matroid defined by the partition is a $k$-colorable rank preserving reduction of the transversal matroid.
\end{proof}

\begin{figure}
    \centering
    \includegraphics[width=0.9\textwidth]{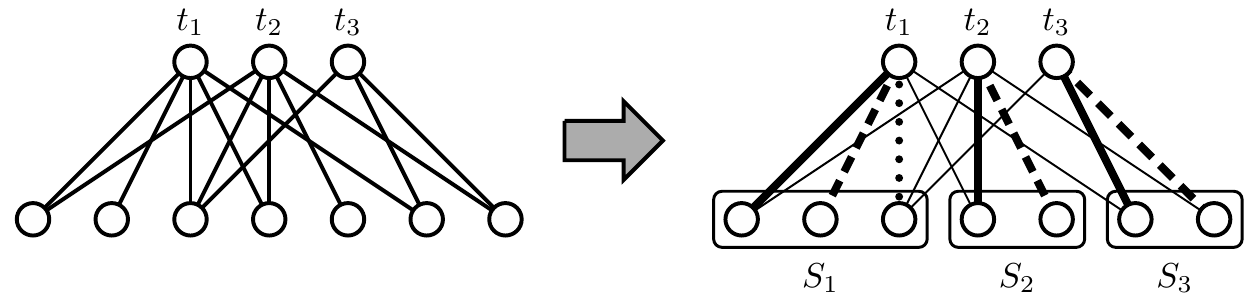}
    \caption{An illustration of the proof of Theorem~\ref{thm:transversal}. Thick, dashed and dotted edges are corresponding to three matchings covering $S$.}
    \label{fig:transversal}
\end{figure}

\subsection{Graphic matroids} \label{sec:graphic}

For a graph $G=(V,E)$, the \textbf{graphic matroid} $M=(E,\cI)$ of $G$ is defined on the edge set by considering a subset $F\subseteq E$ to be independent if it is a forest, that is, $\cI=\{F\subseteq E:\ F\ \text{does not contain a cycle}\}$. Nash-Williams \cite{nash1964decomposition} gave a characterization for $G$ being decomposable into $k$ forests, or in other words, for the graphic matroid of $G$ being $k$-colorable.

\begin{thm}[Nash-Williams] \label{thm:nw}
Given a graph $G=(V,E)$, the edge set can be decomposed into $k$ forests if and only if $|E[X]|\leq k(|X|-1)$ for every non-empty subset $X$ of $V$.
\end{thm}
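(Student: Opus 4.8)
The plan is to deduce the theorem from the matroid-union material already in place, applied to the graphic matroid $M(G)=(E,\cI)$ of $G$. A subset $F\subseteq E$ is independent in $M(G)$ exactly when it is a forest, so decomposing $E$ into $k$ forests is the same as partitioning $E$ into $k$ independent sets of $M(G)$; by Corollary~\ref{cor:ef} this is possible if and only if $|F|\le k\cdot r_{M(G)}(F)$ for every $F\subseteq E$. I would first record the rank formula for the graphic matroid: for $F\subseteq E$, if $X_1,\dots,X_t$ are the vertex sets of the connected components of the subgraph $(V(F),F)$ (where $V(F)$ is the set of vertices incident to $F$; since $M(G)$ is loopless, each $|X_i|\ge 2$), then a maximum forest inside $F$ is a spanning forest of this subgraph, so $r_{M(G)}(F)=|V(F)|-t=\sum_{i=1}^{t}(|X_i|-1)$.

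Necessity is immediate: if $E=F_1\cup\dots\cup F_k$ with each $F_j$ a forest, then for any non-empty $X\subseteq V$ the set $F_j\cap E[X]$ is a forest on vertex set $X$, hence has at most $|X|-1$ edges, and summing over $j$ gives $|E[X]|\le k(|X|-1)$.

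For sufficiency I would verify the condition $|F|\le k\cdot r_{M(G)}(F)$ of Corollary~\ref{cor:ef} for an arbitrary $F\subseteq E$. With $X_1,\dots,X_t$ as above, every edge of $F$ has both endpoints in a single $X_i$, and the sets $E[X_1],\dots,E[X_t]$ are pairwise disjoint, so
\[
|F|=\sum_{i=1}^{t}\bigl|F\cap E[X_i]\bigr|\le\sum_{i=1}^{t}\bigl|E[X_i]\bigr|\le\sum_{i=1}^{t}k\bigl(|X_i|-1\bigr)=k\cdot r_{M(G)}(F),
\]
where the middle inequality uses the hypothesis applied to each non-empty $X_i$ and the last equality is the rank formula. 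Thus the hypothesis of Corollary~\ref{cor:ef} holds, and $E$ decomposes into $k$ forests.

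There is no genuine difficulty here; the one step worth isolating is the observation that it suffices to impose the sparsity bound $|E[X]|\le k(|X|-1)$ on vertex sets of connected subgraphs, because that is exactly what makes the per-component bounds add up to the global inequality $|F|\le k\cdot r_{M(G)}(F)$ required by the corollary. Alternatively, one could give a self-contained proof by running the standard augmenting-path/matroid-union argument on $k$ disjoint copies of $M(G)$ and extracting a violating set $X$ from a stuck state, but invoking Corollary~\ref{cor:ef} is the most economical route given the tools already developed.
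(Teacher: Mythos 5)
Your argument is correct. The paper itself does not prove Theorem~\ref{thm:nw}; it is stated with a citation to Nash-Williams, so there is no internal proof to compare against. That said, your derivation is exactly the canonical route from the matroid-union material already set up in Section~\ref{sec:preliminaries}: apply Corollary~\ref{cor:ef} to the graphic matroid, use the rank formula $r_{M(G)}(F)=\sum_{i}(|X_i|-1)$ over the connected components of $(V(F),F)$, and observe that the per-component sparsity bounds sum to the required inequality. The one step genuinely worth spelling out --- and you do --- is that it suffices to impose the bound on vertex sets $X$ that induce connected subgraphs, since the edges of an arbitrary $F$ split cleanly by component and the $E[X_i]$ are pairwise disjoint; this is what lets the hypothesis on the $X_i$ yield $|F|\le k\cdot r_{M(G)}(F)$ globally. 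The necessity direction is handled correctly as well. In short, the proposal is a complete and correct proof, and it is the one a reader of this paper would reconstruct from Corollary~\ref{cor:ef}, which is presumably why the authors chose to cite rather than reprove.
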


\begin{figure}[h!]
\centering
\includegraphics[width=.65\linewidth]{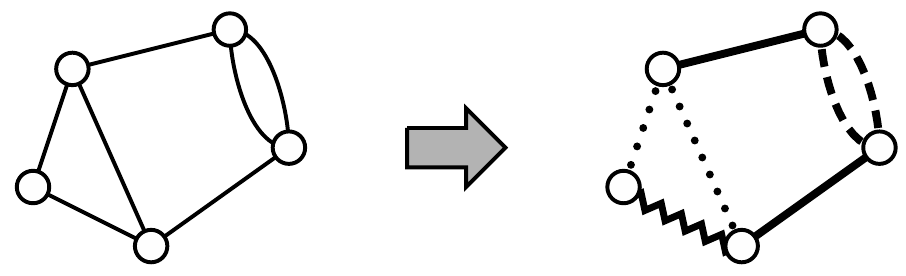}
\caption{An illustration of the proof of Theorem~\ref{thm:graphic}. The graph $G=(V,E)$ can be decomposed into three forests. Let $S_1,S_2,S_3$ and $S_4$ denote the sets of thick, dashed, dotted and zigzag edges, respectively. Then $S_{i+1}$ is a minimum cut in one of the components of $G-\bigcup_{j=1}^i S_j$ for $i=0,\dots,3$. Observe that there is no rainbow colored cycle in $G$ (in which any two edges receive different colors).}
\label{fig:graphic}
\end{figure}

\thmgraphic*
\begin{proof}
Let $G=(V,E)$ be a graph whose graphic matroid $M=(E,\cI)$ is $k$-colorable and let $K\subseteq V$ be a connected component of $G$ of size at least $2$. We claim that there exists a cut in $K$ of size at most $2k-1$. Indeed, if every cut of $K$ contains at least $2k$ edges then $K$ is a $2k$-edge-connected component and so $|E[K]|\geq k|K|$ by counting the edges around each vertex in $K$. By Theorem~\ref{thm:nw}, this contradicts the $k$-colorability of $M$.

Set $S_0:=\emptyset$ and $i:=0$. As long as there exists a connected component $K$ in $G-\bigcup _{j=0}^i S_j$ of size at least $2$, let $S_{i+1}\subseteq E$ be a minimum cut of $K$ (see Figure~\ref{fig:graphic}), and update $i:=i+1$. By the above, $|S_{i+1}|\leq 2k-1$. Let $E=S_1\cup\dots\cup S_q$ denote the partition thus obtained. We claim that the partition matroid corresponding to this partition is a reduction of $M$. In order to see this, we have to show that every cycle of $G$ intersects at least one of the partition classes in at least two elements. Given a cycle $C$, let $i$ be the smallest index with $|S_i\cap C|>0$. Then $C\subseteq\bigcup_{j\geq i}^q S_j$ and $S_i$ is a cut in $\bigcup_{j\geq i}^q S_j$, hence $|S_i\cap C|\geq 2$. As the deletion of $S_i$ increases the number of components of $G-\bigcup _{j=0}^{i-1} S_j$ by exactly one for $i=1,\dots,q$, the rank of the partition matroid thus obtained is the same as the rank of the graphic matroid of $G$, concluding the first half of the theorem.

To show that the given bound is tight, let $G=(V,E)$ be a complete graph on $2k$ vertices. By Nash-Williams' theorem, the coloring number of the graphic matroid of $G$ is $k$. Observe that reducing the graphic matroid of $G$ to a partition matroid is equivalent to coloring the edges of the graph in such a way that there is no cycle whose edges are colored with completely different colors. An edge coloring of a complete graph is called a \textbf{Gallai coloring} if no triangle is colored with three distinct colors,  which is a weaker restriction than the above. Bialostocki,  Dierker and Voxman \cite{bialostocki2001either} showed that every Gallai coloring contains a monochromatic spanning tree. This means that for any reduction of the graphic matroid of $G$ to a partition matroid, there is a partition class of size at least $2k-1$.
\end{proof}

\begin{rem}
Theorem~\ref{thm:graphic} can be proved in a similar way by observing that any graph that can be decomposed into $k$ forests contains a vertex of degree at most $2k-1$. The advantage of the proof based on cuts is twofold: it provides a rank preserving reduction, and it can be straightforwardly extended to arbitrary matroids in the following sense.

\begin{thm}\label{thm:gencut}
If $M=(S,\cI)$ is a matroid so that $M|_{S'}$ has a cut of size at most $k$ for any $S'\subseteq S$, then $M$ can be reduced to a $k$-colorable partition matroid.
\end{thm}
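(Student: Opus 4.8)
The plan is to run exactly the peeling argument used for graphic matroids in the proof of Theorem~\ref{thm:graphic}, with ``minimum cut of a connected component'' replaced by ``a cut of size at most $k$ of a restriction'', which is precisely what the hypothesis provides. First I would recall that a cut of a matroid is the complement of a hyperplane, so that (since every matroid here is loopless) $M|_{S'}$ has a cut if and only if $S'\neq\emptyset$, and that deleting a cut from a matroid drops its rank by exactly one.

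Next I would build the partition greedily. Set $S_0:=\emptyset$ and $i:=0$; as long as $S-\bigcup_{j=0}^{i}S_j\neq\emptyset$, let $S_{i+1}$ be a cut of the restriction $M|_{S-\bigcup_{j=0}^{i}S_j}$ of size at most $k$, which exists by assumption, and update $i:=i+1$. Every cut is a non-empty set, so the remaining ground set strictly shrinks at each step and the procedure terminates with a partition $S=S_1\cup\dots\cup S_q$ into classes of size at most $k$. Let $N=(S,\cJ)$ be the partition matroid associated with this partition; then $\rchi(N)=\max\{|S_i|:i=1,\dots,q\}\leq k$.

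It remains to verify $N\preceq M$, and by the equivalence recorded right after the definition of a partition matroid it suffices to show that every circuit $C$ of $M$ meets some $S_i$ in at least two elements. Let $i$ be the smallest index with $C\cap S_i\neq\emptyset$. Then $C\subseteq S':=S-\bigcup_{j=0}^{i-1}S_j$, so $C$ is a circuit of $M|_{S'}$, while $S_i$ was chosen as a cut, i.e.\ a cocircuit, of exactly this restriction $M|_{S'}$. The orthogonality of circuits and cocircuits --- a circuit and a cocircuit of the same matroid never intersect in exactly one element --- then yields $|C\cap S_i|\geq 2$, as desired.

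There is no genuine obstacle here beyond bookkeeping; the single point that needs care is to apply circuit--cocircuit orthogonality inside the restriction $M|_{S'}$ in which $S_i$ was actually selected as a cut, rather than in $M$ itself, where $S_i$ need not be a cocircuit at all. I would add as a remark that, since each deleted cut lowers the rank by exactly one, the same argument in fact produces a \emph{rank preserving} reduction, exactly as in the graphic case, although the statement does not ask for this.
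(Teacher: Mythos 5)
Your proof is correct and follows exactly the approach the paper indicates: the paper states only that "the proof of Theorem~\ref{thm:gencut} is based on the fact that the intersection of a circuit and a cut in a matroid cannot have size $1$," and your greedy peeling of cocircuits from successive restrictions, together with applying orthogonality inside the restriction $M|_{S'}$ in which $S_i$ was chosen, is precisely the intended lift of the graphic-matroid argument. Your added observation that each step drops the rank by exactly one, so the reduction is rank preserving, is also correct and matches the behavior noted in Theorem~\ref{thm:graphic}.
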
 

The proof of Theorem~\ref{thm:gencut} is based on the fact that the intersection of a circuit and a cut in a matroid cannot have size $1$.
\end{rem}

\subsection{Paving matroids} \label{sec:paving}

A matroid $M=(S,\cI)$ of rank $r$ is called \textbf{paving} if every set of size at most $r-1$ is independent, or in other words, every circuit of the matroid has size at least $r$. 

 Although paving matroids have a very restricted structure and so are quite well-understood, they are playing a fundamental role among matroids. After Blackburn, Crapo and Higgs \cite{blackburn1973catalogue} enumerated all matroids up to eight elements, it was observed that most of these matroids are paving matroids. Crapo and Rota suggested that perhaps paving matroids dominate the enumeration of matroids \cite{crapo1970foundations}. This statement was made precise by Mayhew, Newman, Welsh and Whittle in \cite{mayhew2011asymptotic}. They conjectured that the asymptotic fraction of matroids on $n$ elements that are paving tends to $1$ as $n$ tends to infinity. Although this remains open, a similar statement on the asymptotic ratio of the logarithms of the numbers of matroids and sparse paving matroids has been proven in \cite{pendavingh2015number}.

First we consider paving matroids of arbitrary rank.

\thmpaving*
\begin{proof}
Consider any partition $S=S_1\cup\dots\cup S_{r-1}$ into $r-1$ parts of almost equal sizes, that is, $|S_i|=\lfloor|S|/(r-1)\rfloor$ or $|S_i|=\lceil|S|/(r-1)\rceil$ for $i=1,\dots,r-1$. As $M$ is $k$-colorable, we have $|S|\leq kr$ and so $|S_i|\leq \lceil kr/(r-1)\rceil$. As $M$ is paving, any set of size at most $r-1$ is independent, hence the partition matroid $N$ defined by the partition $S_1\cup\dots\cup S_{r-1}$ is a $\lceil kr/(r-1)\rceil$-colorable reduction of $M$, as required. 
\end{proof}

The bound on the coloring number can be improved when $r=2$, and the reduction can be chosen to be rank preserving.

\thmpavingtwo*
\begin{proof}
Let $S=T_1 \cup \dots \cup T_q$ denote the partition of the ground set into classes of parallel elements, that is, for every $x \in T_i$ and $y \in T_j$ the set $\{x,y\}$ is independent if and only if $i\neq j$. We may assume that $|T_1| \ge \dots \ge |T_q|$. Note that $|T_1|\leq k$ as the matroid is $k$-colorable. Let $i$ denote the smallest index such that $|T_1 \cup \dots \cup T_i| \ge |S|/3$ holds, and consider the partition $S = S_1 \cup S_2$ where $S_1 = T_1 \cup \dots \cup T_i$ and $S_2 = T_{i+1} \cup \dots \cup T_q$. If $i=1$, then $|S_1| = |T_1| \le k$, otherwise \[|S_1| = (|T_1|+\dots+|T_{i-1}|)+|T_i| < \frac{|S|}{3}+|T_i| \le \frac{|S|}{3} + |T_1| < \frac{2|S|}{3} \le \frac{4k}{3},\] where we used that $|S| \le 2k$ holds as $M$ is $k$-colorable and $r = 2$. By the definition of $i$, we have $|S_2| \le 2|S|/3 \le 4k/3$ as well. Thus $\max\{|S_1|, |S_2|\} \le 4k/3$ always holds, hence the partition matroid $N$ defined by the partition $S_1 \cup S_2$ is a $\lfloor 4k/3 \rfloor$-colorable reduction of $M$.

The bound $\lfloor 4k/3 \rfloor$ on the coloring number of $N$ is tight. Let $S$ be a set of size $2k$ and take a partition $S=S_1\cup S_2 \cup S_3$ where $\lceil |S|/3\rceil =|S_1| \ge |S_2| \ge |S_3| = \lfloor |S|/3 \rfloor$. Consider the laminar matroid $M=(S,\cI)$ defined by the laminar family $\{S,S_1,S_2,S_3\}$ where $X \subseteq S$ is independent if and only if $|X| \le 2$ and $|X \cap S_i| \le 1$ for $i=1,2,3$. It is not difficult to see that the coloring number of $M$ is $k$. Suppose that $M$ is reducible to a partition matroid $N$. The rank of $N$ is either 1 or 2, as $M$ has rank 2. In the former case $\rchi(N)=2k$, while in the latter case $N$ is defined by a partition $S=P_1\cup P_2$. Then every $S_i$ is a subset of either $P_1$ or $P_2$, as otherwise there exists two elements $x,y\in S_i$ such that $x \in P_1$ and $y\in P_2$, implying that $\{x,y\}$ is independent in $N$ but dependent in $M$, a contradiction. Thus $P_1$ or $P_2$ contains at least two of the $S_i$'s, and so has size at least $|S_2|+|S_3| =|S|-|S_1| = 2k-\lceil 2k/3 \rceil = \lfloor 4k/3\rfloor$, proving $\rchi(N)\ge \lfloor 4k/3 \rfloor$.
\end{proof}

For the proof of Theorem~\ref{thm:paving3}, we will need two technical statements. The first lemma describes the structure of paving matroids \cite{hartmanis1959lattice,welsh2010matroid,frank2011connections}.

\begin{lem}\label{lem:pav}
Let $r\geq 2$ be an integer and $S$ a set of size at least $r$. Let $\mathcal{H}=\{H_1,\dots,H_q\}$ be a (possibly empty) family of proper subsets of $S$ in which every set $H_i$ has at least $r$ elements and the intersection of any two of them has at most $r-2$ elements. Then the set system $\mathcal{B}_{\mathcal{H}}=\{X\subseteq S:\ |X|=r,\ X\not\subseteq H_i\ \text{for } i=1,\dots,q\}$ forms the set of bases of a paving matroid. Moreover, every paving matroid can be obtained in this form.
\end{lem}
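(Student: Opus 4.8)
The plan is to prove the two assertions of the lemma separately. For the first, I would show that $\mathcal{B}_{\mathcal{H}}$ satisfies the basis axioms and yields a paving matroid. The case $q=0$ is trivial: there $\mathcal{B}_{\mathcal{H}}$ is the family of all $r$-element subsets of $S$, which — as $|S|\ge r$ — is the base family of the uniform matroid $U_{r,|S|}$, a paving matroid. So assume $q\ge 1$. Since every member of $\mathcal{B}_{\mathcal{H}}$ has exactly $r$ elements, it is enough to check nonemptiness and the exchange axiom. For nonemptiness, pick $s\in S\setminus H_1$ (possible since $H_1$ is proper) and an $(r-1)$-subset $Y\subseteq H_1$ (possible since $|H_1|\ge r$); then $Y+s\in\mathcal{B}_{\mathcal{H}}$, because $s\notin H_1$ excludes $Y+s\subseteq H_1$, while $Y+s\subseteq H_i$ with $i\ne 1$ would force $Y\subseteq H_1\cap H_i$, contradicting $|Y|=r-1>r-2$. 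For the exchange axiom, take $B_1,B_2\in\mathcal{B}_{\mathcal{H}}$ and $x\in B_1\setminus B_2$ (so $B_2\setminus B_1\ne\emptyset$ by equicardinality), and suppose toward a contradiction that $(B_1-x)+y\subseteq H_{i(y)}$ for every $y\in B_2\setminus B_1$. As $|B_1-x|=r-1$ and $B_1-x\subseteq H_{i(y)}$ for every such $y$, the intersection hypothesis forces all the indices $i(y)$ to coincide with a single index $i$; then $(B_1-x)\cup(B_2\setminus B_1)\subseteq H_i$, and since $x\notin B_2$ also $B_1\cap B_2\subseteq B_1-x\subseteq H_i$, giving $B_2\subseteq H_i$ — contradicting $B_2\in\mathcal{B}_{\mathcal{H}}$. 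Hence $\mathcal{B}_{\mathcal{H}}$ is the base family of a rank-$r$ matroid $M$. Finally, to see $M$ is paving I would check that every $(r-1)$-set $Y$ is independent: if $Y+s\subseteq H_{i(s)}$ held for every $s\in S\setminus Y$, the same argument applied to the $(r-1)$-set $Y$ forces all these indices equal to one $i$, and the union over $s\in S\setminus Y$ gives $S\subseteq H_i$, contradicting properness of $H_i$; so some $Y+s$ lies in $\mathcal{B}_{\mathcal{H}}$, and $Y$ is independent.

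For the converse, given a paving matroid $M=(S,\cI)$ of rank $r$ with $|S|\ge r$, I would let $\mathcal{H}$ be the set of hyperplanes of $M$ (flats of rank $r-1$) that have at least $r$ elements; each is a proper subset of $S$ since its rank $r-1$ is less than $r$. Two structural facts about paving matroids make the argument run. First, every flat $F$ of rank at most $r-2$ satisfies $|F|=r(F)$: otherwise a basis of $F$ together with one further element of $F$ would be a set of size at most $r-1$, hence independent because $M$ is paving, contradicting that the basis spans $F$. Applying this to $H_i\cap H_j$ — a flat properly contained in the rank-$(r-1)$ flat $H_i$, since distinct flats of equal rank are incomparable, and hence of rank at most $r-2$ — yields $|H_i\cap H_j|\le r-2$. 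Second, $\mathcal{B}_{\mathcal{H}}$ equals the base family of $M$: an independent $r$-set has rank $r$ and so lies in no hyperplane, in particular in no $H_i$; conversely a dependent $r$-set $B$ has rank exactly $r-1$ (it contains an independent $(r-1)$-subset because $M$ is paving), so its closure is a hyperplane of size at least $|B|=r$, i.e.\ a member of $\mathcal{H}$ containing $B$. Thus $M$ is exactly the matroid produced from this $\mathcal{H}$.

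I expect the basis-axiom and paving verifications in the first part to be routine bookkeeping. The step that needs the most care is the converse, and within it the two structural facts just mentioned — that flats of rank at most $r-2$ have size equal to their rank, and that distinct hyperplanes intersect in rank at most $r-2$ — since these are precisely what validate the intersection condition imposed on $\mathcal{H}$ and the identity $\mathcal{B}_{\mathcal{H}}=\mathcal{B}(M)$. One should keep the degenerate cases $q=0$ and $|S|=r$ in mind throughout, though neither causes real difficulty.
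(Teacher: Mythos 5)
The paper does not prove this lemma; it states it as a known structural characterization of paving matroids with citations to Hartmanis, Welsh, and Frank, so there is no in-paper argument to compare yours against. Your proof is correct and is essentially the standard textbook argument. In the forward direction you verify nonemptiness and basis exchange and then show every $(r-1)$-set extends to a basis, with the intersection condition $|H_i\cap H_j|\le r-2$ doing exactly the work of forcing all the obstructing hyperplanes to coincide in each step; the edge cases ($q=0$, a single $y$ in $B_2\setminus B_1$, a single $s$ in $S\setminus Y$) all go through. In the converse you take the hyperplanes of size at least $r$, and the two structural facts you isolate --- that flats of rank at most $r-2$ are independent in a paving matroid, and that distinct hyperplanes meet in a flat of strictly smaller rank --- are exactly what is needed to validate both the intersection hypothesis on $\mathcal{H}$ and the identification $\mathcal{B}_{\mathcal{H}}=\mathcal{B}(M)$. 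The only point worth flagging, though it does not affect correctness, is that you could state explicitly that (B1)+(B2) on an equicardinal family suffice for a matroid; that is standard and you clearly use it. Overall this is a complete and correct proof of a result the paper treats as citation-level background.
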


The next lemma characterizes the coloring number of paving matroids.

\begin{lem}\label{lem:pavrank}
Let $\mathcal{H}=\{H_1,\dots, H_q\}$ be a (possibly empty) family satisfying the conditions of Lemma~\ref{lem:pav}, and let $M=(S,\cI)$ be the paving matroid defined by $\mathcal{H}$. Then 
\[\rchi(M) = \max\left\{\left\lceil\frac{|S|}{r}\right\rceil, \left\lceil\frac{|H_1|}{r-1}\right\rceil, \dots, \left\lceil\frac{|H_q|}{r-1}\right\rceil\right\}.\]
\end{lem}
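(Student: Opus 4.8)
The plan is to read the coloring number off the Edmonds--Fulkerson partition criterion (Corollary~\ref{cor:ef}) once the rank function of $M$ is in hand. Since $r\ge 2$, the matroid $M$ is loopless, so $r_M(X)\ge 1$ for every non-empty $X\subseteq S$; the empty set imposes no constraint in Corollary~\ref{cor:ef}, so that corollary will give
\begin{equation*}
\rchi(M)=\min\bigl\{k:\ |X|\le k\cdot r_M(X)\ \text{for all}\ X\subseteq S\bigr\}=\max\bigl\{\lceil |X|/r_M(X)\rceil:\ \emptyset\neq X\subseteq S\bigr\},
\end{equation*}
the second equality being the elementary fact that rounding up commutes with a finite maximum. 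After this reduction the whole task is to evaluate the right-hand maximum.

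The heart of the argument is therefore to determine $r_M(X)$ for a paving matroid, and I would split into three cases. If $|X|\le r-1$, then $X$ is independent because $M$ is paving, so $r_M(X)=|X|$ and the ratio is $1$. If $|X|\ge r$ and $X\subseteq H_i$ for some $i$, then this $i$ is unique (otherwise $X\subseteq H_i\cap H_j$ forces $|X|\le r-2$), the set $X$ contains no basis since each of its $r$-subsets lies in $H_i$, and every $(r-1)$-subset of $X$ is independent, so $r_M(X)=r-1$. The last case is $|X|\ge r$ with $X$ contained in no $H_i$, where I claim $r_M(X)=r$: take an arbitrary $r$-subset $Y\subseteq X$; if it is a basis we are done, and otherwise $Y\subseteq H_i$ for a unique $i$, so pick $x\in X\setminus H_i$ and any $y\in Y$ and set $Y':=Y-y+x$. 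If $Y'$ were not a basis it would lie in some $H_j$ with $j\neq i$ (as $x\in Y'\setminus H_i$), whence the $(r-1)$-set $Y-y$ would be contained in $H_i\cap H_j$, contradicting $|H_i\cap H_j|\le r-2$; so $Y'$ is a basis and $r_M(X)=r$.

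Finally I would plug the rank function back in. Non-empty sets of size at most $r-1$ contribute only the value $1\le\lceil |S|/r\rceil$ (recall $|S|\ge r$); among sets $X$ with $|X|\ge r$ lying in no $H_i$ the ratio $\lceil |X|/r\rceil$ is largest at $X=S$ (which lies in no $H_i$ since each $H_i$ is a proper subset of $S$), giving $\lceil |S|/r\rceil$; and among sets $X$ with $|X|\ge r$ and $X\subseteq H_i$ the ratio $\lceil |X|/(r-1)\rceil$ is largest at $X=H_i$, giving $\lceil |H_i|/(r-1)\rceil$. Taking the maximum over these families yields the claimed formula; as a sanity check, for $q=0$ it specializes to $\rchi(U_{r,|S|})=\lceil |S|/r\rceil$.

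I expect the only step that is not pure bookkeeping to be the third case of the rank computation --- the statement that a set of size at least $r$ avoiding every member of $\mathcal{H}$ already spans a basis --- but the single exchange move above handles it, so I do not foresee a genuine obstacle.
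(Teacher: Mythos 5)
Your proof is correct and takes essentially the same route as the paper: apply Corollary~\ref{cor:ef} to reduce to $\max\{\lceil |X|/r_M(X)\rceil : \emptyset\ne X\subseteq S\}$, then evaluate via the same three-way case analysis on $X$. The only difference is that you spell out the exchange argument showing $r_M(X)=r$ when $|X|\ge r$ and $X$ lies in no $H_i$, a step the paper leaves implicit; this is a harmless elaboration, not a different approach.
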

\begin{proof}
Corollary~\ref{cor:ef} implies that $\rchi(M) = \max\{\lceil|X|/r(X) \rceil: \emptyset \ne X \subseteq S\}$. Since $r(S)=r$ and $r(H_i)=r-1$ (as every set of size at most $r-1$ is independent), we get that $\rchi(M)\ge \max \{\lceil |S|/r\rceil , \lceil |H_1|/(r-1)\rceil, \dots, \lceil |H_q|/(r-1) \rceil\}$.

To see the reverse inequality, take an arbitrary subset $X\subseteq S$. If $|X|\le r-1$, then $r(X)=|X|$ holds as the matroid is paving, therefore $|X|/r(X)=1$. If $|X|\ge r$ and $X\subseteq H_i$ for some $i$, then $r(X)=r-1$ and so $|X|/r(X) \le |H_i|/(r-1)$. Finally, if $|X|\ge r$ and none of the $H_i$'s contains $X$, then $r(X)=r$ and so $|X|/r(X) \le |S|/r$, proving our claim.
\end{proof}

Now we are ready to prove Theorem~\ref{thm:paving3}.

\thmpavingthree*
\begin{proof}
Let $\mathcal{H}=\{H_1,\dots, H_q\}$ be a (possibly empty) family satisfying the conditions of Lemma~\ref{lem:pav} that defines $M$. Without loss of generality, we may assume that $|H_1| \ge \dots \ge |H_q|$. We distinguish two cases.\\

\noindent \textbf{Case 1.} $|S|/r \le |H_1|/(r-1)$.

Consider any partition $H_1=S_1\cup\dots\cup S_{r-1}$ into $r-1$ parts of almost equal sizes, that is, $|S_i|=\lfloor|H_1|/(r-1)\rfloor$ or $|S_i|=\lceil|H_1|/(r-1)\rceil$ for $i=1,\dots,r-1$, and let $S_r = S-H_1$. Note that none of $S_1, \dots, S_r$ is empty since $H_1$ is a proper subset of $S$ of size at least $r-1$.
Taking any elements $s_1 \in S_1, \dots, s_r \in S_r$ the set $X=\{s_1, \dots, s_r\}$ is independent in $M$ as $X \not \subseteq H_i~(i=1,\dots, q)$ by $|X \cap H_1|=r-1$ and $|H_1\cap H_i| \le r-2 ~ (i=2,\dots, q)$.
Thus the partition matroid $N=(S,\cJ)$ defined by the partition $S_1\cup \dots \cup S_r$ is a rank preserving reduction of $M$.
$N$ is clearly $\rchi(M)$-colorable as $|S_i| \le \lceil |H_1|/(r-1)\rceil = \rchi(M)$ for $i=1,\dots, r-1$ and $|S_r|=|S|-|H_1| \le r|H_1|/(r-1)-|H_1|=|H_1|/(r-1) \le \rchi(M)$.\\

\noindent \textbf{Case 2.} $|S|/r > |H_1|/(r-1)$.

Pick an arbitrary $s\in S$, let $H_{i_1}, \dots, H_{i_l}$ denote the sets of the family $\mathcal{H}$ containing $s$ and let $H'_{j}=H_{i_j}-s$ for $j = 1,\dots, l$. The sets $H'_1, \dots, H'_l$ are disjoint as $|H_i\cap H_j| \le r-2=1$ for $i\ne j$. We may assume that $|H'_1|\ge \dots \ge |H'_l|$. Note that for any set $T\subseteq S-s$ which does not intersect any $H'_j$ properly, the partition $S=\{s\}\cup T \cup (S-T-s)$ defines a partition matroid $N=(S,\cJ)$ which is a reduction of $M$. 

If $|H'_1|+\dots +|H'_l| < |S|/3$, let $T\subseteq S-s$ be a set of size $\lfloor|S|/2\rfloor$ containing $H'_1\cup \dots \cup H'_l$. Then $\rchi(N)=\max\{|T|, |S|-|T|-1\} \le |S|/2 < 2|S|/3 \le 2\rchi(M)$.
If $|H'_1|+\dots+|H'_l|\ge|S|/3$, then let $j$ denote the smallest index such that $|H'_1|+\dots+|H'_j| \ge |S|/3$ and let $T=H'_1\cup \dots \cup H'_j$. If $j=1$, then $|T|=|H'_1|<2|S|/3$ by our assumption $|S|/r > |H_1|/(r-1)$ and $r = 3$. Otherwise $|H_1|<|S|/3$ and so $|T| \le |S|/3+|H'_j| \le |S|/3+|H'_1| < 2|S|/3$. Thus $\rchi(N) = \max\{|T|, |S|-|T|-1\} < 2|S|/3 \le 2\lceil |S|/3\rceil = 2\rchi(M)$.
\end{proof}

\begin{rem}
Note that Case 1 of the proof does not rely on the fact that $r=3$. That is, any paving matroid satisfying the assumption of Case 1 has a rank preserving reduction $N\preceq_r M$ with $\rchi(N)=\rchi(M)$.
\end{rem}

While Theorem~\ref{thm:paving3} provides a rank preserving reduction, Theorem~\ref{thm:paving} gives a better bound on the coloring number of the reduction for $r=3$. The bound $\lceil 3k/2\rceil$ is not necessarily tight. A computer-assisted case checking shows that the tight bound for $k=3$ is $4$, an extremal example being the Fano matroid. However, we show that $\lceil 3k/2\rceil$ is tight for infinitely many values of $k$.

A \textbf{finite projective plane} is a pair $(S,\mathcal{L})$, where $S$ is a finite set of \textbf{points} and $\mathcal{L}\subseteq 2^S$ is the family of \textbf{lines} that satisfies the following axioms: (P1) any two distinct points are on exactly one line, (P2) any two distinct lines have exactly one point in common, (P3) there exists four points, no three of which are collinear. For every projective plane there exists a number $q$ called the \textbf{order}, such that (1) each line in the plane contains $q+1$ points, (2) $q+1$ lines pass through each point of the plane, (3) the plane contains $q^2+q+1$ points and $q^2+q+1$ lines \cite{welsh2010matroid}.

The family of lines satisfies the conditions of Lemma~\ref{lem:pav}, thus every projective plane defines a paving matroid $M=(S,\mathcal{I})$ of rank 3. A partition matroid $N=(S,\mathcal{J})$ is a reduction of $M$ if and only if the coloring of $S$ defined by the partition classes of $N$ satisfies the conditions of the following theorem.

\begin{thm} \label{thm:proj}
Consider any 3-coloring of the points of a projective plane of order $q$ such that each line contains at most 2 colors. Then at least one of the following cases holds:
\begin{enumerate}[(i)]
	\item there exists an empty color class,
	\item there exists a color class of size 1,
	\item one of the color classes is the complement of a line.
\end{enumerate}
\end{thm}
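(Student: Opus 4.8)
\textbf{Proof plan for Theorem~\ref{thm:proj}.}

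The plan is to argue by contradiction: suppose we have a $3$-coloring with color classes $C_1, C_2, C_3$, each of size at least $2$, such that every line meets at most two of the classes, and such that no $C_i$ is the complement of a line. The first step is to exploit a simple but powerful counting fact about projective planes: any two points lie on a unique line, and through any point there pass exactly $q+1$ lines, partitioning the remaining $q^2+q$ points into $q+1$ groups of $q$. I would fix a point $p \in C_1$ and look at how the lines through $p$ interact with $C_2$ and $C_3$. Since a line through $p$ contains $p$ (which is colored $1$), such a line may contain only color $1$ and one other color; in particular no line through $p$ contains both a point of $C_2$ and a point of $C_3$. Hence the lines through $p$ split into three types: those avoiding $C_2 \cup C_3$ entirely (all their non-$p$ points are in $C_1$), those meeting $C_2$ but not $C_3$, and those meeting $C_3$ but not $C_2$. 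This already shows $C_2$ and $C_3$ are "line-separated from the viewpoint of $p$."

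The key structural step is to show that under these hypotheses, one of the classes, say $C_1$, must essentially behave like the complement of a line. Here is the mechanism I would pursue: take $x \in C_2$ and $y \in C_3$; the unique line $\ell_{xy}$ through them contains colors $2$ and $3$, hence no point of $C_1$. So every line joining a point of $C_2$ to a point of $C_3$ misses $C_1$ entirely. Now pick two distinct points $x, x' \in C_2$ (possible since $|C_2| \ge 2$) and a point $y \in C_3$; the lines $\ell_{xy}$ and $\ell_{x'y}$ are distinct lines through $y$ avoiding $C_1$. Varying $y$ over $C_3$ and $x$ over $C_2$, I would count the lines through a fixed point $y \in C_3$ that avoid $C_1$: there are at least $|C_2|$ of them if the points of $C_2$ lie on distinct lines through $y$ — which they do, since two points of $C_2$ on a common line through $y$ would put three colors... no wait, that line has colors $2$ and $3$ only, which is allowed. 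So I need to be more careful. Instead I would count incidences between $C_1$ and lines: each point of $C_1$ lies on $q+1$ lines, none of which is one of the "$C_2 C_3$-lines", and there are $|C_2|\cdot|C_3|$ pairs giving $C_2 C_3$-lines (with multiplicity — a single such line through, say, $a$ points of $C_2$ and $b$ points of $C_3$ is counted $ab$ times). The main obstacle, and the heart of the argument, is to convert these incidence counts into the conclusion that $C_1$ is forced to be a line-complement; I expect this requires showing that the lines missing $C_1$ form a "dual" configuration forcing $|C_1| = q^2$ (i.e., $C_1$ misses exactly $q+1$ points, which must then form a line by the two-color condition applied to those $q+1$ points).

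Concretely, the endgame I anticipate: let $L = S \setminus C_1$, so $L = C_2 \cup C_3$ and $|L| \ge 4$. Every line is either entirely inside $L$ or meets $L$ in at most... hmm, a line meeting $C_1$ has colors $1$ and at most one of $\{2,3\}$, so it meets $L$ in points of a single color. I would show: a line either lies in $L$, or meets $L \cap C_2$ only, or meets $L \cap C_3$ only. Counting lines inside $L$ via the induced structure, and using that $C_2, C_3$ are each "closed" in a suitable sense (any line through two points of $C_2$ — if it existed with a third color it would be forbidden; it can have color $3$), I would derive that $C_2$ and $C_3$ are both unions of lines or have very restricted size, and push the arithmetic $|C_1| + |C_2| + |C_3| = q^2 + q + 1$ against the divisibility constraints coming from the plane until the only surviving possibility is $|C_1| = q^2$ with $S \setminus C_1$ a line (and symmetrically). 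The main difficulty I foresee is handling the case analysis cleanly — in particular ruling out "mixed" configurations where neither $C_2$ nor $C_3$ is large — and I would likely need a lemma stating that if a set $T$ of points in a projective plane has the property that every line meets $T$ in $0$, all-but-one, or all of its points of one fixed residual color, then $T$ is a line-complement or small; this lemma, proved by elementary double counting of flags, is where the real work lies.
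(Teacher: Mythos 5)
Your initial observations are sound: the line through a point of $C_2$ and a point of $C_3$ contains only colors $2$ and $3$, so it avoids $C_1$ entirely, and the $q+1$ lines through a fixed $p\in C_1$ split into those whose non-$p$ points lie in $C_1$, those meeting $C_2$ (and not $C_3$), and those meeting $C_3$ (and not $C_2$). You also correctly catch yourself on the false claim that two points of $C_2$ cannot lie on a common line through $y\in C_3$. But after that the plan dissolves: the entire argument is deferred to a ``lemma'' that is not precisely stated (``every line meets $T$ in $0$, all-but-one, or all of its points of one fixed residual color'' does not parse into a usable hypothesis), and you explicitly concede that ``this lemma, proved by elementary double counting of flags, is where the real work lies.'' A proof proposal that names a lemma it cannot state or prove, and identifies that lemma as the crux, has a genuine gap: the missing idea is precisely the quantitative counting step.

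The paper's route supplies exactly that missing idea, and it is worth seeing what it looks like. The key intermediate step is not that a color class is the \emph{complement} of a line (which you aim for directly), but the seemingly weaker claim that one color class is \emph{contained in} a line. To prove that, one assumes for contradiction that every class contains three non-collinear points, picks $p_i\in S_i$, considers the line $L_i$ through $p_{i+1},p_{i+2}$, and sets $m_{i,i+1}=|L_i\cap S_{i+1}|$, $m_{i,i+2}=|L_i\cap S_{i+2}|$. The two-color condition forces every line bearing colors $i+1$ and $i+2$ to have exactly $m_{i,i+1}$ points of color $i+1$ and $m_{i,i+2}$ points of color $i+2$, independent of the line chosen; comparing these constants across indices yields $m_{i,j}=(q+1)/2$ for all $i,j$, hence $|S_i|=(q+1)^2/2$ for all $i$, contradicting $|S_1|+|S_2|+|S_3|=q^2+q+1$. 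Once one class $S_1\subseteq L$ is established, a two-line geometric argument (intersecting a color-$\{1,2\}$ line through $p_1\in S_1$ with a color-$\{1,3\}$ line through $p_1'\in S_1$) forces the trichotomy. Your incidence-counting framework never isolates these invariant intersection numbers $m_{i,j}$, which are what make the arithmetic close; without them the ``divisibility constraints'' you invoke do not materialize. To rescue your plan you would need to either prove something equivalent to the containment-in-a-line claim or make your unstated lemma precise and prove it, and in either case you would be redoing the step you have skipped.
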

\begin{proof}
Let $1$, $2$ and $3$ denote the three colors and $S_1$, $S_2$, $S_3$ the corresponding color classes. The proof is based on the following claim.

\begin{cl} \label{cl:line}
There exists a color class which is a subset of a line.
\end{cl}
\begin{proof}
Suppose indirectly that each of the three color classes contains three non-collinear points. Pick arbitrary points $p_1, p_2, p_3$ from color classes $S_1, S_2, S_3$, respectively. Let $L_i$ denote the line through $p_{i+1}$ and $p_{i+2}$, and set $m_{i,i+1}=|L_i \cap S_{i+1}|$ and $m_{i,i+2} = |L_i \cap S_{i+2}|$ (all indices are meant in a cyclic order). As every line of the plane has $q+1$ points, we have $m_{i,i+1}+m_{i,i+2} = q+1$.
Each line through a fixed point of $S_i$ has exactly one common point with the line $L_i$, hence $m_{i,i+1}$ of them contain colors $i$ and $i+1$ and $m_{i,i+2}$ of them contain colors $i$ and $i+2$. Since $p_1, p_2, p_3$ were arbitrary, we get that each line containing colors $i+1$ and $i+2$ has $m_{i,i+1}$ points of color $i+1$ and $m_{i,i+2}$ points of color $i+2$. 

As $S_{i+1}$ contains three non-collinear points, there exists a point $p'_{i+1} \in S_{i+1}-L_i$. By changing $i$ to $i+1$ in the previous paragraph, we get that exactly $m_{i+1, i+2}$ lines through $p’_{i+1}$ contain colors $i+1$ and $i+2$. As the $m_{i,i+2}$ lines through $p'_{i+1}$ and one of the points of $L_i \cap S_{i+2}$ contain colors $i+1$ and $i+2$, and the number of lines through $p'_{i+1}$ with these colors is $m_{i+1,i+2}$, we get that $m_{i,i+2} \le m_{i+1,i+2}$. By symmetry, we obtain $m_{i+1,i+2}=m_{i,i+2}$. Therefore \[m_{1,2}=m_{3,2} = q+1-m_{3,1} = q+1-m_{2,1} = m_{2,3}=m_{1,3}=q+1-m_{1,2},\] hence $m_{i,i+1}=m_{i,i+2}=(q+1)/2$ for all $i$. We get that all lines through $p_{i+1}$ contain $(q+1)/2$ points of color $i$, hence $|S_i|=(q+1)^2/2$. Therefore $|S_1|+|S_2|+|S_3| = 3(q+1)^2/2 > q^2+q+1$, a contradiction.
\end{proof}

By Claim~\ref{cl:line}, we may assume that $S_1 \subseteq L$ for a line $L$. Suppose indirectly that none of the cases \textit{(i)}, \textit{(ii)} and \textit{(iii)} hold. As $|S_1|\ge 2$, we can pick two distinct points $p_1, p'_1 \in S_1$. As none of $S_2$ and $S_3$ is the complement of $L$, there exists $p_2 \in S_2-L$ and $p_3 \in S_3-L$. The points of the line through $p_1$ and $p_2$ have color 1 or 2 and the points of the line through $p'_1$ and $p_3$ have color 1 or 3, hence the intersection of these lines have color 1. This intersection point cannot lie on $L$, hence $S_1 \not \subseteq L$, a contradiction. 
\end{proof}

\begin{cor}\label{cor:pav}
Let $M=(S,\cI)$ be a paving matroid of rank $3$ defined by the lines of a projective plane of order $q$. Suppose that $N=(S,\cJ)$ is a partition matroid such that $N \preceq M$. Then \[\rchi(N) \ge \begin{cases} (|S|-1)/2, & \text{if } q \text{ is odd,} \\ (|S|+1)/2, & \text{if } q \text{ is even.} \end{cases}\]
In particular, if $q\equiv 4  \pmod{6}$ then $\rchi(N) \ge \left \lceil \frac{3\rchi(M)}{2}\right \rceil$, and if equality holds then $N$ is not a rank preserving reduction of $M$. 
\end{cor}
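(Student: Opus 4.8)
The plan is to combine Theorem~\ref{thm:proj} with a short case analysis on $\rho:=r_N(S)$. I first record three preliminary facts. Since $M$ is paving of rank $3$, every set of size at most $2$ is independent, so for any partition matroid $N$ with nonempty classes $P_1,\dots,P_m$ the condition $N\preceq M$ forces a transversal of the classes to be independent in $M$, hence $m=r_N(S)\le r_M(S)=3$; moreover $\rchi(N)=\max_i|P_i|$. Next, $|S|=q^2+q+1$ is odd (as $q(q+1)$ is even) and every line has $q+1$ points. Finally, $2(q^2+q+1)\ge 3(q+1)$ for all $q\ge 2$, so $|S|/3\ge (q+1)/2$ and Lemma~\ref{lem:pavrank} gives $\rchi(M)=\lceil|S|/3\rceil$. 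Now if $\rho\le 1$ then $\rchi(N)=|S|$, and if $\rho=2$ then $\rchi(N)=\max\{|P_1|,|P_2|\}\ge\lceil|S|/2\rceil=(|S|+1)/2$; in both cases the claimed bound holds for both parities of $q$.

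It remains to handle $\rho=3$, so $N$ has exactly three nonempty classes $S_1,S_2,S_3$. By Lemma~\ref{lem:pav} the circuits of $M$ are precisely the collinear triples and the $4$-point sets in general position; a $4$-point set is automatically non-rainbow under a $3$-colouring, so $N\preceq M$ is equivalent to demanding that no line carries all three colours, i.e.\ every line is at most $2$-coloured. Thus Theorem~\ref{thm:proj} applies: one of (i)--(iii) holds, and (i) is excluded since all three classes are nonempty. If (iii) holds then some class is the complement of a line and has size $|S|-(q+1)=q^2$, and $q^2\ge(|S|+1)/2$ for $q\ge 2$. If (ii) holds then some class, say $S_1=\{p\}$, is a singleton, whence $\max\{|S_2|,|S_3|\}\ge(|S|-1)/2$; this already proves the bound for odd $q$.

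To get the stronger bound for even $q$ in case (ii), I would use the following structural observation: each of the $q+1$ lines through $p$ is at most $2$-coloured with one colour being $1$ at $p$, so after deleting $p$ it becomes a set of $q$ points all of the same colour ($2$ or $3$), and these punctured lines partition $S-p$. Hence $|S_2|$ and $|S_3|$ are both multiples of $q$. When $q$ is even, $(|S|-1)/2=q(q+1)/2$ is \emph{not} a multiple of $q$, so the larger of $|S_2|,|S_3|$ strictly exceeds it; being a positive multiple of $q$ it is therefore at least $q\cdot(q/2+1)=q^2/2+q$, which is $\ge(|S|+1)/2$ because $q/2\ge1$. This completes the proof of the displayed inequality.

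For the final sentence, suppose $q\equiv 4\pmod 6$; then $q$ is even and $q\equiv1\pmod 3$, so $3\mid q^2+q+1=|S|$, giving $\rchi(M)=|S|/3$ and $\lceil 3\rchi(M)/2\rceil=\lceil|S|/2\rceil=(|S|+1)/2$. The even-$q$ bound then reads $\rchi(N)\ge\lceil 3\rchi(M)/2\rceil$. If equality held and $\rho=3$, the estimates above give $\rchi(N)\ge q^2/2+q>(|S|+1)/2$ in case (ii) and $\rchi(N)\ge q^2>(|S|+1)/2$ in case (iii), both using $q\ge 3$ — a contradiction; and $\rho=1$ would give $\rchi(N)=|S|>(|S|+1)/2$; hence $\rho=2\ne 3=r_M(S)$, so $N$ is not a rank preserving reduction of $M$. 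The one genuinely non-routine step is the divisibility argument in case (ii) for even $q$; everything else is bookkeeping.
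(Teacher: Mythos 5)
Your proof is correct and follows essentially the same approach as the paper: invoke Theorem~\ref{thm:proj}, then bound $\rchi(N)$ case by case, with the key observation (also in the paper) that when some color class is a singleton $\{p\}$, the $q+1$ punctured lines through $p$ are monochromatic and partition $S-p$. Your divisibility argument for even $q$ is just a rephrasing of the paper's pigeonhole count that one of the two remaining colors occupies at least $\lceil(q+1)/2\rceil$ punctured lines, yielding the same bound $q\lceil(q+1)/2\rceil = q^2/2+q$.
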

\begin{proof}
As $N\preceq M$, the coloring defined by the partition classes of $N$ satisfies the conditions of Theorem~\ref{thm:proj}. If there exists an empty color class, then one of the color classes has size at least $\lceil |S|/2 \rceil = (|S|+1)/2$. If there exists a color class containing only one point $p$, then each of the $q+1$ lines through $p$ are monochromatic except for $p$, hence one of the color classes has size at least $q\lceil (q+1)/2\rceil$, that is, $q(q+1)/2 = (|S|-1)/2$ if $q$ is odd, and $q(q+2)/2=(|S|+q-1)/2$ if $q$ is even. If one of the color classes is the complement of a line, then it has size $q^2>(|S|+1)/2$. In all three cases, we get a color class of size at least $(|S|-1)/2$ if $q$ is odd, and $(|S|+1)/2$ if $q$ is even, proving our bound on the coloring number of $N$.

Assume now that $q\equiv 4 \pmod6$. Lemma~\ref{lem:pavrank} implies that $\rchi(M) = \max\{\lceil(q^2+q+1)/3\rceil, \lceil(q+1)/2\rceil\} = (q^2+q+1)/3$, and so $\lceil 3\rchi(M)/2\rceil = \lceil(q^2+q+1)/2\rceil = (|S|+1)/2 \le \rchi(N)$. If equality holds then $N$ has rank 2, that is, one of the color classes is empty, since we have strict inequalities above in the other two cases.
\end{proof}

Corollary~\ref{cor:pav} implies that the bound $\lceil 3k/2 \rceil$ for paving matroids of rank 3 is tight for infinitely many values of $k$. Indeed, consider projective planes of order $q=4^\ell$ for $\ell\in\mathbb{Z}_{>0}$ and set $k=\frac{q^2+q+1}{3}$.

\section{Gammoids} \label{sec:gammoid}

The aim of this section is to prove the main result of the paper, Theorem~\ref{thm:gammoid}. A generalization of transversal matroids can be obtained with the help of directed graphs. Given a directed graph $D=(V,A)$ and two sets $X,Y\subseteq V$, we say that $X$ is \textbf{linked} to $Y$ if $|X|=|Y|$ and there exists $|X|$ vertex-disjoint directed paths from $X$ to $Y$. Let $S\subseteq V$ be a set of starting vertices and $T\subseteq V$ be a set of destination vertices. Then the family $\cI=\{Y\subseteq T:\ \exists X\subseteq S\ \text{s.t. $X$ is linked to $Y$}\}$ forms the independent sets of a matroid that is called a \textbf{gammoid}. The gammoid is a \textbf{strict gammoid} if $T=V$. That is, a gammoid is obtained by restricting a strict gammoid to a subset of its elements.

Transversal matroids and gammoids are closely related. Ingleton and Piff \cite{ingleton1973gammoids} showed that strict gammoids are exactly the duals of transversal matroids, hence every gammoid is the restriction of the dual of a transversal matroid. 

\thmgammoid*
\begin{proof}
Let $M=(S,\cI)$ be a $k$-colorable gammoid where $k\geq 2$. By the result of Ingleton and Piff, $M$ can be obtained as the restriction of the dual of a transversal matroid. Let $\wtm$ be such a transversal matroid, and choose $\wtm$ in such a way that its rank is as small as possible. Let $G=(A,B;E)$ be a bipartite graph defining $\wtm$ with $S\subseteq A$ and $|B|$ being the rank of $\wtm$.

The high-level idea of the proof is the following. First we show that there exists a $B_2$-forest $F$ in $G$. Then, by using an alternating structure on the components of $F$, we prove that $F$ can be chosen in such a way that every component contains at most $2k-2$ vertices from $S$. Let $\cC$ denote the set of the connected components of $F$, and let $N=(S,\cJ)$ be the partition matroid corresponding to partition classes $S(C)$ for $C\in\cC$. Every component $C$ is a $B_2$-tree, hence it contains a perfect matching between $B(C)$ and $A(C)-a$ for any $a\in A(C)$. That is, if we leave out exactly one vertex from $A(C)$ for each $C\in\cC$, the remaining vertices of $A$ form a basis of $\wtm$, and so the set of deleted vertices form a basis in the strict gammoid that is the dual of $\wtm$. This implies that $N\preceq M$ with $\rchi(N)\leq 2k-2$, thus proving the theorem.

We start with an easy observation.

\begin{cl}\label{cl:kmatch}
$G$ contains $k$ matchings of size $|B|$ such that every vertex in $S$ is covered by at most $k-1$ of them. 
\end{cl}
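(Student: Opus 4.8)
The plan is to derive the claim directly from the $k$-colorability of $M$ via the duality between $\wtm$ and $M$. Since $M=(S,\cI)$ is $k$-colorable, I would first fix a partition $S=I_1\cup\dots\cup I_k$ into sets that are independent in $M$. Because $M$ is the restriction to $S$ of the dual transversal matroid $\wtm^*$, each $I_j$ is independent in $\wtm^*$ as well, so by the definition of the dual the complement $A\setminus I_j$ contains a basis of $\wtm$.

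The next step is to translate this back into a statement about the bipartite graph $G=(A,B;E)$. A basis of $\wtm$ is a subset of $A$ of size $|B|$ that is matchable into $B$, so for each $j$ there is a matching $F_j$ in $G$ whose $A$-side is such a basis contained in $A\setminus I_j$; since $|B|$ is the rank of $\wtm$, the matching $F_j$ has $|B|$ edges and hence saturates every vertex of $B$. This already produces $k$ matchings $F_1,\dots,F_k$ of size $|B|$, with the extra property that $F_j$ covers no vertex of $I_j$.

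To finish, I would count covers: every $s\in S$ belongs to exactly one class $I_j$ of the partition, and by construction $F_j$ misses $s$, so $s$ is covered by at most $k-1$ of the matchings $F_1,\dots,F_k$, which is exactly the assertion of the claim. I do not expect a genuine obstacle here — as the phrasing in the paper suggests, it is an easy observation — and the only point requiring care is the correct use of the dual: "independent in $\wtm^*$" is equivalent to "the complement spans $\wtm$", which is precisely what lets a size-$|B|$ matching avoid $I_j$ on the $A$-side while still saturating $B$.
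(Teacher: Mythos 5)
Your proof is correct and follows essentially the same route as the paper's: partition $S$ into $k$ sets $I_1,\dots,I_k$ independent in $M$, observe that independence of $I_j$ in $\wtm^*$ means $A\setminus I_j$ contains a basis of $\wtm$ and hence supports a matching of size $|B|$ avoiding $I_j$, so each $s\in S$ is missed by the matching indexed by its own class. The paper states this more tersely but the content is identical.
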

\begin{proof}
Observe that a set $X\subseteq S$ is independent in $M$ if and only if $A-X$ contains a basis of $\wtm$, that is, $G-X$ has a matching covering $B$. The assumption that $M$ is $k$-colorable is equivalent to the condition that $S$ can be partitioned into $k$ independent sets of $M$, and the claim follows.
\end{proof} 

The following claim proves an inequality that we will rely on.

\begin{cl}\label{cl:cond}
$k\cdot(|A|-|B|)-|S-X|\geq k\cdot\max\{|Y|-|N(Y)|:\ Y\subseteq X\}$ for every $X\subseteq A$.
\end{cl}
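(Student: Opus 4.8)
I want to prove that for every $X\subseteq A$,
\[
k\cdot(|A|-|B|)-|S-X|\ \ge\ k\cdot\max\{|Y|-|N(Y)|:\ Y\subseteq X\}.
\]
The natural idea is to use the $k$ matchings $F_1,\dots,F_k$ of size $|B|$ produced by Claim~\ref{cl:kmatch}, since they encode the $k$-colorability of $M$ in a concrete combinatorial way. Fix an arbitrary $Y\subseteq X$; it suffices to show
\[
k\cdot(|A|-|B|)-|S-X|\ \ge\ k\cdot(|Y|-|N(Y)|),
\]
i.e. $|S-X|\le k\bigl(|A|-|B|-|Y|+|N(Y)|\bigr)$. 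The plan is to bound, for each matching $F_j$, the number of elements of $S-X$ that $F_j$ \emph{fails} to match in a way that helps, and then sum over $j$. Since each $F_j$ has size $|B|$, it leaves exactly $|A|-|B|$ vertices of $A$ unmatched.

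**Key counting step.** I would fix $j$ and look at the matching $F_j$ restricted to the relevant vertices. Because $F_j$ is a matching of size $|B|$, the vertices of $N(Y)$ are all matched by $F_j$, and at most $|N(Y)|$ vertices of $A$ are matched \emph{into} $N(Y)$ by $F_j$; in particular at most $|N(Y)|$ vertices of $Y$ can be matched by $F_j$ (every neighbor of a vertex of $Y$ lies in $N(Y)$), so at least $|Y|-|N(Y)|$ vertices of $Y$ are left unmatched by $F_j$. Since $F_j$ leaves exactly $|A|-|B|$ vertices of $A$ unmatched in total, and at least $|Y|-|N(Y)|$ of these lie in $Y\subseteq X$, at most $(|A|-|B|)-(|Y|-|N(Y)|)$ unmatched vertices lie outside $Y$, hence at most that many lie in $S-X$ (indeed in $S$). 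Now I invoke the second half of Claim~\ref{cl:kmatch}: every vertex of $S$ is covered by \emph{at least one} of the $k$ matchings — equivalently, for each $s\in S$ the number of matchings \emph{not} covering $s$ is at most $k-1 < k$, so summing the per-matching unmatched counts over $j=1,\dots,k$ double-counts each element of $S$ with multiplicity at most $k$. Wait — more carefully, ``covered by at most $k-1$ of them'' means each $s\in S$ is \emph{missed} by at least one matching, which is the direction I need only if I count missed vertices; let me instead count covered vertices. Each $s\in S-X$ is covered by at most $k-1$ matchings, so $\sum_j |S-X\text{ matched by }F_j| \le (k-1)|S-X|$, i.e. $\sum_j |S-X\text{ missed by }F_j| \ge |S-X|$. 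On the other hand $\sum_j |S-X\text{ missed by }F_j| \le \sum_j |A\setminus Y\text{ missed by }F_j| \le k\bigl((|A|-|B|)-(|Y|-|N(Y)|)\bigr)$ by the per-matching bound above. Chaining the two inequalities gives exactly $|S-X|\le k\bigl(|A|-|B|-|Y|+|N(Y)|\bigr)$, which rearranges to the claim.

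**Main obstacle.** The delicate point is the per-matching estimate: I must argue cleanly that $F_j$ leaves at least $|Y|-|N(Y)|$ vertices of $Y$ unmatched, and that therefore it misses at most $(|A|-|B|)-(|Y|-|N(Y)|)$ vertices of $A\setminus Y$. This is really a counting argument about the matching $F_j$: among the $|A|-|B|$ vertices $F_j$ misses, the ones inside $Y$ number at least $|Y|-|N(Y)|$ because $F_j$ can match at most $|N(Y)|$ vertices of $Y$ (their $F_j$-partners all lie in $N(Y)$, and $F_j$ is a matching). One should double-check the edge case $|Y|\le|N(Y)|$, where $|Y|-|N(Y)|\le 0$ and the bound is trivial, and the case $Y=\emptyset$. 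Given that, the rest is the double-counting over the $k$ matchings using the ``at most $k-1$'' property, which is routine. I expect no topological input is needed — this is purely a Hall-type / counting argument, consistent with the comment after Theorem~\ref{thm:ab} that the authors want an elementary, algorithmic route.
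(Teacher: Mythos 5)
Your proof is correct, and it takes a genuinely different route from the paper's. The paper proves the claim matroid-theoretically: it builds an auxiliary transversal matroid $\wtmm$ by adding $k-1$ parallel copies of each element of $A-S$ and $k-2$ parallel copies of each element of $S$, observes (via Claim~\ref{cl:kmatch}) that $\wtmm$ has $k$ pairwise disjoint bases, and then applies Corollary~\ref{cor:ef2} together with the deficiency formula $r_{\wtm}(X)=\min\{|X|-|Y|+|N(Y)|:Y\subseteq X\}$, rearranging $|A'-X'|\ge k(r_{\wtmm}(A')-r_{\wtmm}(X'))$ into the stated inequality. You instead stay entirely in the bipartite graph and argue by a direct double count over the $k$ matchings: for each $F_j$ of size $|B|$, among the $|A|-|B|$ vertices of $A$ it leaves uncovered, at least $\max\{0,|Y|-|N(Y)|\}$ lie in $Y$ (since every $F_j$-partner of a vertex of $Y$ lies in $N(Y)$ and partners are distinct), so at most $(|A|-|B|)-(|Y|-|N(Y)|)$ missed vertices lie in $S-X$; summing over $j$ and using that each $s\in S-X$ is missed by at least one $F_j$ (the contrapositive of ``covered by at most $k-1$'') gives $|S-X|\le k\bigl((|A|-|B|)-(|Y|-|N(Y)|)\bigr)$, which is the claim after taking the max over $Y\subseteq X$. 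Both proofs consume exactly Claim~\ref{cl:kmatch} as input; what the paper's version buys is a one-line appeal to an established matroid theorem (at the cost of the parallel-copy construction and a rank computation), while yours buys elementarity and transparency --- it is essentially the same deficiency/Hall counting that underlies Corollary~\ref{cor:ef2} for transversal matroids, unpacked by hand. You correctly flag and dispose of the edge cases $Y=\emptyset$ and $|Y|\le|N(Y)|$, and your observation that the intermediate per-matching set is contained in $A\setminus Y$ (because $Y\subseteq X$ forces $Y\cap(S-X)=\emptyset$) is the one point that needed care and is handled properly.
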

\begin{proof}
Let $\wtmm$ be the matroid that is obtained from $\wtm$ by adding $k-1$ parallel copies of every element in $A-S$, and adding $k-2$ parallel copies of every element in $S$. The ground set $A'$ of $\wtmm$ has size $(k-1)|S|+k|A-S|$. Then Claim~\ref{cl:kmatch} states that $\wtmm$ has $k$ pairwise disjoint bases. 

Let $X\subseteq A$ be an arbitrary set and let $X'$ be the set consisting of all the parallel copies of the elements of $X$. Then $|X'|=(k-1)\cdot|X\cap S|+k\cdot|X-S|$ and $r_{\wtmm}(X')=r_{\wtm}(X)=\min\{|X|-|Y|+|N(Y)|:\ Y\subseteq X\}$. Recall that $|A'|=(k-1)\cdot|S|+k\cdot |A-S|$ and $r_{\wtmm}(A')=|B|$, hence
\begin{align*}
|A'|-|X'|
{}&{}=
(k-1)\cdot|S|+k\cdot |A-S|-(k-1)\cdot|X\cap S|-k\cdot|X-S|\\
{}&{}=
(k-1)\cdot |A|+|A-S|-(k-1)\cdot|X|-|X-S|\\
{}&{}=
(k-1)\cdot |A-X|+|A-S-X|, 
\end{align*}
and
\begin{align*}
r_{\wtmm}(A')-r_{\wtmm}(X')
{}&{}=
|B|-\min\{|X|-|Y|+|N(Y)|:\ Y\subseteq X\}\\
{}&{}=
|B|-|X|+\max\{|Y|-|N(Y)|:\ Y\subseteq X\}.
\end{align*}
By Corollary~\ref{cor:ef2} and Claim~\ref{cl:kmatch}, $|A'|-|X'|\geq k\cdot (r_{\wtmm}(A')-r_{\wtmm}(X'))$, thus we get 
\begin{align*}
(k-1)\cdot |A-X|+|A-S-X|
{}&{} \geq 
k\cdot(|B|-|X|+\max\{|Y|-|N(Y)|:\ Y\subseteq X\}).
\end{align*}
After rearranging, we obtain
\begin{align*}
k\cdot(|A|-|B|)-|S-X|
{}&{} \geq 
k\cdot \max\{|Y|-|N(Y)|:\ Y\subseteq X\}
\end{align*}
as stated.
\end{proof} 

Our next goal is to show that there exists a $B_2$-forest in $G$.

\begin{cl}\label{cl:b2}
$G=(A,B;E)$ contains a $B_2$-forest.
\end{cl}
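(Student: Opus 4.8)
The plan is to invoke Lov\'asz's characterization (Theorem~\ref{thm:lovasz}): a $B_2$-forest exists in $G=(A,B;E)$ if and only if the strong Hall condition $|N(X)| \ge |X|+1$ holds for every non-empty $X \subseteq B$. So the whole task reduces to verifying this inequality on the $B$-side, and the natural tool is Claim~\ref{cl:cond}, which gives an inequality about neighborhoods on the $A$-side. First I would translate between the two sides: for a non-empty $Y \subseteq B$, set $X := \{a \in A : N(\{a\}) \subseteq Y\}$, i.e.\ the set of $A$-vertices all of whose neighbors lie in $Y$. Then $Y$ fails the strong Hall condition (meaning $|N(Y)| \le |Y|$, but here we want the $B$-to-$A$ direction — actually since every $b \in B$ has $d_{\overbar{R}}$-degree considerations aside, let me restate) — the cleanest route is to apply Claim~\ref{cl:cond} to a carefully chosen $X \subseteq A$ and read off $|N(Y)| \ge |Y| + 1$ for the corresponding $Y \subseteq B$.

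Concretely, I expect the argument to run as follows. Suppose for contradiction that some non-empty $Y \subseteq B$ violates strong Hall, i.e.\ $|N(Y)| \le |Y|$ where $N(Y) = \{a \in A : ab \in E \text{ for some } b \in Y\}$. Let $X := A - N(Y)$; then no vertex of $X$ has a neighbor in $Y$, so $N(X) \subseteq B - Y$, giving $|N(X)| \le |B| - |Y|$, hence $|X| - |N(X)| \ge |A| - |N(Y)| - |B| + |Y| \ge |A| - |B|$ (using $|N(Y)| \le |Y|$). Therefore $\max\{|Y'| - |N(Y')| : Y' \subseteq X\} \ge |A| - |B|$. Plugging this $X$ into Claim~\ref{cl:cond} yields
\[
k \cdot (|A| - |B|) - |S - X| \ge k \cdot (|A| - |B|),
\]
so $|S - X| \le 0$, i.e.\ $S \subseteq X = A - N(Y)$, which means no element of $S$ has any neighbor in $Y$. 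But then the vertices of $Y \subseteq B$ are never needed to match $S$: deleting $Y$ from $B$ still leaves a bipartite graph in which $S \subseteq A$ has the same matchable sets, contradicting the minimality of the rank $|B|$ of $\overbar{R}$ (the transversal matroid $R$ was chosen with rank as small as possible). This contradiction establishes the strong Hall condition, and Theorem~\ref{thm:lovasz} then delivers the $B_2$-forest.

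The main obstacle I anticipate is getting the index-chasing and the direction of the Hall inequality exactly right — in particular making sure the set $X \subseteq A$ fed into Claim~\ref{cl:cond} is the right one and that the bound it produces is tight enough to force $S - X = \emptyset$ rather than merely something small. A secondary subtlety is the appeal to minimality of $|B|$: I need to argue that if some $b \in B$ (more generally, some $Y \subseteq B$) is not adjacent to any vertex of $S$, then $b$ can be discarded without changing the dual transversal matroid's restriction to $S$, which lowers the rank of $R$ and contradicts its minimal choice. Both of these are routine once set up carefully, but the bookkeeping is where an error would most likely creep in, so that is where I would concentrate the care.
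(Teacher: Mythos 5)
Your reduction to verifying the strong Hall condition and your application of Claim~\ref{cl:cond} with $X=A-N(Y)$ to deduce $S\subseteq A-N(Y)$ (i.e.\ $N(Y)\cap S=\emptyset$) match the paper's argument exactly, and that part is correct. Two small refinements are implicit but worth making explicit: since $G$ has a matching of size $|B|$, ordinary Hall gives $|N(Y)|\geq|Y|$, so the violating set $Y$ is in fact \emph{tight}, $|N(Y)|=|Y|$; and the paper uses a modularity argument to take $Y$ to be the unique maximal tight set $Z$, though the argument works for any tight set.

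The genuine gap is in the final contradiction step. You claim that deleting $Y$ from $B$ (while keeping all of $A$) yields a smaller transversal matroid $\wtm'$ with $(\wtm')^*|_S = M$. This is false even when $N(Y)\cap S=\emptyset$ and $Y$ is tight, and the reason is a direction you've glossed over: $X\subseteq S$ is independent in $M$ iff $G-X$ has a matching covering all of $B$, \emph{including} $Y$. Deleting only $Y$ discards this covering requirement and frees the vertices of $N(Y)$ to be matched to $B-Y$; consequently some $X\subseteq S$ that are dependent in $M$ (because once $N(Y)$ is committed to covering $Y$, the remaining $A$-vertices cannot cover $B-Y$) become independent in $(\wtm')^*|_S$. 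So $(\wtm')^*|_S$ may strictly contain $M$, and no contradiction with minimality follows. The paper's proof repairs this by deleting $N(Z)$ from $A$ as well as $Z$ from $B$: since $Z$ is tight, every matching covering $B$ perfectly matches $Z$ with $N(Z)$, so $\wtm$ decomposes as the direct sum of the transversal matroids on $Z\cup N(Z)$ and on $(B-Z)\cup(A-N(Z))$; since $S\subseteq A-N(Z)$, the dual of $\wtm$ restricted to $S$ equals the dual of the second summand restricted to $S$, and that second summand has strictly smaller rank $|B-Z|$. Your proof needs this direct-sum observation — the deletion of $Y$ alone is not "routine bookkeeping" but the point where the argument breaks.
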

\begin{proof}
As $G$ has a matching of size $|B|$, the Hall condition holds for every subset of $B$, thus $|N(U)|\geq|U|$ for every $U\subseteq B$. Let us call a set $U\subseteq B$ \textbf{tight} if $|N(U)|=|U|$. Assume that $G$ does not have a $B_2$-forest. Then, by Theorem~\ref{thm:lovasz}, there exists a non-empty tight set in $B$. For arbitrary tight sets $U,W\subseteq B$, we get
\begin{align*}
    |U|+|W|
    {}&{}=
    |N(U)|+|N(W)|
    = 
    |N(U)\cap N(W)|+|N(U)\cup N(W)|\\
    {}&{}\geq
    |N(U\cap W)|+|N(U\cup W)|
    \geq |U\cap W|+|U\cup W|\\
    {}&{}= |U|+|W|,
\end{align*}
hence equality holds throughout, and so $U\cap W$ and $U\cup W$ are also tight. This implies that there is a unique maximal tight set $\emptyset\neq Z\subseteq B$.

Let $X=A-N(Z)$. As $Z$ is a tight set, $\max\{|Y|-|N(Y)|:\ Y\subseteq X\}\geq |X|-|N(X)|\geq |A-N(Z)|-|B-Z|=|A|-|B|$, thus $S-X=N(Z)\cap S=\emptyset$ by Claim~\ref{cl:cond}. Furthermore, every matching of size $|B|$ provides a perfect matching between $Z$ and $N(Z)$. That is, $\wtm$ is the direct sum of the transversal matroids $\wtm'$ and $\wtm''$ defined by $G[Z\cup N(Z)]$ and $G[(B-Z)\cup(A-N(Z))]$, respectively. Therefore $M$ is the restriction of the dual of $\wtm''$ to $S$, contradicting the minimal choice of $\wtm$.  
\end{proof}

Take an arbitrary $B_2$-forest $F$ in $G$. We will need the following technical claim.

\begin{cl}\label{cl:leaf}
Every leaf of $F$ is in $S$.
\end{cl}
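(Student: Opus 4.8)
The plan is to argue by contradiction, exploiting the minimal choice of $\wtm$. Suppose some leaf $a$ of $F$ lies in $A-S$, and let $ab\in F$ be its unique incident edge. I would first dispense with a degenerate case: if $M$ is free then $M$ is itself a partition matroid and Theorem~\ref{thm:gammoid} is trivial, so we may assume that $M$ is not free, which means $r_\wtm(A-S)<|B|$, i.e.\ that $A-S$ does not span $\wtm$. Now, since every vertex of $B$ has degree $2$ in the $B_2$-forest $F$, deleting the leaf $a$ leaves $b$ with degree $1$, and deleting $b$ in turn removes its last remaining edge while every other vertex of $B$ still has degree $2$; hence $F-a-b$ is a $B_2$-forest of the bipartite graph $G':=G[(A-a)\cup(B-b)]$. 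By Theorem~\ref{thm:lovasz} (or directly, via the matching property of $B_2$-forests used in the main argument) $G'$ has a matching saturating $B-b$, so the transversal matroid $\wtm''$ defined by $G'$ on the ground set $A-a\supseteq S$ has rank exactly $|B|-1$.

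The goal is then to prove that $(\wtm'')^*|_S=M$; this contradicts the minimality of $\wtm$, since $r(\wtm'')=|B|-1<|B|=r(\wtm)$. Recall that a set $X\subseteq S$ is independent in $M=\wtm^*|_S$ exactly when $G-X$ has a matching saturating $B$, and independent in $(\wtm'')^*|_S$ exactly when $G'-X$ has a matching saturating $B-b$. One containment is immediate: from a matching of $G'-X$ saturating $B-b$ one recovers a matching of $G-X$ saturating $B$ by adding the edge $ab$ (legitimate as $a,b\notin X$), so $(\wtm'')^*|_S\preceq M$. For the reverse containment I must show that if $G-X$ has a matching saturating $B$, then $G-a-X$ has one saturating $B-b$ (such a matching can then be taken to avoid $b$, hence lives in $G'-X$). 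By Hall's theorem this amounts to the inequality $|N_G(U)\setminus(X\cup\{a\})|\ge|U|$ for every $U\subseteq B-b$; it holds automatically unless $U$ contains a neighbor of $a$, in which case it becomes the strict bound $|N_G(U)\setminus X|\ge|U|+1$, i.e.\ that no such $U$ is tight with respect to $G-X$.

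Proving this strict Hall inequality is the heart of the argument and the step I expect to be the main obstacle. The plan is to combine Hall's condition for $G-X$ (which holds because $X$ is independent in $M$), the non-freeness of $M$, the local structure of $F$ at the leaf $a$ (so that $a$ can be dropped from a maximum matching inside its $F$-component while $ab$ stays available), and Claim~\ref{cl:cond} applied to a judiciously chosen subset of $A$. The idea is that if, contrary to what we want, some $U\subseteq B-b$ with $a\in N_G(U)$ were tight for $G-X$, then every matching of $G-X$ saturating $B$ would be forced to use $a$ to cover $U$, and one should then be able to split off a nontrivial summand of $\wtm$ disjoint from $S$ — in the spirit of the proof of Claim~\ref{cl:b2} — contradicting minimality. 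Converting this outline into a rigorous proof (locating the right tight set, the direct-sum decomposition, and handling the dependence on $X$) is where essentially all the work lies; the reduction above and the deduction of the final contradiction are routine.
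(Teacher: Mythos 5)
Your overall frame matches the paper's exactly: assume a leaf $a\in A-S$ with unique $F$-neighbor $b$, form $G'=G-\{a,b\}$, observe that the corresponding transversal matroid $\wtm''$ has rank $|B|-1$, and derive a contradiction with the minimal choice of $\wtm$ by showing that the gammoid $(\wtm'')^*|_S$ coincides with $M$. The easy inclusion (add the edge $ab$) is identical. For the rank of $\wtm''$ your $B_2$-forest observation ($F-a-b$ is a $B_2$-forest of $G'$) is a valid alternative to the paper's appeal to the strong Hall condition.

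Where you diverge is the hard inclusion $\cI\subseteq\cI'$, and there your proposal is genuinely incomplete. The paper dispatches it with a short alternating-path exchange: given a matching $P$ of $G-X$ covering $B$ and any matching $P'$ of $G'$ covering $B-b$, since $|P|=|P'|+1$ one extracts an appropriate alternating path $Q$ in $P\triangle P'$ with endpoint $b$ and replaces $P$ by $P\triangle Q$ to land in $G'-X$. You instead reduce the inclusion, via Hall's theorem, to a strict deficiency inequality — that no $U\subseteq B-b$ with $a\in N_G(U)$ is tight in $G-X$ — and then explicitly leave this, ``the heart of the argument,'' as an outline. That is the step the paper actually proves, so what you have is a correct reformulation but not a proof.

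Beyond incompleteness, the sketch you give for that key step is unlikely to go through as stated. You propose to ``split off a nontrivial summand of $\wtm$ disjoint from $S$ in the spirit of Claim~\ref{cl:b2}.'' But the tight sets $U$ that Claim~\ref{cl:b2} rules out are tight in $G$ itself, with $N_G(U)\cap S=\emptyset$, and that is precisely what enables the direct-sum split. Here the hypothetical $U$ is tight only in $G-X$; since the strong Hall condition gives $|N_G(U)|\ge|U|+1>|N_G(U)\setminus X|$, such a $U$ necessarily has $N_G(U)\cap X\neq\emptyset$, and $X\subseteq S$, so $N_G(U)$ meets $S$ and the split-off argument does not apply. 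Handling this interaction with $X$ is exactly where the alternating-path swap earns its keep. (The preliminary reduction to $M$ non-free is also unused: the contradiction with minimality comes from $|B|-1<|B|$, regardless of whether $A-S$ spans $\wtm$.)
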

\begin{proof}
Suppose to the contrary that $F$ has a leaf vertex $a\in A-S$. Let $b\in B$ be the unique neighbor of $a$ in $F$, and let $G'=G-\{a,b\}$ denote the graph obtained by deleting vertices $a$ and $b$ form $G$. Let $M'=(S,\cI')$ denote the restriction of the dual of the transversal matroid defined by $G'$ to $S$. As the strong Hall condition holds for $G$, the maximum size of a matching of $G'$ is $|B|-1$. We claim that $M=M'$, contradicting the minimality of $G$.

Take an arbitrary set $X\in\cI'$. By definition, $G'-X$ has a matching $P'$ covering $B-b$. Then $P'+ab$ is a matching of $G-X$ covering $B$, showing that $\cI'\subseteq \cI$.

To see the opposite direction, consider any set $X\in\cI$. By definition, $G-X$ has a matching $P$ covering $B$. Take an arbitrary matching $P'$ of $G'$ covering $B-b$. Now $|P|=|B|=|B-b|+1=|P'|+1$, hence the symmetric difference $P\triangle P'$ contains an alternating path $Q$ whose first and last edges are in $P$, and one of the end vertices of $Q$ is $b$. Then $P\triangle Q$ is a matching of $G'-X$ covering $B-b$, implying $X\subseteq\cI'$.   
\end{proof}

We denote the difference $|A|-|B|$ by $q$. As $M$ is the restriction of $\wtm$ to $S$, $r_M(S)\leq q$ is clearly satisfied. Moreover, equality holds since, by Claim~\ref{cl:leaf}, every leaf of $F$ is in $S$, and taking an arbitrary leaf in every component of $F$ results a basis of $M$.

Let $\cC$ denote the set of connected components of $F$. Note that the forest might have components consisting of a single vertex of $A$. We have $|\cC|=|A|-|B|=q$ as $|A(C)|=|B(C)|+1$ for each $C\in\cC$. We call a component $C\in\cC$ \textbf{large} if $|S(C)|\geq 2k-1$, \textbf{normal} if $k\leq |S(C)|\leq 2k-2$, and \textbf{small} if $|S(C)|\leq k-1$. We say that a component $C' \in \cC$ is \textbf{reachable} from a component $C'' \in \cC$ if there exists an alternating sequence $C_1,b_1a_2,C_2,b_2a_3,\dots,b_{p-2}a_{p-1},C_{p-1},b_{p-1}a_p,C_p$ of components and edges such that $C_1=C''$, $C_p=C'$, and $b_i\in B(C_i)$, $a_{i+1}\in A(C_{i+1})$ hold for $i=1,\dots,p-1$. Such an alternating sequence is called a \textbf{path}, the \textbf{length} of the path being $p-1$. The \textbf{distance} of $C'$ from $C''$ is the minimum length of a path from $C''$ to $C'$.

 We define a potential function on the set of $B_2$-forests as follows. Let $\nu\gg\mu_1\gg\lambda_1\gg\mu_2\gg\lambda_2\gg\dots\gg\mu_{q-1}\gg\lambda_{q-1}$ be a decreasing sequence of $2q-1$ positive numbers such that the ratio between any two consecutive ones is at least $|A|+2$. Recall that $|\cC|=q$. For a component $C\in\cC$, the minimum distance of $C$ from a large component is denoted by $\dist(C)$. We define $\dist(C)$ to be $+\infty$ if $C$ is not reachable from any of the large components. The \textbf{potential} of the $B_2$-forest $F$ is defined as
\begin{align*}
\varphi(F)
{}&{}=
\nu\cdot\sum_{C\in\cC} \max\{|S(C)|-(2k-2),0\}\hspace{-10mm}&\text{(total violation)}\\
{}&{}~~-
\sum_{i=1}^{q-1}\mu_i\cdot|\{C\in\cC:\ \dist(C)=i\}|&\text{(number of components at distance $i$)}\\
{}&{}~~+
\sum_{i=1}^{q-1}\lambda_i\cdot\sum_{\substack{C\in\cC \\ \dist(C)=i}}|S(C)|. &\text{(number of $S$-vertices in components at distance $i$)}
\end{align*}

Let $F$ be a $B_2$-forest for which $\varphi(F)$ is as small as possible. The following claim concludes the proof of the theorem.

\begin{cl}\label{cl:potential}
$F$ has no large components.
\end{cl}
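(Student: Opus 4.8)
The plan is to argue by contradiction: assume $F$ has a large component, and produce another $B_2$-forest $F'$ with $\varphi(F') < \varphi(F)$, contradicting the minimality of $F$. The potential is a lexicographic-style quantity built from three layers — the total violation $\nu\cdot\sum_C\max\{|S(C)|-(2k-2),0\}$, the (negated, so "more is better") count of components at each distance from a large component, and the count of $S$-vertices in those components — with the weights $\nu\gg\mu_1\gg\lambda_1\gg\cdots$ chosen so large that a change in a higher layer dominates any change below it. So the strategy splits into showing that \emph{as long as a large component exists}, one can always perform a local modification to $F$ that either (a) strictly decreases the total violation, or (b) keeps the violation fixed but increases the number of components reachable from a large one (equivalently, decreases some $\dist$ value), or (c) keeps both fixed but increases $\sum|S(C)|$ over the reachable-at-distance-$i$ components, and that in all three cases $\varphi$ strictly drops.

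\textbf{First} I would set up the reachability/alternating-path machinery. Suppose $C_0$ is large, so $|S(C_0)|\ge 2k-1$. The key structural fact I expect to need is a \emph{counting obstruction}: by Claim~\ref{cl:cond} and the fact that $|\cC|=q=|A|-|B|$ with each component a $B_2$-tree, one cannot have "too many" $S$-vertices concentrated in a bounded region. More precisely, I would look at the union $U$ of all components reachable from $C_0$ (including $C_0$), consider the vertex set $X = A(U)$, and examine $N(X)$: because $U$ is "closed" under the reachability relation, every edge leaving $X$ into $B$ must already land in $B(U)$ (otherwise that edge would extend an alternating path and enlarge $U$) — wait, that isn't quite right since reachability uses $B$-to-$A$ edges of $F$ inside components and arbitrary edges between them; I'd need to be careful, but the upshot should be that $N(X)\subseteq B(U)\cup(\text{a controlled extra set})$, giving $|N(X)|$ small relative to $|X|$, so $\max\{|Y|-|N(Y)|:Y\subseteq X\}$ is large, and then Claim~\ref{cl:cond} forces $|S-X|$ to be small, i.e. $S(U)$ contains almost all of $S$. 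Combined with $\rchi(M)=k$ (so $|S|\le kq$) and $|U|\le q$ components, an averaging argument then says the components \emph{not} in $U$ carry few $S$-vertices, hence $U$ carries $\approx kq$ of them spread over $\le q$ components — but since $C_0$ already has $\ge 2k-1$, some other component of $U$ has $\le k-1$, i.e. is small.

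\textbf{Then} comes the rerouting step, which I expect to be the main obstacle. Having found a small component $C'$ reachable from the large component $C_0$, take a shortest alternating path $C_0 = C_1, b_1a_2, C_2, \dots, b_{p-1}a_p, C_p = C'$. The idea is to "shift along the path": delete from $F$ one edge of $C_1$ incident to $b_1$ and add the edge $b_1a_2$, then delete an edge of $C_2$ at $b_2$ and add $b_2a_3$, and so on, finally deleting an edge of $C_{p-1}$ at $b_{p-1}$ and adding $b_{p-1}a_p$ — but \emph{not} deleting anything in $C_p$. One must check this keeps every $b\in B$ of degree exactly $2$ (each $b_i$ loses one $F$-edge and gains one; all other $b$'s untouched) and that the result is still a forest (this is where care is needed — the path being shortest should prevent creating a cycle, essentially because an alternating path of minimum length cannot have a chord that would close a cycle). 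The effect on potential: the chain of components $C_1,\dots,C_{p-1}$ gets fused, with one fewer edge deleted from $C_{p-1}$ so its leftover piece merges into the growing tree, and $S$-vertices migrate from $C_1$ toward $C_p$; since $C_0$ was large and $C'$ was small, a correctly chosen shift moves one $S$-vertex out of the violating component, decreasing total violation by at least one unit of $\nu$, which dominates everything else in $\varphi$. If instead no single shift decreases the violation (e.g. $C_0$'s excess is "stuck"), I would argue the same reachability set must nonetheless be enlargeable — some component currently at distance $+\infty$ can be pulled in by a swap, decreasing a $\dist$ value and hence decreasing $\varphi$ via the $\mu_i$ terms — so in every case $\varphi$ strictly drops. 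I'd conclude: since $F$ minimizes $\varphi$, no large component can exist, which proves Claim~\ref{cl:potential} and, by the discussion preceding it (each $B_2$-tree component admits a near-perfect matching, so deleting one $A$-vertex per component yields a basis of the dual transversal matroid), establishes $N\preceq_r M$ with $\rchi(N)\le 2k-2$.
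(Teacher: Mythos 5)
Your step 1 is essentially the paper's, though run inside-out, and you noticed yourself that the closure property is awkward in your direction. The paper instead looks at the \emph{complement}: it sets $\cC'$ to be the components \emph{not} reachable from any large component, takes $X=\bigcup\{A(C):C\in\cC'\}$, and uses the definition of reachability (edges point from $B(C_i)$ to $A(C_{i+1})$) to show $N(X)=\bigcup\{B(C):C\in\cC'\}$ exactly, so $|X|-|N(X)|=|\cC'|$; since every component outside $\cC'$ is normal or large with at least one large, $|S-X|\geq k|\cC-\cC'|+1$, contradicting Claim~\ref{cl:cond}. That fix is minor. Also note the paper picks the small component $C_0$ at \emph{minimum} distance from the set of large components, not merely any small component reachable from a fixed large $C_0$; this minimality is what later guarantees every component strictly before $C_0$ on the path is non-small.

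The genuine gap is in the rerouting step, which is where the $\mu_i,\lambda_i$ layers of the potential actually earn their keep — and your sketch does not engage with them. The paper does \emph{not} cascade a shift down the whole path. It performs a single local surgery at the far end: let $C_1$ be the component immediately before $C_0$ on the shortest path, with connecting edge $ab$ ($a\in A(C_0)$, $b\in B(C_1)$); removing $b$ from $C_1$ leaves two pieces $C_1^x$ and $C_1^y$, and one replaces $\{C_0,C_1\}$ by $\{C_0+ab+b\cdot+C_1^\cdot,\, C_1^{\cdot}\}$. If $C_1$ is large, a size count using Claim~\ref{cl:leaf} shows one of the two choices of piece strictly decreases the top-layer violation term. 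The hard case — which your plan does not cover — is when $C_1$ is \emph{normal}: then the violation term need not decrease at all, and the paper instead uses the edge $b'a'$ from the predecessor $C_2$ (which lands in, say, $C_1^x$) to argue that one of the two surgeries keeps the near-distance structure intact and either (i) increases the number of components at distance $\dist(C_1)$, dropping $\varphi$ via the $\mu_{\dist(C_1)}$ term, or (ii) decreases the $S$-vertex count at distance $\dist(C_1)$, dropping $\varphi$ via the $\lambda_{\dist(C_1)}$ term; a careful split on whether $|S(C_1^x)|\geq|S(C_1)|-|S(C_0)|$ or not is needed to make sure the newly-formed component does not itself become large. Your cascading-shift idea risks temporarily creating large intermediate components and you give no invariant controlling their sizes, and your fallback ("some component at distance $+\infty$ can be pulled in") is an unproved assertion, not a construction. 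You would need to replace the cascade by the paper's single end-of-path swap together with the two-case analysis on $|S(C_1^x)|$ to close the argument.
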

\begin{proof}
Suppose indirectly that there exists a large component. By applying Claim~\ref{cl:cond}
with $X=\emptyset$, $|S|\leq k\cdot(|A|-|B|)=k\cdot |\cC|$, hence, by the pigeonhole principle, there exists a small component as well.

First we show that there exists a small component that is reachable from a large component. Suppose indirectly that this is not true, and let $\cC'\subseteq\cC$ denote the set of components that are not reachable from a large component. Note that $\cC'$ consists of normal and small components. Define $X=\bigcup\{ A(C):\ C\in\cC'\}$. By the definition of reachability, $N(X)=\bigcup\{B(C):\ C\in\cC'\}$ and so $|X|-|N(X)|=|\cC'|$. As every component in $\cC-\cC'$ is either normal or large and there is at least one large component, $|S-X|\geq k\cdot|\cC-\cC'|+1$. Then
\begin{align*}
k\cdot\max\{|Y|-|N(Y)|:\ Y\subseteq X\}
{}&{}\geq
k\cdot(|X|-|N(X)|)\\
{}&{}=
k\cdot|\cC'|\\
{}&{}=
k\cdot(|\cC|-|\cC-\cC'|)\\
{}&{}=
k\cdot(|A|-|B|)-k\cdot|\cC-\cC'|\\
{}&{}\geq
k\cdot(|A|-|B|)-|S-X|+1,
\end{align*}
contradicting Claim~\ref{cl:cond}.

\begin{figure}[t!]
\centering
\begin{subfigure}[t]{0.47\textwidth}
  \centering
  \includegraphics[width=.95\linewidth]{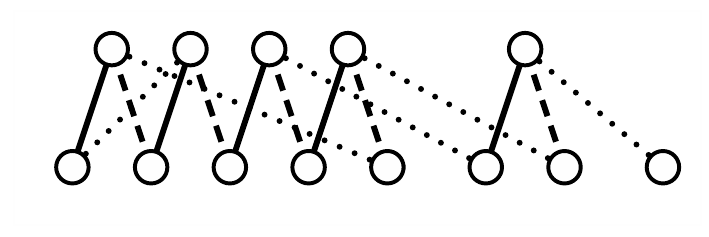}
  \caption{A graph $G=(A,B;E)$ with three matchings of size $|B|$ such that every vertex in $A$ is covered by at most two of them.}
  \label{fig:gammoid1}
\end{subfigure}\hfill
\begin{subfigure}[t]{0.45\textwidth}
  \centering
  \includegraphics[width=.95\linewidth]{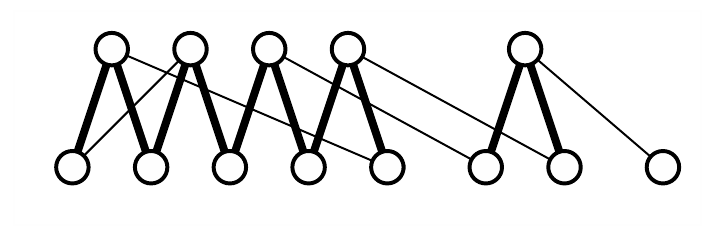}
  \caption{A $B_2$-forest $F$ of $G$. For simplicity, every component of $F$ is chosen to be a path.}
  \label{fig:gammoid2}
\end{subfigure}\vspace{5pt}
\begin{subfigure}[b]{0.45\textwidth}
  \centering
  \includegraphics[width=.95\linewidth]{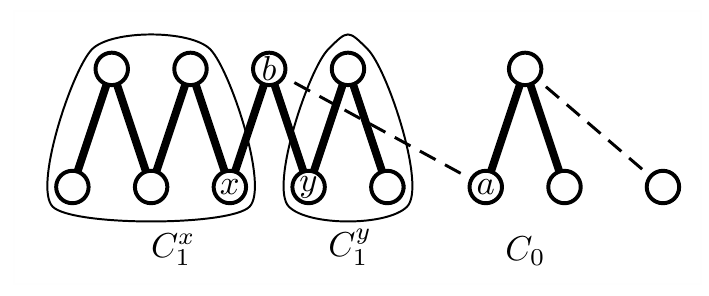}
  \caption{Alternating structure on the connected components of $F$.}
  \label{fig:gammoid3}
\end{subfigure}\hfill
\begin{subfigure}[b]{0.47\textwidth}
  \centering
  \includegraphics[width=.95\linewidth]{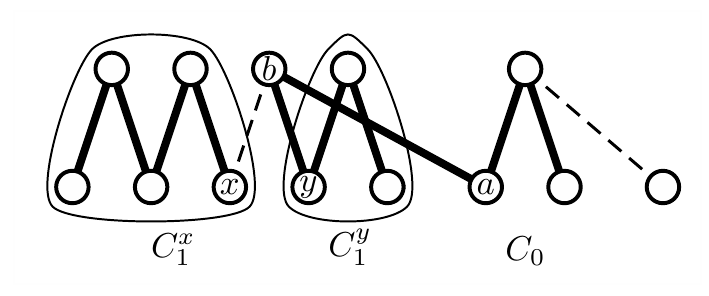}
  \caption{The $B_2$-forest we obtain by substituting $C_0$ and $C_1$ by $C_0+ab+by+C^y_1$ and $C^x_1$.}
  \label{fig:gammoid4}
\end{subfigure}
\caption{An illustration of the proof of Theorem~\ref{thm:gammoid}. In the example, $k=3$ and $S=A$. The only large component of $F$ is $C_1$, all the other components are small.}
\label{fig:gammoid}
\end{figure}

Let $C_0$ be a small component with $\dist(C_0)$ being minimal. By the above, $\dist(C_0)<+\infty$. Consider a shortest path from the set of large components to $C_0$, and let $C_1$ be the last component on the path before $C_0$. By the definition of a path, there exists an edge $ab$ with $a\in A(C_0)$ and $b\in B(C_1)$. Let $x,y\in A(C_1)$ denote the neighbors of $b$ in $C_1$. The deletion of $b$ from $C_1$ results in two connected components $C^x_1$ and $C^y_1$ such that $x \in C^x_1$ and $y \in C^y_1$ (see Figure~\ref{fig:gammoid}).

Assume first that $C_1$ is a large component. As $|S(C_1)|\geq 2k-1$ and $|S(C_0)|\leq k-1$, either $|S(C_0+ab+bx+C^x_1)|<|S(C_1)|$ or $|S(C_0+ab+by+C^y_1)|<|S(C_1)|$ by Claim~\ref{cl:leaf}. Hence substituting $C_0$ and $C_1$ either by $C_0+ab+bx+C^x_1$ and $C^y_1$ or by $C_0+ab+by+C^y_1$ and $C^x_1$ decreases the total violation in $\varphi(F)$, a contradiction.

Therefore $C_1$ is a normal component, and there is another non-small component $C_2$ before $C_1$ on the shortest path from the set of large components to $C_0$, together with an edge $b'a'$ with $a'\in A(C_1)$ and $b'\in A(C_2)$. We may assume that $a'\in C^x_1$.  We distinguish two cases.\\

\noindent\textbf{Case 1.} $|S(C^x_1)|\geq|S(C_1)|-|S(C_0)|$

Modify $F$ by substituting components $C_0$ and $C_1$ by $C_0+ab+by+C^y_1$ and $C^x_1$, respectively. By the assumption, $|S(C_0+ab+by+C^y_1)|=|S(C_0)|+|S(C_1)|-|S(C^x_1)|\leq 2\cdot|S(C_0)|\leq 2k-2$, thus no new large component appears. Furthermore, the set of components with distance less than $\dist(C_1)$ does not change. The distance of $C^x_1$ remains $\dist(C_1)$ because of the edge $b'a'$. 
If the distance of $C_0+ab+by+C^y_1$ is $\dist(C_1)$, then the number of components at distance $\dist(C_1)$ increases.
Otherwise, the distance of $C_0+ab+by+C^y_1$ is at least $\dist(C_1)+1$, hence the number of $S$-vertices in components at distance $\dist(C_1)$ decreases by Claim~\ref{cl:leaf}. In both cases, $\varphi(F)$ decreases, a contradiction.\\

\noindent\textbf{Case 2.} $|S(C^x_1)|<|S(C_1)|-|S(C_0)|$

Modify $F$ by substituting components $C_0$ and $C_1$ by $C_0+ab+bx+C^x_1$ and $C^y_1$, respectively. By the assumption, $|S(C_0+ab+bx+C^x_1)|\leq|S(C_0)|+|S(C_1)|-|S(C_0)|=|S(C_1)|$. As $C_1$ is normal, no new large component appears.  Furthermore, the set of components with distance less than $\dist(C_1)$ does not change. The distance of $C_0+ab+bx+C^x_1$ remains $\dist(C_1)$ because of the edge $b'a'$. The distance of $C^y_1$ is either $\dist(C_1)$ or $\dist(C_0)$. In the former case, the number of components at distance $\dist(C_1)$ increases, while in the latter case, the number of $S$-vertices in components at distance $\dist(C_1)$ decreases as $|S(C_1^x)| + |S(C_0)| < |S(C_1)|$. In both cases, $\varphi(F)$ decreases, a contradiction.
\end{proof}

By Claim~\ref{cl:potential}, $F$ has no large component. As we have seen before, the partition matroid $N=(S,\cJ)$ corresponding to partition classes $S(C)$ for $C\in\cC$ is a reduction of the original gammoid $M$ with coloring number at most $2k-2$. By Claim~\ref{cl:leaf}, $S(C)$ is non-empty for every $C\in\cC$ and $r_M(S)=|\cC|=q$, hence the reduction is rank preserving.

The bound on the coloring number of $N$ is tight. Consider the laminar matroid $M=(S,\cI)$ defined by the laminar family $\{S,S_1,\dots,S_k\}$ where $S_1\cup\dots\cup S_k$ is a partition of $S$ into subsets of size $k-1$. That is, the size of the ground set $S$ is $k^2-k$. We define a set $X\subseteq S$ to be independent in $M$ if $|X\cap S_i|\leq 1$ for $i=1,\dots,k$, and $|X|\leq k-1$. It is not difficult to see that $M$ is a strict gammoid with coloring number $k$.

We claim that if $N\preceq M$ is a partition matroid, then $\rchi(N)\geq 2k-2$. Let $P_1\cup \dots\cup P_q$ denote the partition defining $N$. Then every $S_i$ is a subset of some $P_j$, as otherwise there exists two elements $x,y\in S_i$ such that $x\in P_a$ and $y\in P_b$ for $a\neq b$, implying that $\{x,y\}$ is independent in $N$ but dependent in $M$, a contradiction. As the rank of $M$ is $k-1$, we have $q\leq k-1$. By the above, there exists a class $P_j$ that contains at least two of the $S_i$'s, and so has size at least $2k-2$, proving $\rchi(N)\geq 2k-2$.
\end{proof}

For the first sight, the proof seems to provide a polynomial-time algorithm for determining the partition matroid, assuming that a digraph $D=(V,A)$ representing the gammoid is given. A bipartite graph $G=(A,B;E)$ representing $\wtm$ can be constructed from $D$ (see e.g. \cite{frank2011connections}). The reductions appearing in the proofs of Claims~\ref{cl:b2} and \ref{cl:leaf} can be performed in polynomial time, hence we may assume that $G$ contains a $B_2$-forest $F$. Such a forest can be found by \cite{lovasz1970generalization}. By using the alternating structure described in the proof of Claim~\ref{cl:potential}, we can modify $F$ to get a $B_2$-forest in which every component contains at most $2k-2$ vertices from $S$. However, it is not clear how to bound the number of augmentation steps as the coefficients in the potential function can be exponential. An interesting question is whether this procedure terminates after a polynomial number of steps.

\section{Truncation and reducibility}
\label{sec:trun}

\thmtruncation*
\begin{proof}
Let $M=(S,\cI)$ denote a matroid of rank $r$ that is reducible to a $2\rchi(M)$-colorable partition matroid $N$.
As every $k$-truncation of $M$ can be obtained by a series of $r-1, r-2, \dots, k$-truncations, it suffices to prove that the $(r-1)$-truncation $M'$ of $M$ is reducible to a $2\rchi(M')$-colorable partition matroid.

Let $S=S_1 \cup \dots \cup S_q$ denote the partition that defines $N$. We may assume that $|S_1| \ge \dots \ge |S_q|$.
If $q \le r-1$, then $N$ is already a $2\rchi(M)$-colorable reduction of $M'$ and the claim follows by $\rchi(M') \ge \rchi(M)$. Hence assume that $q=r$.
Consider the partition matroid $N'$ defined by the partition classes $S_1, S_2, \dots, S_{r-2}, S_{r-1}\cup S_r$.
Then $N'$ is a reduction of $M'$, hence it is sufficient to prove that $N'$ is $2\rchi(M')$-colorable.

If $|S_{r-1}|+|S_r| \le |S_1|$, then $\rchi(N')=|S_1|=\rchi(N) \le 2\rchi(M) \le 2\rchi(M')$. Otherwise $|S_{r-1}|+|S_r| > |S_1|$, and so $\rchi(N')=|S_{r-1}|+|S_r|$.
Using $|S|=|S_1|+\dots + |S_r|$ and $|S_i| \ge (|S_{r-1}|+|S_r|)/2$ for $i=1, 2, \dots, r-2$, we get 
\begin{align*}
|S|
{}&{}\ge
(r-1)\cdot \frac{|S_{r-1}|+|S_r|}{2}+|S_{r-1}|+|S_r|\\
{}&{}=
\frac{r+1}{2}\cdot (|S_{r-1}|+|S_r|)\\
{}&{}=\frac{r+1}{2}\cdot \rchi(N').
\end{align*}
That is, \[\rchi(N')\le \frac{2|S|}{r+1} < 2\cdot \frac{|S|}{r-1} \le 2\rchi(M'),\] concluding the proof of the theorem.
\end{proof}

\begin{rem}
Note that an analogous statement holds if we replace $2\rchi(M)$ by $2\rchi(M)-1$, as we proved $\rchi(N') < 2\rchi(M')$ in the second case. As laminar matroids can be obtained from free matroids by taking direct sums and truncations, Theorem~\ref{thm:truncation} provides a simple proof that every $k$-colorable laminar matroid is reducible to a $(2k-1)$-colorable partition matroid. As laminar matroids form a subclass of gammoids, Theorem~\ref{thm:gammoid} implies that the bound can be improved to $2k-2$. However, it is not clear whether the analogue of Theorem~\ref{thm:truncation} holds if we replace $2\rchi(M)$ by $2\rchi(M)-2$.
\end{rem}

\section{Reduction to strongly base orderable matroids} \label{sec:sbo}

Edmonds and Fulkerson \cite{edmonds1965transversals} characterized the existence of $k$ disjoint bases in a single matroid. For the case of two matroids $M_1=(S,\cI_1)$ and $M_2=(S,\cI_2)$, Edmonds' celebrated matroid intersection theorem characterizes the maximum size of a common independent set \cite{edmonds1970submodular}. A fundamental problem of matroid optimization is to characterize the existence of a partitioning into $k$ common independent sets of two matroids. The importance of the problem is underpinned by a long list of well-studied conjectures that can be formalized as a special cases, such as Rota's beautiful conjecture on the rearrangements of bases \cite{huang1994relations}, or Woodall's conjecture on packing dijoins in a directed graph \cite{woodall1978menger}. Recently, the first two authors verified that the problem is difficult under the rank oracle model \cite{BS2019}. Although this result settles the complexity of the problem in general, it has no implications on its special cases. Hence finding matroid classes for which the problem becomes tractable is of interest.

There are only a few cases in which a proper characterization is known. These problems include the classical results of K\H{o}nig on $1$-factorization of bipartite graph \cite{konig1916graphen}, Edmonds' theorem on the existence of $k$ disjoint spanning arborescences of a digraph \cite{edmonds1973edge}, and the result of Keijsper and Schrijver on packing connectors \cite{keijsper1998packing}.

In general, there is a natural necessary condition for the existence of a partition into $k$ common independent sets: the ground set has to be partitionable into $k$ independent sets in both matroids. This condition is not sufficient in general. However, Davies and McDiarmid \cite{davies1976disjoint} observed that it is sufficient when both matroids are strongly base orderable. A matroid is \textbf{strongly base orderable} if for every two bases $B_1$ and $B_2$, there is a bijection $\gamma\colon B_1\rightarrow B_2$ with the property that $(B_1-X)\cup\gamma(X)$ is a basis for any $X\subseteq B_1$.

\begin{thm}[Davies and McDiarmid]\label{thm:dd}
Let $M_1=(S,\cI_1)$ and $M_2=(S,\cI_2)$ be strongly base orderable matroids. If $S$ can be partitioned into $k$ independent sets in both $M_1$ and $M_2$, then $S$ can be partitioned into $k$ common independent sets.
\end{thm}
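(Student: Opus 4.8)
The plan is to argue by induction on $k$. For $k=1$ the hypothesis says that $S$ itself is independent in both $M_1$ and $M_2$, so $S$ is a common independent set and there is nothing to prove. For the inductive step, by Corollary~\ref{cor:ef} it suffices to exhibit one common independent set $C\subseteq S$ of $M_1$ and $M_2$ such that $|Y|\le (k-1)\,r_i(Y)$ for every $Y\subseteq S\setminus C$ and $i=1,2$; equivalently, such that $M_1|_{S\setminus C}$ and $M_2|_{S\setminus C}$ are both $(k-1)$-colorable. Since the class of strongly base orderable matroids is closed under restriction, these two restricted matroids are again strongly base orderable, so applying the induction hypothesis to $S\setminus C$ yields a partition of $S\setminus C$ into $k-1$ common independent sets, and adjoining $C$ completes the partition of $S$ into $k$ common independent sets.

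Thus the argument reduces to producing a single ``well-spread'' common independent set $C$ — one that, roughly, meets every rank-tight subset of $M_1$ in a basis of $M_1$ and every rank-tight subset of $M_2$ in a basis of $M_2$ — and it is here that strong base orderability must be used. I would start from a proper $M_1$-coloring $S=A_1\cup\dots\cup A_k$ and a proper $M_2$-coloring $S=B_1\cup\dots\cup B_k$, which exist by hypothesis and Corollary~\ref{cor:ef}, and attempt to reconcile them. The key device is the defining bijection of strong base orderability: after extending a color class to a basis of $M_i$, the bijection lets one transfer an entire subset of elements between two color classes while keeping independence in $M_i$ under control, instead of moving one element at a time. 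Chaining such transfers along an alternating structure built \emph{simultaneously} on the classes $A_1,\dots,A_k$ and on $B_1,\dots,B_k$ — in the spirit of the augmenting-path argument underlying Theorem~\ref{thm:galvin}, which is exactly the special case in which each color class is a single element of a part — one drives down a potential measuring how far the current coloring is from being proper in $M_2$ (or, alternatively, the disagreement between the two colorings), until a common proper coloring is obtained; any of its color classes may then be taken as $C$.

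The main obstacle is precisely this reconciliation step. One must show that at every stage a legal exchange exists — this is where the \emph{bijection} property of strong base orderability, as opposed to mere base orderability, is essential, since it guarantees that a coherent set of elements, rather than just a single one, can be transferred — and that the process terminates. The delicate point is making the exchanges dictated by $M_1$ and by $M_2$ compatible, so that repairing a defect with respect to one matroid does not undo the progress already made with respect to the other; this forces one to track the two bijections jointly along the alternating structure. Once this simultaneous-exchange lemma is in place, the reduction to a single common independent set and the induction on $k$ are routine.
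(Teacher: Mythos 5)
The paper cites Theorem~\ref{thm:dd} from Davies and McDiarmid rather than proving it, so there is no internal proof to compare against; the proposal must be judged on its own.

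Your inductive framing is sound: restrictions of strongly base orderable matroids are strongly base orderable, so it does indeed suffice to produce a single common independent set $C$ with $M_1|_{S\setminus C}$ and $M_2|_{S\setminus C}$ both $(k-1)$-colorable, and Corollary~\ref{cor:ef} gives the right reformulation of that condition. But the heart of the argument — what you call the ``simultaneous-exchange lemma'' — is not established, only described at the level of intent. You acknowledge it is ``the main obstacle,'' and nothing in the sketch shows either that a legal exchange always exists at an intermediate state, or that the potential you gesture at is well-defined and strictly decreases. As written, the proposal identifies the correct reduction but leaves the entire content of the theorem unproven.

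Two further points of concern. First, the appeal to Theorem~\ref{thm:galvin} as a template is misleading: Galvin's theorem is a \emph{list}-coloring result and its proof goes through kernels of line graphs, which is an entirely different mechanism from augmenting exchanges; the corresponding non-list partition-matroid case of the present theorem is just K\H{o}nig's edge-coloring theorem. Second, the actual Davies--McDiarmid argument does not run a simultaneous potential descent over both matroids. It isolates a clean exchange lemma for a \emph{single} strongly base orderable matroid — roughly, given two independent sets partitioning a set and a target bipartition, the defining bijection lets one reshuffle to a new partition into independent sets that conforms better to the target — and then applies it alternately to $M_1$ and $M_2$ in a controlled induction. Formulating and proving that one-matroid reshuffling lemma is exactly the missing step; without it the proposal is a plan, not a proof.
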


Given a matroid $M=(S,\cI)$, an element $v \in S$ is said to be \textbf{$k$-spanned} if there are $k$ disjoint sets that span $v$. Kotlar and Ziv \cite{kotlar2005partitioning} proved that if $M_1$ and $M_2$ are matroids on $S$ and no element is 3-spanned in $M_1$ or $M_2$, then $S$ can be partitioned into $2$ common independent sets. They conjectured that this can be generalized to arbitrary $k$: if no element is $(k+1)$-spanned in $M_1$ or $M_2$, then $S$ can be partitioned into $k$ common independent sets. It is worth mentioning that if no element is $(k+1)$-spanned in a matroid then the matroid is $k$-colorable, hence the natural necessary condition is satisfied in this case. In \cite{takazawa2019generalized}, Takazawa and Yokoi proposed a new approach building upon the generalized-polymatroid intersection theorem. Their result gives a new interpretation of that of Kotlar and Ziv, and extends the list of those pairs of matroid classes for which a characterization is known for the existence of a partition into $k$ common independent sets. 

Aharoni and Berger \cite{aharoni2006intersection} proposed the following conjecture that would give the best possible upper bound for the minimum number of common independent sets of two matroids needed to cover the ground set.

\begin{conj}[Aharoni and Berger]\label{conj:ab2}
Let $M_1=(S,\cI_1)$ and $M_2=(S,\cI_2)$ be loopless matroids. Then $S$ can be partitioned into $\max\{\rchi(M_1),\rchi(M_2)\}+1$ common independent sets. 
\end{conj}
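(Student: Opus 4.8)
The plan is to route Conjecture~\ref{conj:ab2} through the reduction framework of the paper, but now targeting strongly base orderable (SBO) matroids rather than partition matroids, so as to avoid the factor~$2$ incurred by Conjecture~\ref{con:red} and Theorem~\ref{thm:ab}. The mechanism that makes this possible is the downward compatibility of reductions with common independence: if $N_1\preceq M_1$ and $N_2\preceq M_2$, then $\cJ_1\cap\cJ_2\subseteq\cI_1\cap\cI_2$, so any partition of $S$ into common independent sets of $N_1$ and $N_2$ is automatically a partition into common independent sets of $M_1$ and $M_2$. Partition matroids are SBO and are handled by Galvin's Theorem~\ref{thm:galvin}, but the sharper input available here is Theorem~\ref{thm:dd} of Davies and McDiarmid: two SBO matroids that are each partitionable into $t$ independent sets are jointly partitionable into $t$ common independent sets, with no additive overhead at all.

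Accordingly, I would reduce Conjecture~\ref{conj:ab2} to the following single-matroid statement: every matroid $M$ admits a reduction $N\preceq M$ with $N$ strongly base orderable and $\rchi(N)\le\rchi(M)+1$. Granting this, set $t=\max\{\rchi(M_1),\rchi(M_2)\}$ and apply it to $M_1$ and $M_2$ to obtain SBO matroids $N_i\preceq M_i$ with $\rchi(N_i)\le\rchi(M_i)+1\le t+1$. Then $N_1$ and $N_2$ are each partitionable into $t+1$ independent sets, so Theorem~\ref{thm:dd} gives a partition of $S$ into $t+1$ common independent sets of $N_1$ and $N_2$; transporting this partition up through $N_i\preceq M_i$ yields a partition of $S$ into $t+1=\max\{\rchi(M_1),\rchi(M_2)\}+1$ common independent sets of $M_1$ and $M_2$, which is the conjecture. (An SBO reduction with $\rchi(N)\le\rchi(M)$ would even give the bound $\max\{\rchi(M_1),\rchi(M_2)\}$; that is presumably too optimistic in general, which is why the proposal asks only for an additive~$+1$.)

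The entire difficulty is thus concentrated in the single-matroid SBO reduction claim, and I expect this to be the main obstacle, as it is an open problem in its own right. The partition-matroid constructions of Sections~\ref{sec:easy} and~\ref{sec:gammoid} give reductions with $\rchi(N)$ of order $2\rchi(M)$ in the worst case; the point of passing to SBO targets is that these matroids are far more flexible than partition matroids, since they need not split the ground set into blocks with an all-or-nothing dependence pattern, so one can hope to shrink the overhead from multiplicative~$2$ to additive~$1$. A natural attempt is to imitate the gammoid argument of Theorem~\ref{thm:gammoid}: build a degree-bounded forest on an auxiliary (bipartite or linear-algebraic) representation, set up an alternating/augmenting structure on its components together with a potential function, push every component below the $\rchi(M)+1$ threshold, and then certify that the matroid assembled from these components is SBO by producing, for each pair of its bases, a block-respecting exchange bijection. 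The two hard points I foresee are (i) that a general matroid need not be a gammoid, or even linear or representable at all, so one first needs a uniform ``skeleton'' on which to run such an argument, and (ii) that strong base orderability is fragile under matroid operations, so verifying it for the constructed $N$ will likely require an ad hoc combinatorial certificate rather than a closure property. As a warm-up I would first confirm the $(\rchi(M)+1)$-colorable SBO reduction for the classes already treated here, namely transversal, graphic, paving and gammoids (and hence uniform and laminar), where an SBO target should comfortably beat the partition-matroid bounds $\rchi(M)$, $2\rchi(M)-1$, $\lceil r\rchi(M)/(r-1)\rceil$ and $2\rchi(M)-2$ obtained in the paper.
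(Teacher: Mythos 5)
Your proposal reconstructs exactly the route the paper itself lays out in Section~\ref{sec:sbo}: the statement in question is Aharoni and Berger's conjecture, and the paper does not prove it but observes that it would follow from Conjecture~\ref{conj:sbo} (every matroid $M$ has a $(\rchi(M)+1)$-colorable strongly base orderable reduction) combined with Theorem~\ref{thm:dd} of Davies and McDiarmid, using the same transport-through-reduction mechanism $\cJ_1\cap\cJ_2\subseteq\cI_1\cap\cI_2$ that you describe. So the approach is not a new proof but a faithful rediscovery of the paper's own reduction of one open problem to another; you correctly isolate the SBO-reduction claim as carrying the entire weight. Two points worth calibrating. First, the paper's authors explicitly say they ``do not expect Conjecture~\ref{conj:sbo} to hold in general,'' so your framing of this as a plausible attack should be tempered: the paper offers it as a lens for identifying tractable special cases, not as a believed pathway to the full conjecture, and any attempt should probably start by looking for counterexamples rather than a general proof. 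Second, your parenthetical remark that an SBO reduction with $\rchi(N)\le\rchi(M)$ is ``presumably too optimistic'' is confirmed rather than merely presumed: the Remark at the end of Section~\ref{sec:sbo} proves via the graphic matroid of $K_4$ that a $k$-colorable matroid need not have a $k$-colorable SBO reduction, so the $+1$ in Conjecture~\ref{conj:sbo} is genuinely necessary, not just a cautious hedge.
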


The idea of reducing a matroid to a partition matroid can be generalized to strongly base orderable matroids. Let $M_1$ and $M_2$ be arbitrary matroids on the same ground set, and assume that they are reducible to $k$-colorable strongly base orderable matroids $N_1$ and $N_2$, respectively. Then, by Theorem~\ref{thm:dd}, $S$ can be decomposed into $k$ common independent sets of $N_1$ and $N_2$. As $N_1\preceq M_1$ and $N_2\preceq M_2$, this gives a partition of $S$ into $k$ common independent sets of $M_1$ and $M_2$. 

In particular, the following statement strengthens Conjecture~\ref{conj:ab2}.

\begin{conj}\label{conj:sbo}
Let $M=(S,\cI)$ be a $k$-colorable matroid. Then $M$ has a $(k+1)$-colorable strongly base orderable reduction.  
\end{conj}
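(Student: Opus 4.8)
The plan is to reduce the conjecture to a statement about gammoids. It is well known that every gammoid is strongly base orderable, and partition matroids obviously are too, so it suffices to prove that every $k$-colorable matroid $M$ admits a $(k+1)$-colorable gammoid reduction $N\preceq M$. Two observations frame the problem. First, a $(k+1)$-colorable \emph{partition} matroid reduction need not exist: by Theorem~\ref{thm:graphic} the graphic matroid of $K_{2k}$ forces a partition class of size $2k-1>k+1$ once $k\geq 3$, so the extra modelling power of gammoids (or at least of strongly base orderable matroids) has to be used in an essential way. Second, for a gammoid $M$ one may simply take $N=M$, and every loopless matroid of rank at most two is already strongly base orderable, so the first genuinely nontrivial instances are graphic matroids, paving matroids of rank at least three, and in particular the rank-$3$ matroids coming from projective planes, for which the best possible partition matroid reductions (Theorems~\ref{thm:graphic} and \ref{thm:paving} and Corollary~\ref{cor:pav}) have coloring number of order $2k$ or $3k/2$; obtaining the bound $k+1$ already for these classes, with honestly non-partition strongly base orderable targets, would be a natural first milestone.

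For the main construction I would fix a decomposition $S=C_1\cup\dots\cup C_k$ of $S$ into independent sets of $M$ and try to encode it by a digraph $D=(V,A)$ with $S\subseteq V$ whose induced gammoid $N$ on ground set $S$ is a $(k+1)$-colorable reduction of $M$. The reduction requirement $\cJ\subseteq\cI$ amounts to demanding that no circuit of $M$ can be linked into $S$ in $D$, which one should try to enforce by making the relevant ``cut capacities'' of $D$ tight across each $C_i$ and across the supports of circuits of $M$; the bound $\rchi(N)\leq k+1$ should then follow, via Corollary~\ref{cor:ef}, from a Hall--Rado deficiency inequality for $D$ analogous to Claim~\ref{cl:cond}. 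If a one-shot construction turns out to be too rigid, the fallback is to start from an arbitrary feasible $D$ (equivalently, from any strongly base orderable reduction of the right rank) and to improve it by an alternating-structure and potential-function argument in the spirit of the proof of Theorem~\ref{thm:gammoid}, the crucial difference being that the admissible local modifications now range over all gammoids rather than only over partition matroids, which gives considerably more room.

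An orthogonal route is induction on the coloring number. Peeling off the last colour class, $M|_{C_1\cup\dots\cup C_{k-1}}$ is $(k-1)$-colorable, hence by induction has a $k$-colorable strongly base orderable reduction $N'$; one then has to ``add back'' the independent set $C_k$ so as to obtain a strongly base orderable reduction of $M$ whose coloring number increases by at most one. Here one can lean on the fact that strong base orderability is preserved by restriction, contraction, duality and direct sum, and it would also help to prove a truncation-closure analogue of Theorem~\ref{thm:truncation}, namely that the family of matroids having a $(\rchi(M)+1)$-colorable strongly base orderable reduction is closed under truncation; combined with direct constructions for full-rank matroids this could settle large subclasses.

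The hard part, and the reason the statement is only conjectured, is that three properties must be controlled simultaneously: the reduction property $\cJ\subseteq\cI$, strong base orderability of $N$, and the \emph{tight} bound $\rchi(N)\leq k+1$ (note that $\rchi(N)\geq k$ is automatic for a rank preserving $N$). The proof of Theorem~\ref{thm:gammoid} already needed an intricate potential argument just to reach $2k-2$ for gammoid inputs, so getting down to $k+1$ for an arbitrary matroid surely requires a genuinely new idea; moreover, since Conjecture~\ref{conj:sbo} implies the Aharoni--Berger Conjecture~\ref{conj:ab2} via Theorem~\ref{thm:dd}, any complete proof is at least as hard as that open problem. A reasonable intermediate goal is therefore to establish the weaker bound $\rchi(N)\leq 2k$ (or $2k-1$, or $2k-2$) with a strongly base orderable --- rather than merely partition --- reduction for every matroid, which would itself yield new cases of Aharoni--Berger-type partition results through Theorem~\ref{thm:dd}.
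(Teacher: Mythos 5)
This statement is Conjecture~\ref{conj:sbo}; the paper does not prove it, and indeed explicitly notes that it strengthens the open Aharoni--Berger Conjecture~\ref{conj:ab2}, so a proof would be a major result. Your proposal, to its credit, does not claim a proof: it is an honest survey of plausible attack routes together with a correct assessment of the obstacles. Within that scope your observations check out. Gammoids are indeed strongly base orderable (as are transversal matroids and partition matroids), so a gammoid reduction would suffice; the $K_{2k}$ example from Theorem~\ref{thm:graphic} correctly shows that a \emph{partition} matroid target cannot reach $k+1$ in general, so the extra freedom of strongly base orderable (or gammoid) targets is genuinely needed; loopless rank-$\le 2$ matroids are strongly base orderable; and the reduction $\rchi_\ell$-style argument via Theorem~\ref{thm:dd} shows Conjecture~\ref{conj:sbo} implies Conjecture~\ref{conj:ab2}, which the paper itself spells out in Section~\ref{sec:sbo}. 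Your suggestion of proving a truncation-closure statement analogous to Theorem~\ref{thm:truncation} for $(\rchi(M)+1)$-colorable strongly base orderable reductions, and of trying the potential-function machinery from Theorem~\ref{thm:gammoid} with a richer class of admissible moves, are both reasonable intermediate targets, but neither is carried out, and the induction-on-$k$ route leaves the crucial step (absorbing the last color class while keeping $\cJ\subseteq\cI$, strong base orderability, and $\rchi\le k+1$ simultaneously) completely open. In short: nothing here is wrong, but nothing here is a proof either, and that matches the status of the statement in the paper.

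One small point worth adding for calibration: the paper remarks that it does \emph{not} expect Conjecture~\ref{conj:sbo} to hold in full generality; the Remark at the end of Section~\ref{sec:sbo} shows the bound $k+1$ cannot be improved to $k$ even for $M$ the graphic matroid of $K_4$. So when you propose a fallback goal of proving $\rchi(N)\le 2k$ (or $2k-1$, $2k-2$) with a strongly base orderable target for all matroids, that is exactly the spirit of Conjecture~\ref{con:red} and Theorems~\ref{thm:transversal}--\ref{thm:gammoid}, and it would already give new Aharoni--Berger-type packing results through Theorem~\ref{thm:dd} --- a sensible and well-posed intermediate problem, more tractable than the full conjecture.
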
 
Indeed, if Conjecture~\ref{conj:sbo} is true, then $M_1$ and $M_2$ have $(\rchi(M_1)+1)$ and $(\rchi(M_2)+1)$-colorable strongly base orderable reductions $N_1$ and $N_2$, respectively. By Theorem~\ref{thm:dd}, $S$ can be decomposed into $\max\{\rchi(M_1),\rchi(M_2)\}+1$ common independent sets, thus proving Conjecture~\ref{conj:ab2}. Although we do not expect Conjecture~\ref{conj:sbo} to hold in general, it might help to identify special cases for which the Aharoni--Berger conjecture holds.

\begin{rem}
The bound of $k+1$ for the strongly base orderable reduction of $M$ in Conjecture~\ref{conj:sbo} is best possible, that is, a $k$-colorable matroid is not necessarily reducible to a $k$-colorable strongly base orderable matroid. 

Consider the graphic matroid $M$ of the complete graph $K_4$ on four vertices. It is not difficult to check that $M$ is not strongly base orderable. Suppose indirectly that $M$ is reducible to a 2-colorable strongly base orderable matroid $N$. Let $\cB_M$ and $\cB_N$ denote the sets of bases of $M$ and $N$, respectively. Then $\cB_M$ consists of 12 paths of length three and 4 stars. As $N$ is 2-colorable, its ground set can be partitioned into two disjoint bases, thus $\cB_N$ contains a path of length three and its complement (since the complement of a star with three edges is a triangle).

In what follows, we see $\cB_N = \cB_M$, which contradicts that $N$ is strongly base orderable.
Let $v_1, v_2, v_3, v_4$ denote the vertices of $K_4$. By the above, we may assume that $B_1 = \{v_1v_2, v_2v_3, \allowbreak v_3v_4\} \in \cB_N$ and $B_2 = \{v_2v_4, v_4v_1, v_1v_3\} \in \cB_N$.
By the symmetric exchange property, there exists an edge $e \in B_2$ such that $B_1-v_1v_2+e \in \cB_N$ and $B_2-e+v_1v_2 \in \cB_N$.
As $B_1-v_1v_2+v_2v_4=\{v_2v_3, v_2v_4, v_3v_4\} \not \in \cB_M$, $B_2-v_1v_3+v_1v_2 = \{v_1v_2, v_1v_4, v_2v_4\} \not \in \cB_M$ and $\cB_N \subseteq \cB_M$ hold, we get $e \ne v_2v_4$ and $e\ne v_1v_3$, hence $e = v_4v_1$. Therefore $B_1-v_1v_2+v_1v_4 = \{v_2v_3, v_3v_4, v_4v_1\} \in \cB_N$ and $B_2-v_4v_1+v_1v_3 = \{v_3v_1, v_1v_2, v_2v_4\} \in \cB_N$.
Using the same argument for these two bases, and then once again for the two bases thus obtained, we get that
\[\{v_iv_{i+1}, v_{i+1}v_{i+2}, v_{i+2}v_{i+3}\}, \{v_{i+1}v_{i+3}, v_{i+3}v_i, v_iv_{i+2}\} \in \cB_N\]
holds for $i=1,2,3,4$ (where indices are meant in a cyclic order). 

Applying the symmetric exchange property to $B_3 = \{v_iv_{i+1}, v_{i+1}v_{i+2}, v_{i+2}v_{i+3}\} \in \cB_N$, $B_4 = \{v_{i+2}v_i, v_iv_{i+1}, v_{i+1}v_{i+3}\} \in \cB_N$ and $v_{i+2}v_{i+3} \in B_3-B_4$, we get that there exists an $e \in B_4-B_3$ such that $B_3-v_{i+2}v_{i+3}+e \in \cB_N$ and $B_4-e+v_{i+2}v_{i+3}  \in \cB_N$.
As $B_3-v_{i+2}v_{i+3}+v_iv_{i+2}=\{v_iv_{i+1},v_iv_{i+2},v_{i+1}v_{i+2}\}\not \in \cB_N$, we get $e \ne v_iv_{i+2}$, hence $e=v_{i+1}v_{i+3}$. Therefore \[\{v_iv_{i+1}, v_{i+1}v_{i+2}, v_{i+1}v_{i+3}\}, \{v_{i+1}v_i, v_iv_{i+2}, v_{i+2}v_{i+3}\} \in \cB_N\]
for $i=1,2,3,4$.

It is not difficult to check that the we listed all 16 bases of $M$, therefore $\cB_N = \cB_M$ and $N=M$.
\end{rem}

\section*{Acknowledgements}

Krist\'of B\'erczi was supported by the J\'anos Bolyai Research Fellowship of the Hungarian Academy of Sciences and by the ÚNKP-19-4 New National Excellence Program of the Ministry for Innovation and Technology. Tam\'as Schwarcz was supported by the European Union, co-financed by the European Social Fund (EFOP-3.6.3-VEKOP-16-2017-00002). Yutaro Yamaguchi was supported by Overseas Research Program in Graduate School of Information Science and Technology, Osaka University. Projects no. NKFI-128673 and no. ED\_18-1-2019-0030 (Application-specific highly reliable IT solutions) have been implemented with the support provided from the National Research, Development and Innovation Fund of Hungary, financed under the FK\_18 and the Thematic Excellence Programme funding schemes, respectively. This  work  was  supported  by  the  Research  Institute for  Mathematical  Sciences,  an  International Joint Usage/Research Center located in Kyoto University.

\bibliographystyle{abbrv}
\bibliography{matroids}

\begin{thebibliography}{10}

\bibitem{aharoni2006intersection}
R.~Aharoni and E.~Berger.
\newblock The intersection of a matroid and a simplicial complex.
\newblock {\em Transactions of the American Mathematical Society},
  358(11):4895--4917, 2006.

\bibitem{BS2019}
K.~{B{\'e}rczi} and T.~{Schwarcz}.
\newblock {Complexity of packing common bases in matroids}.
\newblock {\em arXiv preprint arXiv:1903.03579}, March 2019.

\bibitem{bialostocki2001either}
A.~Bialostocki, P.~Dierker, and W.~Voxman.
\newblock Either a graph or its complement is connected: A continuing saga.
\newblock {\em Mathematics Magazine}, 2001.

\bibitem{blackburn1973catalogue}
J.~E. Blackburn, H.~H. Crapo, and D.~A. Higgs.
\newblock A catalogue of combinatorial geometries.
\newblock {\em Mathematics of Computation}, 27(121):155--166, 1973.

\bibitem{crapo1968combinatorial}
H.~H. Crapo and G.-C. Rota.
\newblock {\em Combinatorial Geometries}.
\newblock MIT press Cambridge, Mass., 1968.

\bibitem{crapo1970foundations}
H.~H. Crapo and G.-C. Rota.
\newblock {\em On the Foundations of Combinatorial Theory: Combinatorial
  Geometries}.
\newblock MIT press Cambridge, Mass., 1970.

\bibitem{davies1976disjoint}
J.~Davies and C.~McDiarmid.
\newblock Disjoint common transversals and exchange structures.
\newblock {\em Journal of the London Mathematical Society}, 2(1):55--62, 1976.

\bibitem{edmonds1970submodular}
J.~Edmonds.
\newblock Submodular functions, matroids, and certain polyhedra.
\newblock {\em Combinatorial structures and their applications}, pages 69--87,
  1970.

\bibitem{edmonds1973edge}
J.~Edmonds.
\newblock Edge-disjoint branchings.
\newblock {\em Combinatorial Algorithms}, 1973.

\bibitem{edmonds1965transversals}
J.~Edmonds and D.~R. Fulkerson.
\newblock Transversals and matroid partition.
\newblock {\em Journal of Research of the National Bureau of Standards (B)},
  69:147--153, 1965.

\bibitem{frank2011connections}
A.~Frank.
\newblock {\em Connections in Combinatorial Optimization}, volume~38.
\newblock OUP Oxford, 2011.

\bibitem{frobenius1917zerlegbare}
G.~Frobenius.
\newblock {\em {\"U}ber zerlegbare Determinanten}.
\newblock Reimer, 1917.

\bibitem{galvin1995list}
F.~Galvin.
\newblock The list chromatic index of a bipartite multigraph.
\newblock {\em Journal of Combinatorial Theory, Series B}, 63(1):153--158,
  1995.

\bibitem{hall1935representatives}
P.~Hall.
\newblock On representatives of subsets.
\newblock {\em Journal of the London Mathematical Society}, 1(1):26--30, 1935.

\bibitem{hartmanis1959lattice}
J.~Hartmanis.
\newblock Lattice theory of generalized partitions.
\newblock {\em Canadian Journal of Mathematics}, 11:97--106, 1959.

\bibitem{huang1994relations}
R.~Huang and G.-C. Rota.
\newblock On the relations of various conjectures on latin squares and
  straightening coefficients.
\newblock {\em Discrete Mathematics}, 128(1-3):225--236, 1994.

\bibitem{ingleton1973gammoids}
A.~Ingleton and M.~Piff.
\newblock Gammoids and transversal matroids.
\newblock {\em Journal of Combinatorial Theory, Series B}, 15(1):51--68, 1973.

\bibitem{jensen2011graph}
T.~R. Jensen and B.~Toft.
\newblock {\em Graph Coloring Problems}, volume~39.
\newblock John Wiley \& Sons, 2011.

\bibitem{keijsper1998packing}
J.~Keijsper and A.~Schrijver.
\newblock On packing connectors.
\newblock {\em Journal of Combinatorial Theory, Series B}, 73(2):184--188,
  1998.

\bibitem{egresopen}
T.~Kir\'aly.
\newblock {EGRES Open: Research forum of the Egerv\'ary Research Group}, 2013.

\bibitem{kiraly2013open}
T.~Kir{\'a}ly.
\newblock Open questions on matroids and list colouring.
\newblock In {\em Midsummer Combinatorial Workshop 2013}, pages 36--38, 2013.

\bibitem{kiraly2010list}
T.~Kir{\'a}ly and J.~Pap.
\newblock On the list colouring of two matroids.
\newblock Technical Report 2010-01, Egerv\'ary Research Group, Budapest, 2010.
\newblock {\tt www.cs.elte.hu/egres}.

\bibitem{konig1916graphen}
D.~K{\"o}nig.
\newblock {\"U}ber {G}raphen und ihre {A}nwendung auf {D}eterminantentheorie
  und {M}engenlehre.
\newblock {\em Mathematische Annalen}, 77(4):453--465, 1916.

\bibitem{kotlar2005partitioning}
D.~Kotlar and R.~Ziv.
\newblock On partitioning two matroids into common independent subsets.
\newblock {\em Discrete Mathematics}, 300(1-3):239--244, 2005.

\bibitem{lason2015list}
M.~Laso{\'n}.
\newblock List coloring of matroids and base exchange properties.
\newblock {\em European Journal of Combinatorics}, 49:265--268, 2015.

\bibitem{lovasz1970generalization}
L.~Lov{\'a}sz.
\newblock A generalization of {K}{\"o}nig's theorem.
\newblock {\em Acta Mathematica Hungarica}, 21(3-4):443--446, 1970.

\bibitem{lucas1974properties}
D.~Lucas.
\newblock Properties of rank preserving weak maps.
\newblock {\em Bulletin of the American Mathematical Society}, 80(1):127--131,
  1974.

\bibitem{lucas1975weak}
D.~Lucas.
\newblock Weak maps of combinatorial geometries.
\newblock {\em Transactions of the American Mathematical Society},
  206:247--279, 1975.

\bibitem{mayhew2011asymptotic}
D.~Mayhew, M.~Newman, D.~Welsh, and G.~Whittle.
\newblock On the asymptotic proportion of connected matroids.
\newblock {\em European Journal of Combinatorics}, 32(6):882--890, 2011.

\bibitem{nash1964decomposition}
C.~S.~J. Nash-Williams.
\newblock Decomposition of finite graphs into forests.
\newblock {\em Journal of the London Mathematical Society}, 1(1):12--12, 1964.

\bibitem{nishimura2009lost}
H.~Nishimura and S.~Kuroda.
\newblock {\em A Lost Mathematician, Takeo Nakasawa: The Forgotten Father of
  Matroid Theory}.
\newblock Springer Science \& Business Media, 2009.

\bibitem{pendavingh2015number}
R.~Pendavingh and J.~van~der Pol.
\newblock On the number of matroids compared to the number of sparse paving
  matroids.
\newblock {\em The Electronic Journal of Combinatorics}, 22(2):P2--51, 2015.

\bibitem{seymour1998note}
P.~D. Seymour.
\newblock A note on list arboricity.
\newblock {\em Journal of Combinatorial Theory, Series B}, 72:150--151, 1998.

\bibitem{takazawa2019generalized}
K.~Takazawa and Y.~Yokoi.
\newblock A generalized-polymatroid approach to disjoint common independent
  sets in two matroids.
\newblock {\em Discrete Mathematics}, 342(7):2002--2011, 2019.

\bibitem{vizing1965chromatic}
V.~G. Vizing.
\newblock The chromatic class of a multigraph.
\newblock {\em Cybernetics and Systems Analysis}, 1(3):32--41, 1965.

\bibitem{welsh2010matroid}
D.~J. Welsh.
\newblock {\em Matroid theory}.
\newblock Courier Corporation, 2010.

\bibitem{whitney1992abstract}
H.~Whitney.
\newblock On the abstract properties of linear dependence.
\newblock In {\em Hassler Whitney Collected Papers}, pages 147--171. Springer,
  1992.

\bibitem{woodall1978menger}
D.~Woodall.
\newblock Menger and {K}{\H{o}}nig systems.
\newblock In {\em Theory and Applications of Graphs}, pages 620--635. Springer,
  1978.

\end{thebibliography}

\end{document}